\documentclass[a4paper,USenglish,cleveref,numberwithinsect]{lipics-v2019}

    \usepackage{enumitem}
    \usepackage{microtype}


    \bibliographystyle{plainurl}

    \title{Preprocessing to Reduce the Search Space: Antler Structures for Feedback Vertex Set}

    \titlerunning{Preprocessing to Reduce the Search Space: Antler Structures for FVS}

    \author{Huib Donkers}{Eindhoven University of Technology, The Netherlands}{h.t.donkers@tue.nl}{0000-0002-2767-8140}{}

    \author{Bart M.\,P. Jansen}{Eindhoven University of Technology, The Netherlands}{b.m.p.jansen@tue.nl}{0000-0001-8204-1268}{This project has received funding from the European Research Council (ERC) under the European Union's Horizon 2020 research and innovation programme (grant agreement No 803421, ReduceSearch). \\ \includegraphics[height=2cm]{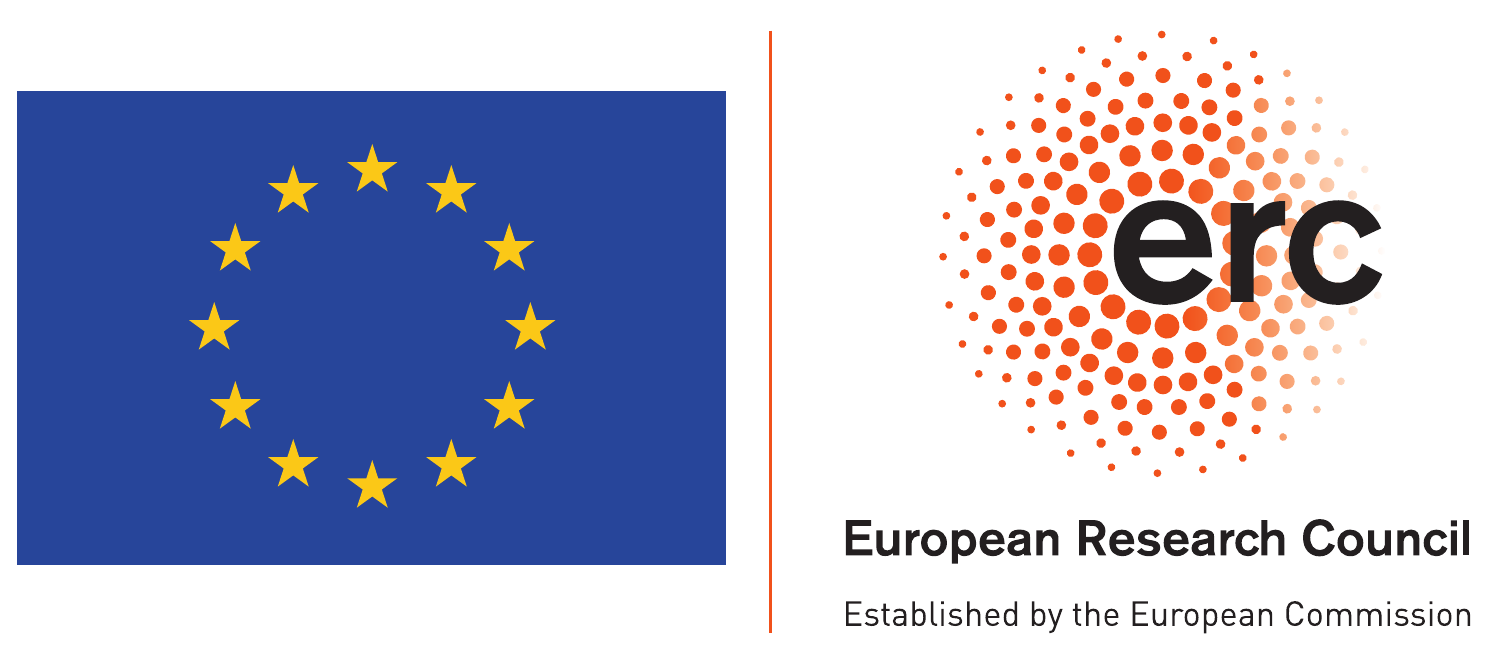}}

    \authorrunning{H.\,T. Donkers and B.\,M.\,P. Jansen}

    \Copyright{Huib Donkers and Bart M.\,P. Jansen}

		\ccsdesc[100]{Theory of computation~Graph algorithms analysis}
		\ccsdesc[100]{Theory of computation~Parameterized complexity and exact algorithms}

    \keywords{kernelization, preprocessing, feedback vertex set, graph decomposition}

    \category{}

    \relatedversion{An extended abstract of this work appeared in the Proceedings of the 47th Workshop on Graph-Theoretical Concepts in Computer Science, WG 2021}

    \supplement{}

    \funding{}

    \acknowledgements{In memory of Rolf Niedermeier and his love for the art of problem parameterization.}

    \EventEditors{John Q. Open and Joan R. Access}
    \EventNoEds{2}
    \EventLongTitle{42nd Conference on Very Important Topics (CVIT 2016)}
    \EventShortTitle{CVIT 2016}
    \EventAcronym{CVIT}
    \EventYear{2016}
    \EventDate{December 24--27, 2016}
    \EventLocation{Little Whinging, United Kingdom}
    \EventLogo{}
    \SeriesVolume{42}
    \ArticleNo{23}
    \nolinenumbers 
    \hideLIPIcs  

    
    \theoremstyle{plain}
    
    \newtheorem{observation}[theorem]{Observation}
    
    \newtheorem{reduction}{Reduction Rule}
    
    \crefname{operation}{Operation}{Operations}
    \crefname{condition}{Condition}{Conditions}
    \crefname{property}{Property}{Properties}
    \crefname{reduction}{Reduction Rule}{Reduction Rules}
    \crefname{claim}{Claim}{Claims}

\crefalias{step}{enumi}
\crefname{step}{Step}{Steps}

\makeatletter
\let\oldclaimproof\claimproof
\def\claimproof{\oldclaimproof\renewcommand\qedsymbol{\textcolor{lipicsGray}{\ensuremath{\vartriangleleft}}}}
\makeatother

%
%


\usepackage{algorithm2e}


\newcommand{\inv}[1]{\ensuremath{#1^{-1}}}
\DeclareMathOperator{\degree}{deg} 
\DeclareMathOperator{\vc}{\textsc{vc}}
\DeclareMathOperator{\fvs}{\textsc{fvs}} 
\DeclareMathOperator{\fvsLP}{\mathsf{fvs_{LP}}}

\newcommand{\cF}{\ensuremath{\mathsf{\dot{F}}}\xspace}
\newcommand{\cC}{\ensuremath{\mathsf{\dot{C}}}\xspace}
\newcommand{\cR}{\ensuremath{\mathsf{\dot{R}}}\xspace}

\newcommand{\Oh}{\ensuremath{\mathcal{O}}\xspace}

\newcommand{\scOCT}{\textsc{Odd Cycle Transversal}\xspace}
\newcommand{\scDFVS}{\textsc{Directed Feedback Vertex Set}\xspace}
\newcommand{\cnfsat}{\textsc{CNF-SAT}\xspace}
\newcommand{\scVC}{\textsc{Vertex Cover}\xspace}
\newcommand{\scFVS}{\textsc{Feedback Vertex Set}\xspace}

\newcommand{\crown}{\ensuremath{\mathsf{crown}}\xspace}
\newcommand{\head}{\ensuremath{\mathsf{head}}\xspace}

\newcommand{\antler}{\ensuremath{\mathsf{antler}}\xspace}

\newcommand{\setN}{\ensuremath{\mathbb{N}}\xspace}

\newcommand{\PNEQNP}{\ensuremath{\mathsf{P}}~$\neq$~\ensuremath{\mathsf{NP}}\xspace}
\newcommand{\NP}{\ensuremath{\mathsf{NP}}\xspace}
\newcommand{\Wone}{\ensuremath{\mathsf{W[1]}}}
\newcommand{\NPhard}{\NP-hard\xspace}
\newcommand{\Wonehard}{\Wone-hard\xspace}
\newcommand{\NPcomplete}{\NP-complete\xspace}

\newcommand{\scBWoneAntlerDetection}{\textsc{Bounded-Width 1-Antler Detection}\xspace}
\newcommand{\scMultColoredClique}{\textsc{Multicolored Clique}\xspace}

 \newcommand{\defparproblem}[4]{
 \vspace{1mm}
 \noindent\fbox{
 \begin{minipage}{0.96\textwidth}
 \textsc{#1} \\
 {\bf{Input:}} #2 \\
 {\bf{Parameter:}} #3 \\
 {\bf{Question:}} #4
 \end{minipage}
 }
 \vspace{1mm}
 }

\begin{document}

\maketitle

\begin{abstract}
The goal of this paper is to open up a new research direction aimed at understanding the power of preprocessing in speeding up algorithms that solve NP-hard problems exactly. We explore this direction for the classic \textsc{Feedback Vertex Set} problem on undirected graphs, leading to a new type of graph structure called \emph{antler decomposition}, which identifies vertices that belong to an optimal solution. It is an analogue of the celebrated \emph{crown decomposition} which has been used for \textsc{Vertex Cover}. We develop the graph structure theory around such decompositions and develop fixed-parameter tractable algorithms to find them, parameterized by the number of vertices for which they witness presence in an optimal solution. This reduces the search space of fixed-parameter tractable algorithms parameterized by the solution size that solve \textsc{Feedback Vertex Set}.
\end{abstract}

\clearpage

\section{Introduction} \label{antler:sec:intro}
The goal of this paper is to open up a new research direction aimed at understanding the power of preprocessing in speeding up algorithms that solve NP-hard problems exactly~\cite{Fellows06,GuoN07a}. In a nutshell, this new direction can be summarized as: how can an algorithm identify part of an optimal solution in an efficient preprocessing phase? We explore this direction for the classic~\cite{Karp72} \textsc{Feedback Vertex Set} problem on undirected graphs, leading to a new graph structure called \emph{antler} which reveals vertices that belong to an optimal feedback vertex set.

We start by motivating the need for a new direction in the theoretical analysis of preprocessing. The use of preprocessing, often via the repeated application of reduction rules, has long been known~\cite{AchterbergBGRW20,Achterberg2013,Quine52} to speed up the solution of algorithmic tasks in practice. The introduction of the framework of parameterized complexity~\cite{DowneyF99} in the 1990s made it possible to also analyze the power of preprocessing \emph{theoretically}, through the notion of kernelization. It applies to \emph{parameterized decision problems}~$\Pi \subseteq \Sigma^* \times \mathbb{N}$, in which every instance~$x \in \Sigma^*$ has an associated integer parameter~$k$ which captures one dimension of its complexity. For \textsc{Feedback Vertex Set}, typical choices for the parameter include the size of the desired solution or structural measures of the complexity of the input graph. A kernelization for a parameterized problem~$\Pi$ is then a polynomial-time algorithm that reduces any instance with parameter value~$k$ to an equivalent instance, of the same problem, whose total size is bounded by~$f(k)$ for some computable function~$f$ of the parameter alone. The function~$f$ is the \emph{size} of the kernelization.

A substantial theoretical framework has been built around the definition of kernelization~\cite{CyganFKLMPPS15,DowneyF13,FlumG06,FominLSZ19,GuoN07a}. It includes deep techniques for obtaining kernelization algorithms~\cite{BodlaenderFLPST16,FominLMS12,KratschW20,PilipczukPSL18}, as well as tools for ruling out the existence of small kernelizations~\cite{BodlaenderJK14,DellM14,Drucker15,FortnowS11,HermelinKSWW15} under complexity-theoretic hypotheses. This body of work gives a good theoretical understanding of polynomial-time data compression for NP-hard problems. 

However, we argue that these results on kernelization \emph{do not} explain the often exponential speed-ups (e.g.~\cite{AchterbergBGRW20},~\cite[Table 6]{AkibaI16}) caused by applying effective preprocessing steps to non-trivial algorithms. Why not? A kernelization algorithm guarantees that the input \emph{size} is reduced to a function of the parameter~$k$; but the running time of modern parameterized algorithms for NP-hard problems is not exponential in the total input size. Instead, fixed-parameter tractable (FPT) algorithms have a running time that scales polynomially with the input size, and which only depends exponentially on a problem parameter such as the solution size or treewidth. Hence an exponential speed-up of such algorithms cannot be explained by merely a decrease in input size, but only by a decrease in the \emph{parameter}! 

We therefore propose the following novel research direction: to investigate how preprocessing algorithms can decrease the parameter value (and hence search space) of FPT algorithms, in a theoretically sound way. It is nontrivial to phrase meaningful formal questions in this direction. To illustrate this difficulty, note that strengthening the definition of kernelization to ``a preprocessing algorithm that is guaranteed to always output an equivalent instance of the same problem with a strictly smaller parameter'' is useless. Under minor technical assumptions, such an algorithm would allow the problem to be solved in polynomial time by repeatedly reducing the parameter, and solving the problem using an FPT or XP algorithm once the parameter value becomes constant. Hence NP-hard problems do not admit such parameter-decreasing algorithms. To formalize a meaningful line of inquiry, we take our inspiration from the \textsc{Vertex Cover} problem, the fruit fly of parameterized algorithms.

A rich body of theoretical and applied algorithmic research has been devoted to the exact solution of the \textsc{Vertex Cover} problem~\cite{AkibaI16,DzulfikarFH19,HespeLSS19,Hespe0S19}. A standard 2-way branching algorithm can test whether a graph~$G$ has a vertex cover of size~$k$ in time~$\Oh(2^k (n+m))$, which can be improved by more sophisticated techniques~\cite{ChenKX10}. The running time of the algorithm scales linearly with the input size, and exponentially with the size~$k$ of the desired solution. This running time suggests that to speed up the algorithm by a factor~$1000$, one either has to decrease the input size by a factor~$1000$, or decrease~$k$ by~$\log_2 (1000) \approx 10$. 

It turns out that state-of-the-art preprocessing strategies for \textsc{Vertex Cover} indeed often \emph{succeed} in decreasing the size of the solution that the follow-up algorithm has to find, by means of crown-reduction~\cite{Abu-KhzamFLS07,ChorFJ04,Fellows03}, the intimately related Nemhauser-Trotter reduction based on the linear-programming relaxation~\cite{NemhauserT75}, and additional ad-hoc rules such as \emph{funnel}, \emph{unconfined}, and~\emph{twin}~\cite[\S 8.3.2]{AkibaI16}. Recall that a vertex cover in a graph~$G$ is a set~$S \subseteq V(G)$ such that each edge has at least one endpoint in~$S$. Observe that if~$H \subseteq V(G)$ is a set of vertices with the property that there exists a minimum vertex cover of~$G$ containing all of~$H$, then~$G$ has a vertex cover of size~$k$ if and only if~$G - H$ has a vertex cover of size~$k - |H|$. Therefore, if a preprocessing algorithm can identify a set of vertices~$H$ which are guaranteed to belong to an optimal solution, then it can effectively reduce the parameter of the problem by restricting to a search for a solution of size~$k - |H|$ in~$G-H$. 

A \emph{crown decomposition} (cf.~\cite{Abu-KhzamCFLSS04,ChorFJ04,Fellows03},~\cite[\S 2.3]{CyganFKLMPPS15},~\cite[\S 4]{FominLSZ19}) of a graph~$G$ serves exactly this purpose. It consists of two disjoint vertex sets~$(\head,\crown)$, such that~$\crown$ is a non-empty independent set whose neighborhood is contained in~$\head$, and such that the graph~$G[\head \cup \crown]$ has a matching~$M$ of size~$|\head|$. As~$\crown$ is an independent set, the matching~$M$ assigns to each vertex of~$\head$ a private neighbor in~$\crown$. It certifies that any vertex cover in~$G$ contains at least~$|\head|$ vertices from~$\head \cup \crown$, and as~$\crown$ is an independent set with~$N_G(\crown) \subseteq \head$, a simple exchange argument shows there is indeed an optimal vertex cover in~$G$ containing all of~$\head$ and none of~$\crown$. Since there is a polynomial-time algorithm to find a crown decomposition if one exists~\cite[Thm.~11--12]{Abu-KhzamFLS07}, this yields the following preprocessing guarantee for \textsc{Vertex Cover}: if the input instance~$(G,k)$ has a crown decomposition~$(\head,\crown)$, then a polynomial-time algorithm can reduce the problem to an equivalent one with parameter at most~$k - |\head|$, thereby giving a formal guarantee on reduction in the parameter based on the structure of the input.\footnote{The effect of the crown reduction rule can also be theoretically explained by the fact that interleaving basic 2-way branching with exhaustive crown reduction yields an algorithm whose running time is only exponential in the \emph{gap} between the size of a minimum vertex cover and the cost of an optimal solution to its linear-programming relaxation~\cite{LokshtanovNRRS14}. However, this type of result cannot be generalized to \textsc{Feedback Vertex Set} since it is already NP-complete to determine whether there is a feedback vertex set whose size matches the cost of the linear-programming relaxation (\cref{cor:lp-relax}).}

As the first step of our proposed research program into parameter reduction (and thereby, search space reduction) by a preprocessing phase, we present a graph decomposition for \textsc{Feedback Vertex Set} which can identify vertices~$S$ that belong to an optimal solution; and which therefore facilitate a reduction from finding a solution of size~$k$ in graph~$G$, to finding a solution of size~$k - |S|$ in~$G - S$. While there has been a significant amount of work on kernelization for \textsc{Feedback Vertex Set}~\cite{BodlaenderD10,BurrageEFLMR06,Iwata17,JansenRV14,Thomasse10}, the corresponding preprocessing algorithms do not succeed in finding vertices that belong to an optimal solution, other than those for which there is a self-loop or those which form the center a flower (consisting of~$k+1$ otherwise vertex-disjoint cycles~\cite{BodlaenderD10,BurrageEFLMR06,Thomasse10}, or a technical relaxation of this notion~\cite{Iwata17}). In particular, apart from the trivial self-loop rule, earlier preprocessing algorithms can only conclude a vertex~$v$ belongs to all optimal solutions (of a size~$k$ which must be given in advance) if they find a suitable packing of cycles witnessing that solutions without~$v$ must have size larger than~$k$. In contrast, our argumentation will be based on \emph{local} exchange arguments, which can be applied independently of the global solution size~$k$.

We therefore introduce a new graph decomposition for preprocessing \scFVS. To motivate it, we distill the essential features of a crown decomposition. Effectively, a crown decomposition of~$G$ certifies that~$G$ has a minimum vertex cover containing all of~$\head$, because (i)~any vertex cover has to pick at least~$|\head|$ vertices from~$\head \cup \crown$, as the matching~$M$ certifies that~$\vc(G[\head \cup \crown]) \geq |\head|$, while (ii)~any minimum vertex cover~$S'$ in~$G - (\head \cup \crown)$ yields a minimum vertex cover~$S' \cup \head$ in~$G$, since~$N_G(\crown) \subseteq \head$ and~$\crown$ is an independent set. To obtain similar guarantees for \scFVS, we need a decomposition to supply disjoint vertex sets~$(\head,\antler)$ such that:
\begin{enumerate}
 \item \label[condition]{condition:small-head} any minimum feedback vertex set in~$G$ contains at least~$|\head|$ vertices from~$\head \cup \antler$, and
 \item \label[condition]{condition:remaining-sol} any minimum feedback vertex set~$S'$ in~$G - (\head \cup \antler)$ yields a minimum feedback vertex set~$S' \cup \head$ in~$G$.
\end{enumerate}

To achieve \cref{condition:small-head}, it suffices for~$G[\head \cup \antler]$ to contain a set of~$|\head|$ vertex-disjoint cycles; to achieve \cref{condition:remaining-sol}, it suffices for~$G[\antler]$ to be acyclic, with each tree~$T$ of the forest~$G[\antler]$ connected to the remainder~$V(G) \setminus (\head \cup \antler)$ by at most one edge (implying that all cycles through~$\antler$ intersect~$\head$). We call such a decomposition a 1-antler. Here \emph{antler} refers to the shape of the forest~$G[\antler]$, which no longer consists of isolated spikes of a crown (see \cref{fig:CrownAntler}). The prefix~$1$ indicates it is the simplest type of antler; we present a  generalization later.
An antler is \emph{non-empty} if~$\head \cup \antler \neq \emptyset$, and the \emph{width} of the antler is defined to be~$|\head|$.

\begin{figure}
	\centering
	\includegraphics{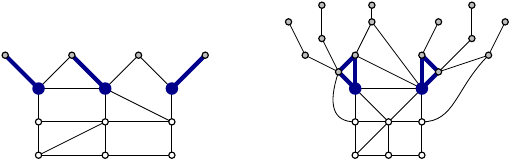}
	\caption{Graph structures showing there is an optimal solution containing all blue vertices and no gray vertices, certified by the blue subgraph. Left: Crown decomposition for \scVC. Right: Antler for \scFVS. For legibility, the number of edges in the drawing has been restricted. It therefore has vertices of degree at most~$2$, which makes the graph it reducible by standard reduction rules; but adding all possible edges between gray and blue vertices leads to a structure of minimum degree at least three which is still a 1-antler.}
	\label{fig:CrownAntler}
\end{figure}

Unfortunately, assuming \PNEQNP there is no \emph{polynomial-time} algorithm that always outputs a non-empty 1-antler if one exists. We prove this in \cref{sec:np-hard}. However, for the purpose of making a preprocessing algorithm that reduces the search space, we can allow FPT time in a parameter such as~$|\head|$ to find a decomposition. Each fixed choice of~$|\head|$ would then correspond to a reduction rule which identifies a small ($|\head|$-sized) part of an optimal feedback vertex set, for which there is a simple certificate for it being part of an optimal solution. Such a reduction rule can then be iterated in the preprocessing phase, thereby potentially decreasing the target solution size (and search space) by an arbitrarily large amount. Hence we consider the parameterized complexity of testing whether a graph admits a non-empty 1-antler with~$|\head| = k$, parameterized by~$k$. On the one hand, we show this problem to be W[1]-hard in \cref{sec:w1-hard}. This hardness turns out to be a technicality based on the forced bound on~$|\head|$, though. We provide the following FPT algorithm which yields a search-space reducing preprocessing step.

\begin{theorem} \label{thm:1:antler}
There is an algorithm that runs in~$2^{\Oh(k^5)} \cdot n^{\Oh(1)}$ time that, given a multigraph~$G$ on~$n$ vertices and integer~$k$, either correctly determines that~$G$ does not admit a non-empty 1-antler of width~$k$, or outputs a set~$S$ of at least~$k$ vertices such that there exists an optimal feedback vertex set in~$G$ containing all vertices of~$S$.
\end{theorem}

Hence if the input graph admits a non-empty 1-antler of width~$k$, the algorithm is guaranteed to find at least~$k$ vertices that belong to an optimal feedback vertex set, thereby reducing the search space.

Based on this positive result, we go further and generalize our approach beyond 1-antlers. For a 1-antler~$(\head, \antler)$ in~$G$, the set of~$|\head|$ vertex-disjoint cycles in~$G[\head \cup \antler]$ forms a very simple certificate that any feedback vertex set of~$G$ contains at least~$|\head|$ vertices from~$\head \cup \antler$. We can generalize our approach to identify part of an optimal solution, by allowing more complex certificates of optimality. The following interpretation of a 1-antler is the basis of the generalization: for a 1-antler~$(\head, \antler)$ in~$G$, there is a subgraph~$G'$ of~$G[\head \cup \antler]$ (formed by the~$|\head|$ vertex-disjoint cycles) such that~$V(G') \supseteq \head$ and~$\head$ is an optimal feedback vertex set of~$G'$; and furthermore this subgraph~$G'$ is simple because all its connected components, being cycles, have a feedback vertex set of size~$1$. For an arbitrary integer~$z$, we therefore define a $z$-antler in a multigraph~$G$ as a pair of disjoint vertex sets~$(\head, \antler)$ such that:
\begin{enumerate}
 \item the graph~$G[\head \cup \antler]$ has a subgraph~$G'$ for which~$\head$ is an optimal feedback vertex set and with each component of~$G'$ having a feedback vertex set of size at most~$z$ (this certifies that any minimum feedback vertex set in~$G$ contains at least~$|\head|$ vertices from~$\head \cup \antler$); and
 \item the graph~$G[\antler]$ is acyclic, with each tree~$T$ of the forest~$G[\antler]$ connected to the remainder~$V(G) \setminus (\head \cup \antler)$ by at most one edge. This second condition is not changed compared to a 1-antler.
\end{enumerate}
\begin{figure}[tb]
 \centering
 \includegraphics{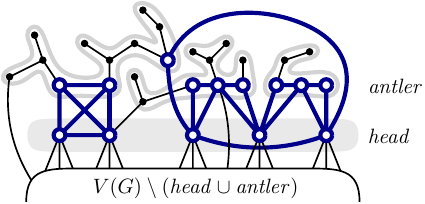}
 \caption{An example of a $3$-antler. The subgraph~$G'$, marked in blue, has a feedback vertex number of~$5$ showing that any feedback vertex set of~$G$ contains at least~$|\head| = 5$ vertices from~$\head \cup \antler$. Each connected component of~$G'$ has a feedback vertex set of size at most~$3$. The subgraph~$G[\antler]$ is acyclic and each of its connected components has at most one edge to~$V(G) \setminus (\head \cup \antler)$.}
 \label{fig:antler:3-antler}
\end{figure}
See \cref{fig:antler:3-antler} for an example of a $3$-antler of width~$5$. Our main result is the following.

\begin{theorem} \label{thm:z:antler}
There is an algorithm that runs in~$2^{\Oh(k^5 z^2)} \cdot n^{\Oh(z)}$ time that, given an undirected multigraph~$G$ on~$n$ vertices and integers~$k \geq z \geq 0$, either correctly determines that~$G$ does not admit a non-empty $z$-antler of width~$k$, or outputs a set~$S$ of at least~$k$ vertices such that there exists an optimal feedback vertex set in~$G$ containing all vertices of~$S$.
\end{theorem}

In fact, we prove a slightly stronger statement. If a graph~$G$ can be reduced to a graph~$G'$ by iteratively removing~$z$-antlers, each of width at most~$k$, and the sum of the widths of this sequence of antlers is~$t$, then we can find in time~$f(k,z) \cdot n^{\Oh(z)}$ a subset of at least~$t$ vertices of~$G$ that belong to an optimal feedback vertex set. This implies that if a complete solution to \textsc{Feedback Vertex Set} can be assembled by iteratively combining $\Oh(1)$-antlers of width at most~$k$, then the entire solution can be found in time~$f'(k) \cdot n^{\Oh(1)}$. Hence our work uncovers a new parameterization in terms of the complexity of the solution structure, rather than its size, in which \textsc{Feedback Vertex Set} is fixed-parameter tractable.

Our algorithmic results are based on a combination of graph reduction and color coding~\cite{AlonYZ95} (more precisely, its derandomization via the notion of universal sets). We use reduction steps inspired by the kernelization algorithms~\cite{BodlaenderD10,Thomasse10} for \textsc{Feedback Vertex Set} to bound the size of~$\antler$ in the size of~$\head$, by analyzing an intermediate structure called \emph{feedback vertex cut}. After such reduction steps, the size of the entire structure we are trying to find can be bounded in terms of the parameter~$k$. We then use color coding~\cite{AlonYZ95} to identify antler structures. A significant amount of effort goes into proving that the reduction steps preserve antler structures and the optimal solution size.

The remainder of the paper is organized as follows. After presenting preliminaries on graphs and sets in \cref{antler:sec:prelims}, we prove the mentioned hardness results in \cref{antler:sec:hardness}. We present structural properties of antlers and how they combine in \cref{sec:def:cuts:antlers}. In \cref{sec:find-fvc} we show how color coding can be used to find a large feedback vertex cut, if one exists. We also prove that, given a large feedback vertex cut, we can shrink it while preserving the antlers in the graph. Our main results are derived in \cref{sec:find-antler}, where we show how color coding can be used to efficiently find antlers when the size of their \antler part is bounded in terms of the size of their \head. We conclude in \cref{sec:conclusion}. 

\section{Preliminaries}\label{antler:sec:prelims} 
\subsection{Graphs and sets}
For a finite set~$A$ we use~$2^A$ to denote the powerset of~$A$. For a function~$f \colon A \to B$, let $\inv{f} \colon B \to 2^A$ denote the \emph{preimage function of $f$}, that is $\inv{f}(a) = \{b \in B \mid f(b) = a\}$.
For any family of sets $X_1,\ldots,X_\ell$ indexed by $\{1,\ldots,\ell\}$ we define the following for all $1 \leq i \leq \ell$: 
\begin{itemize}
\item $X_{<i} := \bigcup_{1 \leq j < i} X_j$, 
\item $X_{>i} := \bigcup_{i < j \leq \ell} X_j$, 
\item $X_{\leq i} := X_i \cup X_{<i}$, and 
\item $X_{\geq i} := X_i \cup X_{>i}$.
\end{itemize}

All graphs considered in this paper are undirected multigraphs, which may have loops. Based on the incidence representation of multigraphs (cf.~\cite[Example 4.9]{FlumG06}) we represent a multigraph $G$ by a vertex set $V(G)$, an edge set $E(G)$, and a function $\iota \colon E(G) \to 2^{V(G)}$ where $\iota(e)$ is the set of one or two vertices incident to $e$ for all $e \in E(G)$. 
In the context of an algorithm with input graph~$G$ we use~$n = |V(G)|$ and~$m = |E(G)|$. 

For a multigraph~$G$ and vertex set~$S \subseteq V(G)$, let~$G[S]$ denote the multigraph induced by~$S$ which consists of the vertex set~$S$, the edge set~$E' = \{e \in E(G) \mid \iota(e) \subseteq S\}$, and the function~$\iota' \colon E' \to 2^S$ defined as~$\iota'(e) = \iota(e)$ for all~$e \in E'$. For a set of vertices and/or edges~$X \subseteq V(G) \cup E(G)$ we define~$G-X$ as the graph obtained from~$G$ after removing all vertices and edges in~$X$, more formally,~$G-X$ is the multigraph consisting of the vertex set~$V(G) \setminus X$, the edge set~$E' = \{e \in E(G) \setminus X \mid \iota(e) \subseteq V(G) \setminus X\}$, and the function~$\iota' \colon E' \to 2^{V(G) \setminus X}$ defined as~$\iota'(e) = \iota(e)$ for all~$e \in E'$. If~$X$ is a singleton set consisting of a single vertex~$v$ or edge~$e$ we may write~$G-v$ or~$G-e$ as shorthand for~$G-\{v\}$ or~$G-\{e\}$.

The open neighborhood of a vertex~$v$ in a multigraph~$G$, denoted~$N_G(v)$, is the set of all vertices~$u \neq v$ for which there is an edge between~$u$ and~$v$, i.e.,~$N_G(v) = \{u \in V(G)\setminus \{v\} \mid \exists e \in E(G) \colon \{u,v\} = \iota(e)\}$. We define~$N_G(S) = \bigcup_{v \in S} N_G(v) \setminus S$ as the open neighborhood of a vertex set~$S \subseteq V(G)$, the closed neighborhood of~$v$ is defined as~$N_G[v] = N_G(v) \cup \{v\}$, and the closed neighborhood of a vertex set~$S \subseteq V(G)$ is defined as~$N_G[S] = N_G(S) \cup S$. For two multigraphs~$G_1, G_2$ on disjoint vertex sets, the \emph{disjoint union} of~$G_1$ and~$G_2$ is the multigraph whose vertex set is~$V(G_1) \cup V(G_2)$ and whose edge set is~$E(G_1) \cup E(G_2)$, with the function~$\iota$ defined in the obvious way.

The \emph{degree}~$\degree_G(v)$ of a vertex~$v$ in a multigraph~$G$ is the number of edge-incidences to~$v$, where a self-loop contributes two edge-incidences. Formally, we define~$\degree_G(v) = |\{e \in E(G) \mid v \in \iota(e)\}| + |\{e \in E(G) \mid \{v\} = \iota(e)\}|$. Note that the \emph{degree-sum theorem},~$\sum_{v \in V(G)} \degree_G(v) = 2|E(G)|$, remains true in multigraphs.
For vertex sets~$A, B \subseteq V(G)$ we define~$E_G(A,B)$ as the set of edges with one endpoint in the vertex set~$A \subseteq V(G)$ and another in the vertex set~$B \subseteq V(G)$, i.e.,~$E_G(A,B) = \{e \in E(G) \mid \iota(e) = \{a,b\}~\text{for some}~a \in A, b \in B\}$. We define~$e_G(A,B) = |E_G(A,B)|$.\index{e@$e_G(..,..)$}

Whenever working with graphs, we may omit the subscript if the graph is clear from the context. To simplify the presentation, in expressions taking one or more vertex sets as parameter such as~$N_G(..)$ and~$E_G(.., ..)$, we sometimes use a subgraph~$H$ of~$G$ as argument as a shorthand for the vertex set~$V(H)$ that is formally needed.

For two multigraphs~$G_1$ and~$G_2$, the graph~$G_1 \cap G_2$ is the multigraph on vertex set~$V' = V(G_1) \cap V(G_2)$, edge set~$E' = E(G_1) \cap E(G_2)$, and function~$\iota' \colon E' \to 2^{V'}$ defined as~$\iota'(e) = \iota(e)$ for all~$e \in E'$.

We assume the multigraphs are stored such that the number of edges between any two vertices can be retrieved and modified in constant time so that ensuring there are at most two edges between any two vertices (and hence~$m \in \Oh(n^2)$) can be done without overhead in the asymptotic runtime of our algorithms.

\subsection{Feedback vertex cuts and antlers}
We now introduce antlers and related structures. A \emph{feedback vertex set} (FVS) in a multigraph~$G$ is a vertex set~$X \subseteq V(G)$ such that~$G-X$ is acyclic (a forest). The feedback vertex number of a graph~$G$, denoted by~$\fvs(G)$, is the minimum size of a FVS in~$G$.
A \emph{feedback vertex cut} (FVC) in a multigraph~$G$ is a pair of disjoint vertex sets~$(C,F)$ such that~$C,F \subseteq V(G)$,~$G[F]$ is a forest, and for each tree~$T$ in~$G[F]$ we have~$e(T,G-(C\cup F)) \leq 1$. The \emph{width} of a FVC~$(C,F)$ is~$|C|$, and~$(C,F)$ is \emph{empty} if~$|C \cup F| = 0$.
\begin{observation} \label{obs:fvc-basics}
 If~$(C,F)$ is a FVC in~$G$, then any cycle in~$G$ containing a vertex from~$F$ also contains a vertex from~$C$. 
\end{observation}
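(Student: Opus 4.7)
The plan is to establish the two assertions in sequence, since the second follows quickly from the first. For the first claim I would argue by contradiction: suppose $K$ is a cycle in $G$ with $V(K) \cap F \neq \emptyset$ and $V(K) \cap C = \emptyset$, and derive a violation of the edge bound $e(T, G-(C \cup F)) \leq 1$ for some tree $T$ of $G[F]$. Writing $R := V(G) \setminus (C \cup F)$, the assumption means $V(K) \subseteq F \cup R$, and the cycle is not entirely contained in $F$ because $G[F]$ is a forest.

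The core step is an analysis of how $K$ traverses the trees of $G[F]$. First I would observe that any edge of $G$ whose two endpoints both lie in $F$ is an edge of $G[F]$ and therefore stays inside a single tree; consequently, if the cycle moves from one tree $T$ to a different tree $T'$, it must pass through a vertex of $C \cup R$, and in our setting that vertex lies in $R$. Pick a tree $T$ with $V(K) \cap V(T) \neq \emptyset$ and consider the maximal subpaths of $K$ lying in $V(T)$. At least one such subpath exists, and each one is a proper subpath of $K$ (otherwise $K \subseteq G[T]$, contradicting that $T$ is a tree). Hence each maximal subpath has two ``exit edges'' leaving $V(T)$, and by the above these exits go to $R$ (since $K$ avoids $C$ and cannot jump directly to another tree of $G[F]$). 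This yields $e(T, R) \geq 2$, contradicting $e(T, R) \leq 1$ from the definition of an FVC.

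For the second claim, any cycle in $G[C \cup F]$ is also a cycle in $G$. If it contains a vertex of $F$, the first claim hands us a vertex of $C$ on it; otherwise it lies entirely in $G[C]$ and trivially meets $C$. Thus $G[C \cup F] - C$ contains no cycle, so $C$ is an FVS in $G[C \cup F]$ and in particular $|C| \geq \fvs(G[C \cup F])$.

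The main obstacle is the bookkeeping in the middle paragraph: one has to rule out that a maximal ``$T$-subpath'' of $K$ could exit $T$ without using an edge to $R$, which is exactly where the observation that $F$-to-$F$ edges stay inside one tree, together with the avoidance of $C$, is essential. Everything else is routine.
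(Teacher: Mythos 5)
Your proof is correct. The paper records this as an Observation without a written proof, and your argument is a clean way to discharge the first assertion: once you note that every $G$-edge with both endpoints in $F$ lies inside a single tree of $G[F]$, any cycle through $F$ that avoids $C$ is forced to have at least two edges leaving some tree $T$ of $G[F]$ into $R = V(G)\setminus(C\cup F)$, contradicting $e(T,R)\le 1$; the handling of the maximal $T$-subpaths (including the degenerate single-vertex subpath, which covers the multigraph 2-cycle case) is sound. One small remark on the second assertion: it does not actually require the first, since $G[C\cup F]-C = G[F]$ is a forest directly by the definition of an FVC, so $C$ is a FVS of $G[C\cup F]$ immediately; your detour via the first claim is harmless but slightly longer than necessary.
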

\begin{observation} \label{obs:subfvc}
 If~$(C,F)$ is a FVC in~$G$, then for any~$X \subseteq V(G)$ we have that~$(C\setminus X, F \setminus X)$ is a FVC in~$G-X$. Additionally, for any~$Y \subseteq E(G)$ we have that~$(C,F)$ is a FVC in~$G-Y$.
\end{observation}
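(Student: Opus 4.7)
The plan is to verify directly that the three defining conditions of a Feedback Vertex Cut continue to hold after the modification, exploiting the fact that deleting vertices or edges can only \emph{relax} the requirements: an induced subgraph of a forest is a forest, and edge counts between vertex sets can only shrink under deletion.

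For the first claim, disjointness of $C \setminus X$ and $F \setminus X$ is immediate from $C \cap F = \emptyset$, and $(G-X)[F \setminus X]$ is an induced subgraph of the forest $G[F]$, hence itself a forest. The only mildly delicate step is the edge-count condition: when $X$ is removed, a single tree $T$ of $G[F]$ can split into several trees $T'_1, \ldots, T'_\ell$ of $(G-X)[F \setminus X]$. I would argue that the edges in $G-X$ from $\bigcup_i V(T'_i)$ to $V(G-X) \setminus ((C \setminus X) \cup (F \setminus X))$ form a subset of the edges in $G$ from $V(T)$ to $V(G) \setminus (C \cup F)$, because the new remainder $V(G) \setminus (X \cup C \cup F)$ is contained in the old one and no new edges were created. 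Since the latter edge set has size at most one by assumption, each individual $T'_i$ inherits the bound $e(T'_i, (G-X) - ((C \setminus X) \cup (F \setminus X))) \leq 1$.

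For the second claim, $C$ and $F$ are unchanged and still disjoint subsets of $V(G - Y) = V(G)$, and $(G-Y)[F]$ is a spanning subgraph of the forest $G[F]$ with some edges possibly removed, so it remains a forest. Each tree $T'$ of $(G-Y)[F]$ has its vertex set contained in the vertex set of some tree $T$ of $G[F]$, and the edges from $V(T')$ to $V(G) \setminus (C \cup F)$ in $G - Y$ form a subset of those in $G$, so the bound of at most one edge transfers.

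There is no substantive obstacle; the argument is essentially bookkeeping on the definition. The step worth writing out explicitly is the tree-splitting case under vertex removal, since it is the only place where the combinatorial structure genuinely changes and one must confirm that each piece individually still satisfies the edge bound.
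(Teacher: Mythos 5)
The paper offers no proof for this statement, treating it as an immediate consequence of the FVC definition (hence the label ``observation''). Your argument correctly verifies the three defining conditions, including the one nontrivial point that a tree of $G[F]$ splitting into several trees after removing $X$ still has each piece satisfying the per-tree edge bound, since the relevant edges form a subset of those from the original tree to the (larger) original remainder; this is exactly the bookkeeping the paper implicitly expects the reader to do.
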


We now present one of the main concepts for this work. An \emph{antler} in a multigraph~$G$ is a FVC~$(C,F)$ in~$G$ such that~$|C| \leq \fvs(G[C \cup F])$. The following lemma shows how detecting antlers can reduce the search space for \textsc{Feedback Vertex Set}. 
\begin{lemma} \label{obs:remove-antler}
 If~$(C,F)$ is an antler in~$G$, then~$\fvs(G) = |C| + \fvs(G-(C \cup F))$.
\end{lemma}
\begin{proof}
We establish the equality by proving separate inequalities. We first claim that for any FVS~$X'$ in the graph~$G - (C \cup F)$, the set~$X' \cup C$ is a FVS in~$G$. To prove this, we argue that~$X' \cup C$ intersects each cycle~$\cal{C}$ in~$G$. If~$\cal{C}$ is disjoint from~$F$, then~$\cal{C}$ is intersected by~$C$ or is a cycle in~$G - (C \cup F)$, implying it is intersected by~$X'$. It remains to consider the case that~$\cal{C}$ intersects~$F$. Since~$(C,F)$ is an antler, it is a FVC, so that~$G[F]$ is a forest by definition. Hence~$\mathcal{C}$ intersects some tree~$T$ of the forest~$G[F]$ but also visits a vertex outside~$F$. The cycle~$\cal{C}$ therefore has to enter~$T$ over some edge and later exit~$T$ over a different edge. Since the definition of FVC ensures there is at most one edge that connects a vertex of~$T$ to a vertex outside~$C \cup F$, the cycle~$\mathcal{C}$ either enters or exits tree~$T$ via a vertex of~$C$ and is therefore intersected by~$X' \cup C$. Hence~$X' \cup C$ is indeed a FVS in~$G$. By selecting a minimum FVS~$X'$ in~$G - (C \cup F)$, we obtain~$\fvs(G) \leq |C| + \fvs(G - (C \cup F))$. 

We now prove the converse. Consider an arbitrary FVS~$X$ in~$G$. Since~$X \cap (C \cup F)$ is a FVS of~$G[C \cup F]$, we have~$|X \cap (C \cup F)| \geq \fvs(G[C \cup F]) \geq |C|$, where the second inequality follows from the definition of antler. Since~$X \setminus (C \cup F)$ is trivially a FVS of~$G - (C \cup F)$, we have~$|X \setminus (C \cup F)| \geq \fvs(G - (C \cup F))$. Hence~$|X| \geq |X \cap (C \cup F)| + |X \setminus (C \cup F)| \geq |C| + \fvs(G - (C \cup F))$.
\end{proof}

For a multigraph~$G$ and vertex set~$C \subseteq V(G)$, a \emph{$C$-certificate} is a subgraph~$H$ of~$G$ such that~$C$ is a minimum FVS in~$H$. We say a $C$-certificate has \emph{order}~$z$ if for each component~$H'$ of~$H$ we have~$\fvs(H') = |C \cap V(H')| \leq z$. For an integer~$z\geq0$, a \emph{$z$-antler} in~$G$ is an antler~$(C,F)$ in~$G$ such that~$G[C \cup F]$ contains a $C$-certificate of order~$z$. Note that a $0$-antler has width~$0$. 
In \cref{fig:antler:3-antler} we depicted a feedback vertex cut~$(\head, \antler)$ of width~$5$ where~$G[\head \cup \antler]$ contains a \head-certificate of order~$3$ (marked in blue), hence~$(\head, \antler)$ is a $3$-antler of width~$5$.

\begin{figure}[t]
 \centering
 \includegraphics{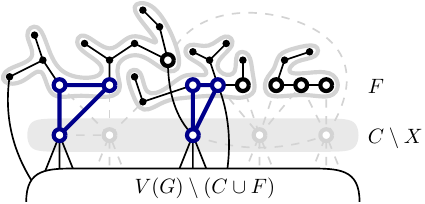}
 \caption{Consider the $3$-antler~$(\head, \antler) = (C,F)$ from \cref{fig:antler:3-antler}. The pair~$(C \setminus X, F)$ remains a $3$-antler after removing a subset~$X \subseteq C$ from~$G$.}
 \label{fig:antler:3-antler-minus-X}
\end{figure}

The following lemma shows a persistence property of antlers when removing some of their vertices. See \cref{fig:antler:3-antler-minus-X} for an illustration.

\begin{lemma} \label{obs:subantler}
 If~$(C,F)$ is a $z$-antler in~$G$ for some~$z\geq0$, then for any~$X \subseteq C$ the pair~$(C\setminus X,F)$ is a $z$-antler in~$G - X$. 
\end{lemma}
\begin{proof}
    Define~$G' := G - X$ and~$C' := C \setminus X$. \cref{obs:subfvc} implies that~$(C', F)$ is a FVC in~$G'$. Next, we establish that~$G'[C' \cup F]$ contains a~$C'$-certificate of order~$z$. Since~$(C,F)$ is a $z$-antler in~$G$, there is a subgraph~$H$ of~$G[C \cup F]$ such that~$C$ is a minimum FVS in~$H$ and each component of~$H$ has feedback vertex number at most~$z$. Now~$H' = H - X$ is a subgraph of~$G'[C' \cup F]$. The subgraph~$H'$ is a $C'$-certificate: the set~$C' = C \setminus X$ is a minimum FVS for~$H'$ since~$C$ is a minimum FVS for~$H$ and~$X \subseteq C$. The order of certificate~$H'$ is not larger than that of~$H$, since each component of~$H'$ is a subgraph of a component of~$H$.

    To prove~$(C',F)$ is a $z$-antler in~$G'$, it therefore remains to prove that~$|C'| \leq \fvs(G'[C' \cup F])$. Since removing a vertex from a graph can decrease its feedback vertex number by at most one, we have~$\fvs(G'[C' \cup F]) = \fvs(G[C \cup F] - X) \geq \fvs(G[C \cup F]) - |X| \geq |C| - |X|$, where the last inequality follows since~$(C,F)$ is an antler. Since~$X \subseteq C$, we have~$|C| - |X| = |C'|$, which combines with the previous inequality to show~$\fvs(G'[C' \cup F']) \geq |C'|$, as desired.
\end{proof}

\section{Hardness results} \label{antler:sec:hardness}

To motivate the use of an FPT algorithm to find antlers, we start by presenting the hardness results mentioned in the introduction. As these results apply to the simplest types of antlers this also forms an introduction to their properties. The hardness results presented in this section apply to the type of antlers as discussed in \cref{antler:sec:intro} consisting of two vertex sets \head and \antler. This type of antler is formally defined in \cref{antler:sec:prelims} as 1-antlers, the simplest type of antler, consisting of the vertex sets~$C$ and~$F$ corresponding to \head and \antler respectively. For convenience we give a self contained definition of a 1-antler below.

\begin{definition}\label{def:1antler}
A \emph{1-antler} in a multigraph~$G$ is a pair of disjoint vertex sets~$(C,F)$ such that:
\begin{enumerate}
	\item $G[F]$ is acyclic,\label[property]{def:1antler:forest}
	\item each tree~$T$ of the forest~$G[F]$ is connected to~$V(G) \setminus (C \cup F)$ by at most one edge, and\label[property]{def:1antler:edgeleaving}
	\item the subgraph~$G[C \cup F]$ contains~$|C|$ vertex-disjoint cycles.\label[property]{def:1antler:cycles}
\end{enumerate}
A 1-antler is called \emph{non-empty} if~$F \cup C \neq \emptyset$.
\end{definition}

Observe that the combination of \cref{def:1antler:forest,def:1antler:edgeleaving} is equivalent to stating that~$(C,F)$ is a FVC, and \cref{def:1antler:cycles} describes the existence of a $C$-certificate of order~$1$ in~$G[C \cup F]$. We will use the following easily verified consequence of this definition.

\begin{observation} \label{obs:1antler:cycleWithF}
If~$(C,F)$ is a 1-antler in an multigraph~$G$, then for each vertex~$c \in C$ the subgraph~$G[\{c\} \cup F]$ contains a cycle.
\end{observation}

\subsection{\NP-hardness of finding 1-antlers} \label{sec:np-hard}

For the hardness results in this section we use a reduction from \cnfsat given in an earlier work by the authors.

\begin{lemma}[{\cite[Lemma 13 for~$H = K_3$]{DonkersJ21}}] \label{lemma:reduction:disjointcycles}
There is a polynomial-time algorithm that, given a CNF formula~$\Phi$, outputs a graph~$G$ and a collection~$\mathcal{H} = \{H_1, \ldots, H_\ell\}$ of vertex-disjoint cycles in~$G$, such that~$\Phi$ is satisfiable if and only if~$G$ has a feedback vertex set of size~$\ell$.
\end{lemma}

We now show that finding non-empty 1-antlers is \NPhard, even when the input graph is simple.

\begin{theorem}
Assuming \PNEQNP, there is no polynomial-time algorithm that, given a simple graph~$G$, outputs a non-empty 1-antler in~$G$ or concludes that no non-empty 1-antler exists.
\end{theorem}
\begin{proof}
We need the following simple claim.

\begin{claim}
If~$G$ contains a packing of~$\ell \geq 1$ vertex-disjoint cycles and a feedback vertex set of size~$\ell$, then~$G$ admits a non-empty 1-antler~$(C,F)$.
\end{claim}
\begin{claimproof}
Let~$C$ be a feedback vertex set in~$G$ of size~$\ell$ and let~$F := V(G) \setminus C$.
\end{claimproof}

Now suppose there is a polynomial-time algorithm to find a non-empty 1-antler decomposition, if one exists. We use it to solve \cnfsat in polynomial time. Given an input formula~$\Phi$ for \cnfsat, use \cref{lemma:reduction:disjointcycles} to produce in polynomial time a graph~$G$ and a packing~$\mathcal{H}$ of~$\ell$ vertex-disjoint cycles in~$G$, such that~$\Phi$ is satisfiable if and only if~$\fvs(G) = \ell$. The following recursive polynomial-time algorithm correctly tests, given a graph~$G$ and a packing~$\mathcal{H}$ of some~$\ell \geq 0$ vertex-disjoint cycles in~$G$, whether~$\fvs(G) = \ell$.

\begin{enumerate}
	\item If~$\ell = 0$, then output \textsc{yes} if and only if~$G$ is acyclic.
	\item If~$\ell > 0$, run the hypothetical algorithm to find a non-empty 1-antler~$(C,F)$. 
	\begin{enumerate}
		\item If a non-empty 1-antler~$(C,F)$ is returned, then let~$\mathcal{H'}$ consist of those cycles in the packing not intersected by~$C$, and let~$\ell' := |\mathcal{H'}|$. Return the result of recursively running the algorithm on~$G' := G - (C \cup F)$ and~$\mathcal{H'}$ to test whether~$G'$ has a feedback vertex set of size~$|\mathcal{H'}|$.
		\item Otherwise, return \textsc{no}.
	\end{enumerate}
\end{enumerate}

The claim shows that the algorithm is correct when it returns \textsc{no}. \Cref{obs:remove-antler} shows that if we recurse, we have~$\fvs(G) = \ell$ if and only if~$\fvs(G - (C \cup H)) = \ell'$; hence the result of the recursion is the correct output. Since the number of vertices in the graph reduces by at least one in each iteration, the overall running time is polynomial assuming the hypothetical algorithm to compute a non-empty 1-antler. Hence using \cref{lemma:reduction:disjointcycles} we can decide \cnfsat in polynomial time.
\end{proof}

Having established the NP-hardness of detecting non-empty 1-antlers, we consider the standard linear-programming relaxation of \textsc{Feedback Vertex Set} on a multigraph~$G$. It is a linear program that has one variable~$x_v$ for each vertex~$v \in V(G)$ with~$0 \leq x_v \leq 1$, along with a constraint for each cycle~$C \subseteq V(G)$ stating that~$\sum _{v \in C} x_c \geq 1$. We show that determining whether the cost of a solution to the linear-programming relaxation of \scFVS on~$G$ is equal to the feedback vertex number of~$G$ is \NPcomplete. 

\begin{corollary} \label{cor:lp-relax}
For a simple graph~$G$, determining whether~$\fvs(G) = \fvsLP(G)$ is \NPcomplete. Here~$\fvsLP(G)$ denotes the minimum cost of a solution to the linear-programming relaxation of \scFVS on~$G$.
\end{corollary}
\begin{proof}
Membership in~\NP is trivial; we prove hardness by giving a polynomial-time reduction from \cnfsat to the problem of testing whether~$\fvs(G) = \fvsLP(G)$. The reduction proceeds as follows. 
Given an input~$\Phi$ for \cnfsat, use \cref{lemma:reduction:disjointcycles} to produce in polynomial time a graph~$G$ and packing~$\mathcal{H}$ of~$\ell$ vertex-disjoint cycles in~$G$, such that~$\Phi$ is satisfiable if and only if~$\fvs(G) = \ell$. 

Compute the cost~$c$ of an optimal solution to the linear-programming relaxation of \scFVS on~$G$, using the ellipsoid method. By the properties of a relaxation, if~$c > \ell$ then~$\fvs(G) > \ell$, and hence~$\Phi$ is unsatisfiable; we output a trivial \textsc{no}-instance as the result of the reduction. If~$c \leq \ell$, then the existence of~$\ell$ vertex-disjoint cycles in~$G$ implies that~$c = \ell$. We output~$G$ as the result of the reduction, which guarantees that the feedback vertex number of~$G$ is equal to~$\ell = c = \fvsLP(G)$ if and only if~$\Phi$ is satisfiable.
\end{proof}

\subsection{\Wone-hardness of finding bounded-width 1-antlers} \label{sec:w1-hard}

We consider the following parameterized problem.

\defparproblem{\scBWoneAntlerDetection}
{A multigraph~$G$ and an integer~$k$.}
{$k$.}
{Does~$G$ admit a non-empty 1-antler~$(C,F)$ with~$|C| \leq k$?}

We prove that \scBWoneAntlerDetection is \Wonehard by a reduction from \scMultColoredClique, which is defined as follows.

\defparproblem{\scMultColoredClique}
{A simple graph~$G$, an integer~$k$, and a partition of~$V(G)$ into sets~$V_1, \ldots, V_k$.}
{$k$.}
{Is there a clique~$S$ in~$G$ such that for each~$1 \leq i \leq k$ we have~$|S \cap V_i| = 1$?}

The sets~$V_i$ are referred to as \emph{color classes}, and a solution clique~$S$ is called a \emph{multicolored clique}. It is well-known that \scMultColoredClique is \Wonehard (cf.~\cite[Theorem~13.25]{CyganFKLMPPS15}). Our reduction is inspired by the \Wone-hardness of detecting a Hall set~\cite[Exercise 13.28]{CyganFKLMPPS15}.

\begin{theorem} \label{thm:1antler:w1hard}
\scBWoneAntlerDetection is \Wonehard.
\end{theorem}
\begin{proof}
We give a parameterized reduction from the \scMultColoredClique problem. By inserting isolated vertices if needed, we may assume without loss of generality that for the input instance~$(G,k,V_1, \ldots, V_k)$ we have~$|V_1| = |V_2| = \ldots = |V_k| = n$ for some~$n \geq k(k-1) + 4$. For each~$i \in [k]$, fix an arbitrary labeling of the vertices in~$V_i$ as~$v_{i,j}$ for~$j \in [n]$. Given this instance, we construct an input~$(G', k')$ for \scBWoneAntlerDetection as follows.
\begin{enumerate}
	\item For each~$i \in [k]$, for each~$j \in [n]$, create a set~$U_{i,j} = \{u_{i,j,\ell} \mid \ell \in [k] \setminus \{i\}\}$ of vertices in~$G'$ to represent~$v_{i,j}$. 
    Intuitively, vertex~$u_{i,j,\ell}$ is used to check the presence of an edge from the~$j$th vertex of the~$i$th color class to whichever vertex from the $\ell$th color class is chosen in the solution clique.
	\item Define~$\mathcal{U} := \bigcup _{i \in [k]} \bigcup _{j \in [n]} U_{i,j}$. Insert (single) edges to turn~$\mathcal{U}$ into a clique in~$G'$.
	\item \label[step]{w1hard:w} For each edge~$e$ in~$G$ between vertices of different color classes, let~$e = \{v_{i,j}, v_{i',j'}\}$ with~$i < i'$, and insert two vertices into~$G'$ to represent~$e$:
	\begin{itemize}
		\item Insert a vertex~$w_e$, add an edge from~$w_e$ to each vertex in~$U_{i,j} \cup U_{i', j'}$, and then add a second edge between~$w_e$ and~$u_{i,j,i'}$.
		\item Insert a vertex~$w_{e'}$, add an edge from~$w_{e'}$ to each vertex in~$U_{i,j} \cup U_{i', j'}$, and then add a second edge between~$w_{e'}$ and~$u_{i',j',i}$.
	\end{itemize}
	Let~$W$ denote the set of vertices of the form~$w_e, w_{e'}$ inserted to represent an edge of~$G$. Observe that~$W$ is an independent set in~$G'$.
	\item Finally, insert a vertex~$u^*$ into~$G'$. Add a single edge from~$u^*$ to all other vertices of~$G'$, to make~$u^*$ into a universal vertex.
\end{enumerate}

This concludes the construction of~$G'$. Note that~$G'$ contains double-edges, but no self-loops. We set~$k' := k(k-1)$, which is appropriately bounded for a parameterized reduction. It is easy to see that the reduction can be performed in polynomial time. It remains to show that~$G$ has a multicolored $k$-clique if and only if~$G'$ has a non-empty 1-antler of width at most~$k'$. To illustrate the intended behavior of the reduction, we first prove the forward implication.

\begin{claim} \label{claim:w1hard:forward}
If~$G$ has a multicolored clique of size~$k$, then~$G'$ has a non-empty 1-antler of width~$k'$.
\end{claim}
\begin{claimproof}
Suppose~$S$ is a multicolored clique of size~$k$ in~$G$. Choose indices~$j_1, \ldots, j_k$ such that~$S \cap V_i = \{v_{i,j_i}\}$ for all~$i \in [k]$. Define a 1-antler~$(C,F)$ in~$G'$ as follows:
\begin{align*}
	C &= \bigcup _{i \in [k]} U_{i, j_i},\\
	F &= \{w_e, w_{e'} \mid e \text{~is an edge in~$G$ between distinct vertices of~$S$}\}.
\end{align*}
Since each set~$U_{i,j_i}$ has size~$k-1$, it follows that~$|C| = k(k-1) = k'$. Since~$F \subseteq W$ is an independent set in~$G'$, it also follows that~$G'[F]$ is acyclic. Each tree~$T$ in the forest~$G'[F]$ consists of a single vertex~$w_e$ or~$w_{e'}$. By construction, there is exactly one edge between~$T$ and~$V(G') \setminus (C \cup F)$; this is the edge to the universal vertex~$u^*$. It remains to verify that~$G'[C \cup F]$ contains~$|C|$ vertex-disjoint cycles, each containing exactly one vertex of~$C$. Consider an arbitrary vertex~$u_{i, j_i, \ell}$ in~$C$; we show we can assign it a cycle in~$G'[C \cup F]$ so that all assigned cycles are vertex-disjoint. Since~$S$ is a clique, there is an edge~$e$ in~$G$ between~$v_{i, j_i}$ and~$v_{\ell, j_\ell}$, and the corresponding vertices~$w_e, w_{e'}$ are in~$F$. If~$i < \ell$, then~$w_e \in F$ and there are two edges between~$u_{i, j_i, \ell}$ and~$w_e$, forming a cycle on two vertices. If~$i > \ell$, then there is a cycle on two vertices~$u_{i, j_i, \ell}$ and~$w_{e'}$. Since for any vertex of the form~$w_e$ or~$w_{e'}$ there is a unique vertex of~$C$ that it has a double-edge to, the resulting cycles are indeed vertex-disjoint. This proves that~$(C,F)$ is a 1-antler of width~$k'$.
\end{claimproof}

Before proving reverse implication, we establish some structural claims about the structure of 1-antlers in~$G'$.

\begin{claim} \label{claim:w1hard:antlers}
If~$(C,F)$ is a non-empty 1-antler in~$G'$ with~$|C| \leq k'$, then all of the following conditions hold: 
\begin{enumerate}
	\item $(\mathcal{U} \cup \{u^*\}) \cap F = \emptyset$.\label[condition]{w1:antler:u}
	\item $u^* \notin C \cup F$.\label[condition]{w1:antler:ustar} 
	\item $W \cap C = \emptyset$.\label[condition]{w1:antler:w}
	\item $C \subseteq \mathcal{U}$,~$F \subseteq W$, and each tree of the forest~$G'[F]$ consists of a single vertex.\label[condition]{w1:antler:subsets}
	\item For each vertex~$w \in F$ we have~$N_{G'}(w) \cap \mathcal{U} \subseteq C$.\label[condition]{w1:antler:neighborhoodinC}
	\item $F \neq \emptyset$.\label[condition]{w1:antler:fnonempty}
\end{enumerate}
\end{claim}
\begin{claimproof}
\Cref{w1:antler:u}:
Assume for a contradiction that there is a vertex~$u \in \mathcal{U} \cup \{u^*\}$ such that~$u \in F$. Since~$G'[F]$ is a forest by \cref{def:1antler:forest} of \cref{def:1antler}, while~$\mathcal{U} \cup \{u^*\}$ is a clique in~$G'$, it follows that~$|F \cap (\mathcal{U} \cup \{u^*\})| \leq 2$. By \cref{def:1antler:edgeleaving}, for a vertex in~$F$, there is at most one of its neighbors that belongs to neither~$F$ nor~$C$. Since~$|\mathcal{U} \cup \{u^*\}| \geq n \geq k(k-1) + 4$, and~$u \in F$ is adjacent to all other vertices of~$\mathcal{U} \cup \{u^*\}$ since that set forms a clique, the fact that~$|F \cap (\mathcal{U} \cup \{u^*\})| \leq 2$ implies that~$|C \cap (\mathcal{U} \cup \{u^*\})| \geq (|\mathcal{U}| + 1) - 2 - 1 \geq k(k-1) + 5 - 3 > k'$. So~$|C| > k'$, which contradicts that~$(C,F)$ is a 1-antler with~$|C| \leq k'$.

\Cref{w1:antler:ustar}:
The preceding item shows that~$u^* \notin F$. To prove the claim we show that~$u^* \notin C$. Assume for a contradiction that~$u^* \in C$. By \cref{obs:1antler:cycleWithF}, the graph~$G'[\{u^*\} \cup F]$ contains a cycle. Since~$W$ is an independent set in~$G'$ and~$u^*$ is not incident on any double-edges, the graph~$G'[\{u^*\} \cup W]$ is acyclic. Hence to get a cycle in~$G'[\{u^*\} \cup F]$, the set~$F$ contains at least one vertex that is not in~$W$ and not~$u^*$; hence this vertex belongs to~$\mathcal{U}$. So~$\mathcal{U} \cap F \neq \emptyset$; but this contradicts \cref{w1:antler:u}.

\Cref{w1:antler:w}:
Assume for a contradiction that~$w \in W \cap C$. Again by \cref{obs:1antler:cycleWithF}, there is a cycle in~$G'[\{w\} \cup F]$, and since~$G'$ does not have any self-loops this implies~$N_{G'}(w) \cap F \neq \emptyset$. But by construction of~$G'$ we have~$N_{G'}(w) \subseteq \mathcal{U} \cup \{u^*\}$, so~$F$ contains a vertex of either~$\mathcal{U}$ or~$u^*$. But this contradicts either \cref{w1:antler:u} or \cref{w1:antler:ustar}.

\Cref{w1:antler:subsets}:
Since the sets~$\mathcal{U}, W, \{u^*\}$ form a partition of~$V(G')$, the preceding claims imply~$C \subseteq \mathcal{U}$ and~$F \subseteq W$. Since~$W$ is an independent set in~$G'$, this implies that each tree~$T$ of the forest~$G'[W]$ consists of a single vertex.

\Cref{w1:antler:neighborhoodinC}: Consider a vertex~$w \in F$, which by itself forms a tree~$T$ in~$G'[F]$. Since~$u^* \notin C \cup F$, the edge between~$T$ and~$u^*$ is the unique edge connecting~$T$ to a vertex of~$V(G') \setminus (C \cup F)$, and therefore all neighbors of~$T$ other than~$u^*$ belong to~$C \cup F$. Since a vertex~$w \in W$ has~$N_{G'}(w) \subseteq \mathcal{U} \cup \{u^*\}$, it follows that~$N_{G'}(w) \cap \mathcal{U} \subseteq C$.

\Cref{w1:antler:fnonempty}: By the assumption that~$(C,F)$ is non-empty, we have~$C \cup F \neq \emptyset$. This implies that~$F \neq \emptyset$: if~$C$ would contain a vertex~$c$ while~$F = \emptyset$, then by \cref{obs:1antler:cycleWithF} the graph~$G'[\{c\} \cup F] = G'[\{c\}]$ would contain a cycle, which is not the case since~$G'$ has no self-loops. Hence~$F \neq \emptyset$.
\end{claimproof}

With these structural insights, we can prove the remaining implication.

\begin{claim} \label{claim:w1hard:reverse}
If~$G'$ has a non-empty 1-antler~$(C,F)$ with~$|C| \leq k'$, then~$G$ has a multicolored clique of size~$k$.
\end{claim}
\begin{claimproof}
Let~$(C,F)$ be a non-empty 1-antler in~$G'$ with~$|C| \leq k'$. By \cref{claim:w1hard:antlers} we have~$C \subseteq \mathcal{U}$, while~$F \subseteq W$ and~$F \neq \emptyset$. Consider a fixed vertex~$w \in F$. Since~$F \subseteq W$, vertex~$w$ is of the form~$w_e$ or~$w_{e'}$ constructed in \cref{w1hard:w} to represent some edge~$e$ of~$G$. Choose~$i^* \in [k], j^* \in [n]$ such that~$v_{i^*,j^*} \in V_{i^*}$ is an endpoint of edge~$e$ in~$G$. By construction we have~$U_{i^*,j^*} \subseteq N_{G'}(w)$ and therefore \cref{claim:w1hard:antlers}\eqref{w1:antler:neighborhoodinC} implies~$U_{i^*,j^*} \subseteq C$. 

Consider an arbitrary~$\ell \in [k] \setminus \{i^*\}$. Then~$u_{i^*,j^*,\ell} \in U_{i^*,j^*} \subseteq C$, so by \cref{obs:1antler:cycleWithF} the graph~$G'[\{u_{i^*,j^*,\ell}\} \cup F]$ contains a cycle. Since~$G'[F]$ is an independent set and~$G'$ has no self-loops, this cycle consists of two vertices joined by a double-edge. By construction of~$G'$, such a cycle involving~$u_{i^*,j^*,\ell}$ exists only through vertices~$w_e$ or~$w_{e'}$ where~$e$ is an edge of~$G$ connecting~$v_{i^*,j^*}$ to a neighbor in class~$V_{\ell}$. Consequently,~$F$ contains a vertex~$w'$ that represents such an edge~$e$. Let~$v_{\ell, j_\ell}$ denote the other endpoint of~$e$. Then~$N_{G'}(w') \supseteq U_{i^*,j^*} \cup U_{\ell, j_\ell}$, and by \cref{claim:w1hard:antlers}\eqref{w1:antler:neighborhoodinC} we therefore have~$U_{\ell, j_\ell} \subseteq C$. 

Applying the previous argument for all~$\ell \in [k] \setminus \{i^*\}$, together with the fact that~$U_{i^*,j^*} \subseteq C$, we find that for each~$i \in [k]$ there exists a value~$j_i$ such that~$U_{i, j_i} \subseteq C$. Since~$|C| \leq k(k-1)$ while each such set~$U_{i, j_i}$ has size~$k-1$, it follows that the choice of~$j_i$ is uniquely determined for each~$i \in [k]$, and that there are no other vertices in~$C$. To complete the proof, we argue that the set~$S = \{ v_{i, j_i} \mid i \in [k] \}$ is a clique in~$G$. 

Consider an arbitrary pair of distinct vertices~$v_{i, j_i}, v_{i', j_{i'}}$ in~$S$, and choose the indices such that~$i < i'$. We argue that~$G$ contains an edge~$e$ between these vertices, as follows. Since~$u_{i, j_i, i'} \in U_{i, j_i} \subseteq C$, by \cref{obs:1antler:cycleWithF} the graph~$G'[\{u_{i, j_i, i'}\} \cup F]$ contains a cycle. As argued above, the construction of~$G'$ and the fact that~$F \subseteq W$ ensure that this cycle consists of~$u_{i, j_i, i'}$ joined to a vertex in~$F$ by a double-edge. By \cref{w1hard:w} and the fact that~$i < i'$, this vertex is of the form~$w_e$ for an edge~$e$ in~$G$ connecting~$v_{i, j_i}$ to a vertex~$v_{i',j'}$ in~$V_{i'}$. By construction of~$G'$ we have~$U_{i', j'} \subseteq N_{G'}(w_e)$, and then~$w_e \in F$ implies by \cref{claim:w1hard:antlers}\eqref{w1:antler:neighborhoodinC} that~$U_{i', j'} \subseteq C$. Since we argued above that for index~$i'$ there is a unique choice~$j_{i'}$ with~$U_{i', j_{i'}} \subseteq C$, we must have~$j' = j_{i'}$. Hence the vertex~$w_e$ contained in~$F$ represents the edge of~$G$ between~$v_{i, j_i}$ and~$v_{i', j_{i'}}$ in~$G$, which proves in particular that the edge exists. As the choice of vertices was arbitrary, this shows that~$S$ is a clique in~$G$. As it contains exactly one vertex from each color class, graph~$G$ has a multicolored clique of size~$k$.
\end{claimproof}

\Cref{claim:w1hard:forward,claim:w1hard:reverse} show that the instance~$(G,k)$ of \scMultColoredClique is equivalent to the instance~$(G',k')$ of \scBWoneAntlerDetection. This concludes the proof of \cref{thm:1antler:w1hard}.
\end{proof}

Observe that the proof of \cref{thm:1antler:w1hard} shows that the variant of \scBWoneAntlerDetection where we ask for the existence of a 1-antler of width \emph{exactly}~$k$, is also \Wonehard. In the remainder of this paper we develop algorithms which bypass these hardness proofs because they may find antlers whose head is larger than~$k$ in graphs that have an antler whose head has size exactly~$k$. Note that for the purpose of preprocessing, finding an antler with a larger head is advantageous since it means identifying a larger part of an optimal solution.

\section{Structural properties of antlers} \label{sec:def:cuts:antlers}
In this section we give a number of structural properties of antlers.
While antlers may intersect in non-trivial ways, the following proposition relates the sizes of the cross-intersections.

\begin{proposition}\label{prop:intersection_antlers}
 If~$(C_1,F_1)$ and~$(C_2,F_2)$ are antlers in~$G$, then~$|C_1 \cap F_2| = |C_2 \cap F_1|$.
\end{proposition}
\begin{proof}
 We prove the equality by showing both quantities are equal to~$\fvs(G[F_1 \cup F_2])$. We begin by showing~$\fvs(G[F_1 \cup F_2]) = |C_1 \cap F_2|$.
 
 First we show~$\fvs(G[F_1 \cup F_2]) \geq |C_1 \cap F_2|$ by showing~$(C_1 \cap F_2, F_1)$ is an antler in~$G[F_1 \cup F_2]$. Clearly~$(C_1,F_1)$ is an antler in~$G[F_1 \cup F_2 \cup C_1]$, so then by \cref{obs:subantler}~$(C_1 \cap F_2,F_1)$ is an antler in~$G[F_1 \cup F_2 \cup C_1] - (C_1 \setminus F_2) = G[F_1 \cup F_2]$.
 
 Second we show~$\fvs(G[F_1 \cup F_2]) \leq |C_1 \cap F_2|$ by showing~$G[F_1 \cup F_2] - (C_1 \cap F_2)$ is acyclic. Note that~$G[F_1 \cup F_2] - (C_1 \cap F_2) = G[F_1 \cup F_2] - C_1$. 
 Suppose~$G[F_1 \cup F_2] - C_1$ contains a cycle. We know this cycle does not contain a vertex from~$C_1$, however it does contain at least one vertex from~$F_1$ since otherwise this cycle exists in~$G[F_2]$ which is a forest. 
 We know from \cref{obs:fvc-basics} that any cycle in~$G$ containing a vertex from~$F_1$ also contains a vertex from~$C_1$. Contradiction. The proof for~$\fvs(G[F_1 \cup F_2]) = |C_2 \cap F_1|$ is symmetric. It follows that~$|C_1 \cap F_2| = \fvs(G[F_1 \cup F_2]) = |C_2 \cup F_1|$.
\end{proof}

\Cref{lem:antler_diff} shows that what remains of a $z$-antler~$(C_1,F_1)$ when removing a different antler~$(C_2,F_2)$, again forms a smaller $z$-antler. We will rely on this lemma repeatedly to ensure that after having found and removed a proper subset of a width-$k$ $z$-antler, the remainder of that antler persists as a $z$-antler to be found later.

\begin{lemma}\label{lem:antler_diff}
 For any integer~$z\geq0$, if a multigraph~$G$ has a $z$-antler~$(C_1,F_1)$ and another antler~$(C_2,F_2)$, then~$(C_1 \setminus (C_2 \cup F_2), F_1 \setminus (C_2 \cup F_2))$ is a $z$-antler in~$G-(C_2 \cup F_2)$.
\end{lemma}

Before we prove \cref{lem:antler_diff}, we prove a weaker claim:

\begin{proposition}\label{prop:intersecting_antlers}
 If~$(C_1,F_1)$ and~$(C_2,F_2)$ are antlers in~$G$, then~$(C_1 \setminus (C_2 \cup F_2), F_1 \setminus (C_2 \cup F_2))$ is an antler in~$G - (C_2 \cup F_2)$.
\end{proposition}
\begin{proof}
\begin{figure}[t]
 \centering
 \includegraphics{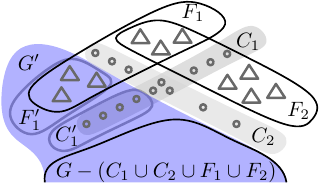}
 \caption{A diagram of the different vertex sets used in the proof of \cref{prop:intersecting_antlers}. The triangles represent induced trees.}
 \label{fig:antler:intersecting-antlers}
\end{figure}
For brevity let~$C_1' := C_1 \setminus (C_2 \cup F_2)$ and~$F_1' := F_1 \setminus (C_2 \cup F_2)$ and~$G' := G - (C_2 \cup F_2)$ (see \cref{fig:antler:intersecting-antlers}). 
First note that~$(C_1', F_1')$ is a FVC in~$G'$ by \cref{obs:subfvc}. We proceed to show that~$\fvs(G'[C_1' \cup F_1']) \geq |C_1'|$. By \cref{obs:subantler}, the pair~$(\emptyset,F_2)$ is an antler in~$G-C_2$, so then by \cref{obs:subfvc} we have~$(\emptyset, F_2 \cap (C_1 \cup F_1))$ is a FVC in~$G[C_1 \cup F_1] - C_2$. Since a FVC of width~$0$ is an antler we can apply \cref{obs:remove-antler} and obtain~$\fvs(G[C_1 \cup F_1] - C_2) = \fvs(G[C_1 \cup F_1] - (C_2 \cup F_2)) = \fvs(G'[C_1' \cup F_1'])$. We derive
\begin{align*}
\fvs(G'[C_1' \cup F_1'])
&=    \fvs(G[C_1 \cup F_1] - C_2)\\
&\geq \fvs(G[C_1 \cup F_1]) - |C_2 \cap (C_1 \cup F_1)|\\
&=    |C_1| - |(C_2 \cap C_1) \cup (C_2 \cap F_1)|  &\text{Since $(C_1, F_1)$ is an antler}\\ 
&=    |C_1| - |C_2 \cap C_1| - |C_2 \cap F_1| &\text{Since $C_1 \cap F_1 = \emptyset$}\\
&=    |C_1| - |C_2 \cap C_1| - |C_1 \cap F_2| &\text{By \cref{prop:intersection_antlers}}\\
&=    |C_1| - |(C_2 \cap C_1) \cup (C_1 \cap F_2)| &\text{Since $C_2 \cap F_2 = \emptyset$}\\
&=    |C_1 \setminus (C_2 \cup F_2)| = |C_1'|.
\tag*{\qedhere}
\end{align*}
\end{proof}

We can now prove \cref{lem:antler_diff}.
\begin{proof}
 For brevity let~$C_1' := C_1 \setminus (C_2 \cup F_2)$ and~$F_1' := F_1 \setminus (C_2 \cup F_2)$ and~$G' := G - (C_2 \cup F_2)$.
 By \cref{prop:intersecting_antlers} we know~$(C_1',F_1')$ is an antler in~$G'$, so it remains to show that~$G'[C_1' \cup F_1']$ contains a $C_1'$-certificate of order~$z$. Since~$(C_1,F_1)$ is a $z$-antler in~$G$, we have that~$G[C_1 \cup F_1]$ contains a $C_1$-certificate of order~$z$. Let~$H$ denote this $C_1$-certificate. Let~$\overline{V_H}$ be the set of all vertices in~$G'[C_1' \cup F_1']$ that are not in~$H$; similarly let~$\overline{E_H}$ be the set of edges in~$G'[C_1' \cup F_1']$ that are not in~$H$. 
 
 We have~$C_1 \subseteq V(H)$ since~$C_1$ is a minimum FVS in~$H$ by definition of certificate. Let~$G''$ be the multigraph obtained from~$G$ by removing the vertices of~$\overline{V_H}$ and edges of~$\overline{E_H}$. Then~$(C_1,F_1 \setminus \overline{V_H})$ is a $z$-antler in~$G''$ since it is a FVC by \cref{obs:subfvc} and~$G''[C_1 \cup (F_1 \setminus \overline{V_H})]$ contains a $C_1$-certificate of order~$z$ as~$H$ is a subgraph of~$G''$. Note that~$(C_2,F_2)$ is also an antler in~$G''$ since~$\overline{V_H} \cup \overline{E_H}$ does not contain vertices or edges from~$G[C_2 \cup F_2]$. It follows that~$(C_1',F_1' \setminus \overline{V_H})$ is an antler in~$G''$ by \cref{prop:intersecting_antlers}, so~$G''[C_1' \cup (F_1' \setminus \overline{V_H})]$ is a $C_1'$-certificate in~$G''$. Clearly this is a $C_1'$-certificate of order~$z$ since~$G''[C_1' \cup (F_1' \setminus \overline{V_H})]$ is a subgraph of~$H$. Since~$G''[C_1' \cup (F_1' \setminus \overline{V_H})]$ is a subgraph of~$G'[C_1' \cup (F_1' \setminus \overline{V_H})]$ it follows that~$G'[C_1' \cup (F_1' \setminus \overline{V_H})]$ contains a $C_1'$-certificate of order~$z$.
\end{proof}

\Cref{lem:antler_combine} shows that we can consider consecutive removal of two $z$-antlers as the removal of a single $z$-antler.

\begin{lemma}\label{lem:antler_combine}
 For any integer~$z\geq0$, if a multigraph~$G$ has a $z$-antler~$(C_1,F_1)$ and~$G-(C_1 \cup F_1)$ has a $z$-antler~$(C_2,F_2)$ then~$(C_1 \cup C_2, F_1 \cup F_2)$ is a $z$-antler in~$G$.
\end{lemma}
\begin{proof}
 Since~$(C_1,F_1)$ is a $z$-antler in~$G$ we know~$G[C_1 \cup F_1]$ contains a $C_1$-certificate~$H_1$ of order~$z$, similarly~$(G-(C_1 \cup F_1))[C_2 \cup F_2]$ contains a $C_2$-certificate~$H_2$ of order~$z$. Since~$H_1$ and~$H_2$ are vertex-disjoint, the disjoint union~$H^* = H_1 \cup H_2$ of these certificates forms a $(C_1 \cup C_2)$-certificate of order~$z$ in~$G[C_1 \cup C_2 \cup F_1 \cup F_2]$: the definition of certificate ensures that~$C_i$ is a minimum FVS of~$H_i$ for each~$i \in [2]$, so that~$C_1 \cup C_2$ is a minimum FVS in~$H^*$. It remains to show that~$(C_1 \cup C_2, F_1 \cup F_2)$ is a FVC in~$G$.
 
 First we show~$G[F_1 \cup F_2]$ is acyclic. Suppose for contradiction that~$G[F_1 \cup F_2]$ contains a cycle~$\mathcal{C}$. Since~$(C_1,F_1)$ is a FVC in~$G$, any cycle containing a vertex from~$F_1$ also contains a vertex from~$C_1$, hence~$\mathcal{C}$ does not contain vertices from~$F_1$. Therefore~$\mathcal{C}$ can only contain vertices from~$F_2$. This is a contradiction with the fact that~$G[F_2]$ is acyclic.
 
 Finally we show that for each tree~$T$ in~$G[F_1 \cup F_2]$ we have~$e(T,G-(C_1 \cup C_2 \cup F_1 \cup F_2)) \leq 1$. If~$V(T) \subseteq F_2$ this follows directly from the fact that~$(C_2,F_2)$ is a FVC in~$G-(C_1 \cup F_1)$. Similarly if~$V(T) \subseteq F_1$ this follows directly from the fact that~$(C_1,F_1)$ is a FVC in~$G$. So suppose~$T$ is a tree that contains vertices from both~$F_1$ and~$F_2$. 
 
 Since~$T$ is connected, each tree~$T'$ in~$T[F_1  \cap V(T)]$ has an edge to a vertex in a tree in~$T[F_2  \cap V(T)]$, which must be the unique edge that connects~$T'$ to a vertex outside~$C_1 \cup F_1$ with respect to the FVC~$(C_1, F_1)$. Hence no tree in~$T[F_1 \cap V(T)]$ contains a neighbor of~$V(G-(C_1 \cup C_2 \cup F_1 \cup F_2))$, so~$e(F_1 \cap V(T), G-(C_1 \cup C_2 \cup F_1 \cup F_2)) = 0$. 
 
 To complete the proof we show~$e(F_2 \cap V(T), G-(C_1 \cup C_2 \cup F_1 \cup F_2)) \leq 1$. Recall that each tree in~$G[F_2]$ has at most~$1$ edge to~$G-(C_1 \cup C_2 \cup F_1 \cup F_2)$, so it suffices to show that~$T[F_2 \cap V(T)]$ is connected. Suppose~$T[F_2 \cap V(T)]$ is not connected, then let~$u,v \in F_2$ be vertices from different components of~$T[F_2 \cap V(T)]$. Since~$T$ is connected, there is a path from~$u$ to~$v$ inside~$T$. This path must use a vertex~$w \in V(T-F_2) \subseteq F_1$. Let~$T''$ denote the tree in~$T[F_1 \cap V(T)]$ that contains~$w$. Since~$(C_1,F_1)$ is a FVC in~$G$ we have that~$e(T'', F_2) \leq e(T'',G-(C_1 \cup F_1)) \leq 1$ hence no vertex in~$T''$ can be part of a path from~$u$ to~$v$ in~$T$. This contradicts our choice of~$T''$ and completes the proof.
\end{proof}

The last structural property of antlers, given in \cref{col:few-fvcs}, derives a bound on the number of trees of a forest~$G[F]$ needed to witness that~$C$ is an optimal FVS of~$G[C \cup F]$. \Cref{col:few-fvcs} is a corollary to the following lemma.

\begin{lemma} \label{lem:small-cert}
 If a multigraph~$G$ contains a $C$-certificate~$H$ of order~$z \geq 0$ for some~$C\subseteq V(G)$, then~$H$ contains a $C$-certificate~$\hat{H}$ of order~$z$ such that~$\hat{H}-C$ has at most~$\frac{|C|}{2}(z^2+4z-1)$ trees.
\end{lemma}
\begin{proof}
 Consider a tree~$T$ in~$H-C$, we show that~$\fvs(H-V(T)) = \fvs(H)$ if
 \begin{enumerate}
  \item \label[condition]{item:vflower} for all~$v \in C$ such that~$H[V(T) \cup \{v\}]$ has a cycle,~$H-V(T)$ contains~$z$ cycles whose vertex sets only intersect in~$v$, and
  \item \label[condition]{item:pumpkin} for all~$\{u,v\} \in \binom{N_H(T)}{2}$ there are at least~$z+1$ other trees in~$H-C$ which are simultaneously adjacent to~$u$ and~$v$.
 \end{enumerate}
 
 Consider the component~$H'$ of~$H$ containing~$T$. It suffices to show that~$\fvs(H'-V(T)) = \fvs(H')$. Clearly~$\fvs(H' - V(T)) \leq \fvs(H')$ so it remains to show that~$\fvs(H' - V(T)) \geq \fvs(H')$. Suppose that~$\fvs(H' - V(T)) < \fvs(H')$. Let~$X$ be a FVS in~$H'-V(T)$ with~$|X| < \fvs(H') = |C \cap V(H')| \leq z$. For any~$v \in C \cap V(H')$ such that~$H[V(T) \cup \{v\}]$ has a cycle we know from \cref{item:vflower} that~$H'-V(T)$ has~$z > |X|$ cycles that intersect only in~$v$, hence~$v \in X$. By \cref{item:pumpkin} we have that all but possibly one vertex in~$N_H(T)$ must be contained in~$X$, since if there are two vertices~$x,y \in N_H(T)\setminus X$ then~$H-V(T)-X$ has at least~$z+1 - |X| \geq 2$ internally vertex-disjoint paths between~$x$ and~$y$ forming a cycle and contradicting our choice of~$X$. Since there is at most one vertex~$w \in N_H(T)\setminus X$ and~$H[T \cup \{w\}]$ does not have a cycle, we have that~$H'-X$ is acyclic, a contradiction since~$|X| < \fvs(H')$.
 
 The desired $C$-certificate~$\hat{H}$ can be obtained from~$H$ by iteratively removing trees from~$H-C$ for which both conditions hold. We show that if no such tree exists, then~$H-C$ has at most~$\frac{|C|}{2}(z^2+2z-1)$ trees. Each tree~$T$ for which \cref{item:vflower} fails can be charged to a vertex~$v \in C$ that witnesses this, i.e.,~$H[T \cup \{v\}]$ has a cycle and there are at most~$z$ trees~$T'$ such that~$T' \cup \{v\}$ has a cycle. Clearly each vertex~$v \in C$ can be charged at most~$z$ times, hence there are at most~$z \cdot |C|$ trees violating \cref{item:vflower}.
 Similarly each tree~$T$ for which \cref{item:pumpkin} fails can be charged to a pair of vertices~$\{u,v\} \in \binom{N_H(T)}{2}$ for which at most~$z+1$ trees in~$H-T$ are adjacent to~$u$ and~$v$. Clearly each pair of vertices can be charged at most~$z+1$ times. Additionally each pair consists of vertices from the same component of~$H$. Let~$H_1,\ldots,H_\ell$ be the components in~$H$, then the number of such pairs is at most
 \begin{align*}
  \sum_{1 \leq i \leq \ell} \binom{|C \cap V(H_i)|}{2}
  &= \sum_{1 \leq i \leq \ell} \frac{1}{2}|C \cap V(H_i)|(|C \cap V(H_i)|-1)\\
  &\leq \sum_{1 \leq i \leq \ell} \frac{1}{2}|C \cap V(H_i)|(z-1)\\
  &= \frac{|C|}{2}(z-1).
 \end{align*}
 Thus~$H-C$ has at most~$z\cdot|C| + (z+1)\cdot\frac{|C|}{2}(z-1) = \frac{|C|}{2}(z^2+2z-1)$ trees violating \cref{item:pumpkin}.
 
 To conclude, there are at most~$z \cdot |C| + \frac{|C|}{2}(z^2+2z-1)$ trees of~$H-C$ that violate one or both conditions and any tree in~$H-C$ that satisfies both conditions can be excluded from~$\hat{H}$. We obtain a final bound on the number of trees in~$\hat{H}$ of~$z \cdot |C| + \frac{|C|}{2}(z^2+2z-1) = \frac{|C|}{2}(z^2+4z-1)$.
\end{proof}

We can now give an upper bound on the number of trees in~$G[F]$ required for a $z$-antler~$(C,F)$.

\begin{lemma}\label{col:few-fvcs}
 Let~$(C,F)$ be a $z$-antler in a multigraph~$G$ for some~$z \geq 0$. There exists an~$F' \subseteq F$ such that~$(C,F')$ is a $z$-antler in~$G$ and~$G[F']$ has at most~$\frac{|C|}{2}(z^2+4z-1)$ trees.
\end{lemma}
\begin{proof}
 Since~$(C,F)$ is a $z$-antler,~$G[C \cup F]$ contains a $C$-certificate~$H$ of order~$z$. By \cref{lem:small-cert} we know~$H$ contains a $C$-certificate~$\hat{H}$ of order~$z$ such that~$\hat{H}-C$ has at most~$\frac{|C|}{2}(z^2+4z-1)$ components. Since~$\hat{H}$ is a subgraph of~$H$, which is a subgraph of~$G[C \cup F]$, each component~$T_{\hat{H}}$ of~$\hat{H} - C$ (which is a tree) is contained in a component~$T_F$ of~$G[F]$ (which is also a tree). By the bound on the number of components of~$\hat{H}-C$, there are at most~$\frac{|C|}{2}(z^2+4z-1)$ trees of~$G[F]$ that contain a tree of~$\hat{H}-C$. Let~$F'$ denote the union of the vertex sets of these trees in~$G[F]$. Since each tree of~$G[F']$ is also a tree of~$G[F]$, the resulting pair~$(C,F')$ is a FVC. As the $C$-certificate~$\hat{H}$ of order~$z$ is a subgraph of~$G[C \cup F']$, we conclude~$(C,F')$ is the desired $z$-antler.
\end{proof}

\section{Finding and reducing feedback vertex cuts} \label{sec:find-fvc}
As described in \cref{antler:sec:intro}, our algorithm aims to identify vertices in antlers using color coding. To allow a relatively small family of colorings to identify an entire antler structure~$(C,F)$ with~$|C| \leq k$, we need to bound~$|F|$ in terms of~$k$ as well. We therefore use several graph reduction steps. In this section, we show that if there is a width-$k$ antler whose forest~$F$ is significantly larger than~$k$, then we can identify a reducible structure in the graph. To identify a reducible structure we will also use color coding. In \cref{sec:fvc-kernel} we show how to reduce such a structure while preserving antlers and optimal feedback vertex sets. 

We define several types of special feedback vertex cuts below. These concepts will be used to reason about feedback vertex cuts that can be found and manipulated efficiently.

\begin{definition} \label{def:reducible:fvc}
Define the function~$f_r \colon \mathbb{N} \to \mathbb{N}$ as~$f_r(x) = 2x^3+4x^2$. We say~$(C,F)$ is a \emph{single-tree} FVC if~$G[F]$ is connected. A FVC~$(C,F)$ is \emph{reducible} if the following conditions hold:~$|F| > f_r(|C|)$ and~$|N_G(F)| \leq |C|+1$.
\end{definition}

Note that the condition~$|N_G(F)| \leq |C|+1$ trivially holds for any single-tree FVC. We will show that, given a reducible FVC~$(C,F)$, we can efficiently reduce to a smaller graph in which to continue the algorithm. However, it is not trivial to find a large FVC in a graph. The next definition will be used to give guarantees about efficiently finding a large FVC.


\begin{definition} \label{def:simple}
 A FVC~$(C,F)$ is \emph{simple} if~$|F| \leq 2f_r(|C|)$ and one of the following holds:
 \begin{enumerate}[label=(\alph*)]
     \item $G[F]$ is connected, or \label[condition]{condition:connected}
     \item all trees in~$G[F]$ have a common neighbor~$v$ in~$G$ and there exists a single-tree FVC~$(C,F_1)$ with~$v \in F_1 \setminus F$ and~$F \subseteq F_1$.\label[condition]{condition:common:neighbor}
\end{enumerate}
\end{definition}

The intuition behind this definition is as follows. To find a FVC~$(C,F)$ in a multigraph~$G$ using color coding, we enumerate a number of colorings of the vertex set of~$G$ in the hope that at least one of them ``highlights''~$(C,F)$ in a certain technical sense. Observe that if~$G[F]$ consists of a single tree~$T$ and we have a coloring in which~$V(T)$ receives one color~$\cF$, and all vertices of~$N_G(T)$ receive a different color~$\cC$, then given this coloring it is easy to find~$T$ since it appears as a connected component of the subgraph induced by the vertices with color~$\cF$. Given~$T$, it is easy to find a FVC~$(C,F)$ satisfying~$F=V(T)$ by simply setting~$C = N_G(T)$. There may even be a FVC with~$F=V(T)$ for which the set~$C$ is strictly smaller. This happens precisely when there is a vertex~$v \in N_G(T)$ with~$e(T,v)=1$, since then we may set~$C = N_G(T) \setminus \{v\}$ and still obtain a valid FVC. Hence a single-tree FVC is particularly easy to find due to its forest appearing as a single connected component of the graph induced by the suitable color.

When a FVC~$(C,F)$ has multiple trees in its forest~$G[F]$, it becomes harder to detect; even when we have a coloring in which all of~$F$ receives one color and all of~$N_G(F)$ receives another. The reason for this is that it is a priori unclear which trees should be combined to form the forest; this is precisely the difficulty underlying the hardness proof of \cref{thm:1antler:w1hard}. But when a FVC is simple via \cref{condition:common:neighbor}, the common neighbor~$u$ acts as an anchor point of the trees which facilitates the process of assembling the forest~$F$ out of a suitable collection of trees. The polynomial-time dynamic program for the 0-1 knapsack problem will play a role here.

Based on these ideas, we will present an algorithm that can identify a reducible FVC when it is also simple. First we show that such a FVC always exists when the graph contains a single-tree reducible FVC.

\begin{lemma}\label{lem:small_fvc}
 If a multigraph~$G$ contains a reducible single-tree FVC~$(C,F)$, then there exists a simple reducible FVC~$(C,F')$ with~$F' \subseteq F$.
\end{lemma}
\begin{proof}
 We use induction on~$|F|$. If~$|F| \leq 2f_r(|C|)$ then~$(C,F)$ is simple by \cref{condition:connected}. Assume~$|F| > 2f_r(|C|)$. Since~$(C,F)$ is a FVC and~$G[F]$ is connected there is at most one vertex~$v \in F$ that has a neighbor in~$V(G)\setminus(C \cup F)$. If no such vertex exists, take~$v \in F$ to be any other vertex. Observe that~$(C,F\setminus\{v\})$ is a FVC:~$v$ was the only vertex of the tree~$G[F]$ that had a neighbor outside~$C$ (if any), so that for each tree of~$G[F] - v$, vertex~$v$ is the only neighbor outside~$C$. Consider the following cases:
 \begin{itemize}
  \item All trees in~$G[F] - v$ contain at most~$f_r(|C|)$ vertices. Let~$F'$ be the vertices of an inclusion minimal set of trees of~$G[F]-v$ such that~$|F'| > f_r(|C|)$. Clearly~$|F'| \leq 2f_r(|C|)$ since otherwise the set is not inclusion minimal. Each tree in~$G[F']$ contains a neighbor of~$v$ and~$F' \subseteq F$, hence~$(C,F')$ is simple by \cref{condition:common:neighbor}, and~$(C,F')$ is reducible since~$|F'| > f_r(|C|)$ while~$N_G(F')\subseteq C \cup \{v\}$.
  \item There is a tree~$T$ in~$G[F] - v$ that contains more than~$f_r(|C|)$ vertices. Now~$(C,V(T))$ is a single-tree reducible FVC with~$|V(T)| < |F|$, so the result follows by induction.
  \qedhere
 \end{itemize}
\end{proof}

We proceed to show how a simple reducible FVC can be found using color coding. A vertex coloring of~$G$ is a function~$\chi \colon V(G) \to \{\cC,\cF\}$. We say a simple FVC~$(C,F)$ is \emph{properly colored} by a coloring~$\chi$ if~$F \subseteq \inv{\chi}(\cF)$ and~$C \cup N_G(F) \subseteq \inv{\chi}(\cC)$.

\begin{lemma} \label{lem:colorcoded-fvc}
Given a multigraph~$G$ and coloring~$\chi$ of~$G$ that properly colors some simple reducible FVC~$(C,F)$, a reducible FVC~$(C',F')$ can be found in~$\Oh(n^3)$ time.
\end{lemma}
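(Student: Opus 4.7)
The core observation is structural: the coloring constraints $F \subseteq \chi^{-1}(\cF)$ and $N_G(F) \subseteq \chi^{-1}(\cC)$ together force $F$ to be a union of connected components of $G_F := G[\chi^{-1}(\cF)]$, and since $G[F]$ is a forest, each such component is a tree. This restricts the search to tree-components of $G_F$ together with their $\cC$-colored surroundings. My algorithm will exploit this by trying two families of candidate FVCs that mirror the two sub-cases in the definition of a simple FVC.

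\emph{Single-tree candidates.} For every tree-component $T$ of $G_F$ I compute the multiset of edges from $V(T)$ to $V(G) \setminus V(T)$, whose endpoints all lie in $\chi^{-1}(\cC)$, then set $C'_T$ to be the set of these endpoints with at most one vertex deleted, chosen to be a vertex having only a single edge to $T$ (if any). By construction $(C'_T, V(T))$ is a FVC, and I simply test the inequality $|V(T)| > f_r(|C'_T|)$. If $(C,F)$ satisfies case~(a) of simplicity, then $T := F$ appears in this enumeration; the FVC condition forces all but at most one external neighbor of $F$ (the unique ``outside'' one) to belong to $C$, yielding $|C'_T| \leq |C|$ and hence $|V(T)| = |F| > f_r(|C|) \geq f_r(|C'_T|)$.

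\emph{Hub candidates.} For each $v \in \chi^{-1}(\cC)$ I collect the tree-components of $G_F$ joined to $v$ by exactly one edge into a set $S_v$ and form $F'_v := \{v\} \cup \bigcup_{T \in S_v} V(T)$; attaching each $T \in S_v$ to $v$ by a single edge keeps everything acyclic, so $G[F'_v]$ induces a single tree. I build $C'_v$ analogously as $N_G(F'_v)$ minus at most one single-edge external neighbor, and again test the reducibility inequality. When $(C,F)$ satisfies case~(b) with hub $v$, the auxiliary single-tree FVC $(C, F_2)$ with $v \in F_2 \setminus F$ and $C \cap F_2 = \emptyset$ forces $v \notin C$; combined with the FVC condition on $(C,F)$, this forces each tree of $G[F]$ to be a tree-component of $G_F$ joined to $v$ by exactly one edge (so all of $F$ ends up in $S_v$) and forces $N_G(T_i) \subseteq C \cup \{v\}$ for every such tree $T_i$. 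Consequently $F \cup \{v\} \subseteq F'_v$.

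The main obstacle I anticipate is handling ``extras'', namely tree-components of $G_F$ that end up in $S_v$ but are not part of $F$: they may contribute fresh external neighbors to $C'_v$ and jeopardise the target inequality $|F'_v| > f_r(|C'_v|)$. My plan is to show that any such extra $T'$ either (i)~on its own already satisfies $|V(T')| > f_r(|C'_{T'}|)$, so that the single-tree pass above detects a reducible FVC without needing the hub, or (ii)~contributes so few new external neighbors per added $F$-vertex that the slack coming from $|F'_v| \geq |F|+1 > f_r(|C|)$, together with the simplicity bound $|F| \leq 2 f_r(|C|)$ and the super-linear growth of $f_r$, absorbs the extra cost. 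The running time is dominated by $\Oh(n)$ iterations of single-tree and hub candidates, each doing $\Oh(n^2)$ work to look up adjacencies and multi-edge counts between $F'$ and its complement, giving the claimed $\Oh(n^3)$ bound.
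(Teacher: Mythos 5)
Your single-tree pass matches the paper's handling of case~(a) exactly, but your hub pass has two gaps that break the argument.

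\emph{Including the hub in the forest.} You set $F'_v := \{v\} \cup \bigcup_{T \in S_v} V(T)$, so $v$ is part of the forest. For $(C'_v, F'_v)$ to be a FVC, the (single) tree $G[F'_v]$ must have at most one edge to $V(G) \setminus (C'_v \cup F'_v)$, which means nearly every neighbor of $v$ outside $F'_v$ must land in $C'_v$. Those neighbors of $v$ are unconstrained by the coloring (we only know $v \in \inv{\chi}(\cC)$, not anything about $N_G(v)$), so $|C'_v|$ can be inflated by an amount independent of $|C|$ and $|F|$, killing the inequality $|F'_v| > f_r(|C'_v|)$ even when a simple reducible FVC $(C,F)$ with hub $v$ exists. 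The paper avoids this by \emph{keeping the hub outside the returned FVC}: $u$ serves only as the designated vertex outside $\hat C \cup \hat F$ to which each tree in $\hat F$ contributes its one allowed escaping edge. That way $u$'s other incident edges are never charged.

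\emph{Taking all trees of $S_v$ without selection.} Your plan for ``extras'' is exactly the crux of this lemma, and the (i)/(ii) dichotomy you sketch does not hold. An extra tree $T'$ can have, say, $3$ vertices and $20$ new external neighbors; it is not reducible on its own, and adding it increases the right-hand side $f_r(|C'_v|)$ (cubic in its argument) far faster than it increases $|F'_v|$, so there is no slack argument that absorbs it. The paper resolves this by treating the choice of which trees to keep as an optimization: after fixing $u$ and computing the set $C'$ of vertices adjacent to at least two trees in $\mathcal{T}$, it greedily keeps the ``free'' trees $\mathcal{T}_1$ (with $N_G(T) \subseteq C' \cup \{u\}$), and for the rest it runs a 0--1 knapsack where a tree's weight is the number of new cut vertices $|N_G(T) \setminus (C' \cup \{u\})|$ and its value is $|V(T)|$. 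It then sweeps over all capacities $b$ and returns once $|V(\mathcal{T}_1)| + |V(\mathcal{T}_2^b)| > f_r(|C'| + b)$ is met; the correctness argument shows that the actual $F$-trees form a feasible knapsack solution witnessing a good $b$. Your approach lacks this selection step entirely, and without it the claimed algorithm can return an empty (or non-reducible) answer even when the promise holds.

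A secondary note: you also need the fact that $C' \subseteq C$ when $u = v$ (so that $|C'|$ stays bounded by $|C|$), which the paper proves via the internally-disjoint-paths argument going through $(C,F_2)$; your write-up asserts $N_G(T_i) \subseteq C \cup \{v\}$ but the vertices adjacent to two $S_v$-trees still need to be argued into $C$, which is the role that $(C,F_2)$ plays.
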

\begin{proof}
If~$(C,F)$ is simple by \cref{condition:connected}, i.e.,~$G[F]$ is connected, it is easily verified that we can find a reducible FVC in~$\Oh(n^3)$ time as follows: Consider the set~$\mathcal{T}$ of all trees in~$G[\inv{\chi}(\cF)]$. For each tree~$T \in \mathcal{T}$, if there is a vertex~$u \in N_G(T)$ such that~$e(\{u\},T) = 1$ take~$C' := N_G(T) \setminus \{u\}$, otherwise take~$C' := N_G(T)$. If~$|V(T)| > f_r(|C'|)$ return~$(C',V(T))$.

In the remainder of the proof we assume that~$(C,F)$ is simple by \cref{condition:common:neighbor}.

\subparagraph*{Algorithm}
For each vertex~$u \in \inv{\chi}(\cC)$ consider the set~$\mathcal{T}$ of all trees~$T$ in~$G[\inv{\chi}(\cF)]$ such that~$e(\{u\},T) = 1$. Let~$C' \subseteq \inv{\chi}(\cC) \setminus \{u\}$ be the set of vertices (excluding~$u$) with a neighbor in at least two trees in~$\mathcal{T}$ and let~$\mathcal{T}_1$ be the set of trees~$T \in \mathcal{T}$ for which~$N_G(T) \subseteq C' \cup \{u\}$. Now consider the set of trees~$\mathcal{T}_2 = \mathcal{T}\setminus\mathcal{T}_1$ as a set of objects for a 0-1 knapsack problem where we define for each~$T \in \mathcal{T}_2$ its weight as~$|N_G(T)\setminus(C' \cup \{u\})|$ and its value as~$|V(T)|$. Using the dynamic programming algorithm~\cite{Toth80} for the 0-1 knapsack problem we compute for all~$0 \leq b \leq |N_G(V(\mathcal{T}_2)) \setminus (C' \cup \{u\})|$ a set of trees~$\mathcal{T}_2^b \subseteq \mathcal{T}_2$ with a combined weight~$\sum_{T \in \mathcal{T}_2^b} |N_G(T) \setminus (C' \cup \{u\})| \leq b$ such that the combined value~$\sum_{T \in \mathcal{T}_2^b} |V(T)|$ is maximized. If for any such~$b$ we have~$|V(\mathcal{T}_1)|+ |V(\mathcal{T}_2^b)| > f_r(|C'| + b)$ then take~$\hat{C} := C' \cup N_G(V(\mathcal{T}_2^b)) \setminus \{u\}$ and~$\hat{F} := V(\mathcal{T}_1) \cup V(\mathcal{T}_2^b)$ and return~$(\hat{C},\hat{F})$.

\subparagraph*{Correctness}
To show that~$(\hat{C}, \hat{F})$ is a FVC, first note that~$G[\hat{F}]$ is a forest. For each tree~$T$ in this forest we have~$e(T,\{u\})=1$ and~$N_G(T) \subseteq C' \cup \{u\} \cup N_G(V(\mathcal{T}_2^b)) = \hat{C} \cup \{u\}$. It follows that~$e(T,G-(\hat{C} \cup \hat{F}))=e(T,\{u\}) = 1$. To show that~$(\hat{C},\hat{F})$ indeed reducible observe that~$\sum_{T \in \mathcal{T}_2^b} |N_G(T) \setminus (C' \cup \{u\})| = |\bigcup_{T \in \mathcal{T}_2^b} N_G(T) \setminus (C' \cup \{u\})|$ since if two trees~$T_1,T_2 \in \mathcal{T}_2^b$ have a common neighbor that is not~$u$, it must be in~$C'$ by definition, hence the neighborhoods only intersect on~$C' \cup \{u\}$. We can now deduce~$|\hat{F}| = |V(\mathcal{T}_1)| + |V(\mathcal{T}_2^b)| > f_r(|C'| + b) \geq f_r(|C'| + \sum_{T \in \mathcal{T}_2^b} |N_G(T) \setminus (C' \cup \{u\})|) = f_r(|C' \cup N_G(V(\mathcal{T}_2^b)) \setminus \{u\}|) = f_r(|\hat{C}|)$. The second criterion of \cref{def:reducible:fvc} is satisfied since~$N_G(\hat{F}) \subseteq \hat{C} \cup \{u\}$.

It remains to show that if~$\chi$ properly colors a simple reducible FVC~$(C,F)$ then for some~$u \in \inv{\chi}(\cC)$ there exists a~$b$ such that~$|V(\mathcal{T}_1)| + |V(\mathcal{T}_2^b)| \geq f_r(|C'| + b)$. Recall that we assumed~$(C,F)$ is simple by \cref{condition:common:neighbor}, i.e., all trees in~$G[F]$ have a common neighbor~$v$ and there exists a single-tree FVC~$(C,F_2)$ with~$v \in F_2 \setminus F$ and~$F \subseteq F_2$. Since~$(C,F)$ is properly colored we know~$v \in \inv{\chi}(\cC)$, so in some iteration we will have~$u = v$. Consider the sets~$\mathcal{T}$,~$\mathcal{T}_1$,~$\mathcal{T}_2$, and~$C'$ as defined in this iteration. We first show~$C' \subseteq C$. If~$w \in C'$ then~$w$ has a neighbor in two trees in~$\mathcal{T}$. This means there are two internally vertex disjoint paths between~$v$ and~$w$, forming a cycle. Since~$v \in F_2$ we have by \cref{obs:fvc-basics} for the FVC~$(C,F_2)$ that this cycle must contain a vertex in~$C$ which is therefore different from~$v$. Recall that~$(C,F)$ is properly colored, hence all vertices in~$C$ have color~\cC. Note that the internal vertices of these paths all have color~\cF because they are vertices from trees in~$G[\inv{\chi}(\cF)]$. Hence~$w \in C$ and therefore~$C' \subseteq C$.
To complete the proof we show the following.

\begin{claim}
 There exists a value~$b$ such that~$|V(\mathcal{T}_1)| + |V(\mathcal{T}_2^b)| \geq f_r(|C'| + b)$.
\end{claim}
\begin{claimproof}
 Recall that we assumed existence of a properly colored FVC~$(C,F)$ that is reducible and simple by \cref{condition:common:neighbor} witnessed by the FVC~$(C,F_2)$. Consider the set~$\mathcal{T}'$ of trees in~$G[F]$. Note that any tree~$T'$ in~$\mathcal{T}'$ is a tree in~$G[\inv{\chi}(\cF)]$ since~$(C,F)$ is properly colored and note also that~$T'$ contains a neighbor of~$v$. If~$e(T',\{v\}) > 1$ then~$G[F_2]$ contains a cycle, contradicting that~$(C,F_2)$ is a FVC in~$G$, hence~$e(T',\{v\})=1$. It follows that~$T' \in \mathcal{T}$, meaning~$\mathcal{T'} \subseteq \mathcal{T}$. Take~$\mathcal{T}'_2 = \mathcal{T}' \setminus \mathcal{T}_1 = \mathcal{T}' \cap \mathcal{T}_2$ and~$b = \sum_{T \in \mathcal{T}'_2} |N_G(V(T)) \setminus (C' \cup \{v\}|$. Clearly~$\mathcal{T}'_2$ is a candidate solution for the 0-1 knapsack problem with capacity~$b$, hence~$|V(\mathcal{T}_2^b)| \geq |V(\mathcal{T}'_2)|$. We deduce
\allowdisplaybreaks
\begin{align*}
 |V(\mathcal{T}_1)| + |V(\mathcal{T}_2^b)|
 &\geq |V(\mathcal{T}_1)| + |V(\mathcal{T}'_2)|
 \geq |V(\mathcal{T}')|
 = |F|\\
 &\hspace{-5em}> f_r(|C|) &&\hspace{-5em}\text{since~$(C,F)$ is reducible}\\
 &\hspace{-5em}= f_r(|C' \cup C|) &&\hspace{-5em}\text{since~$C' \subseteq C$}\\
 &\hspace{-5em}= f_r(|C' \cup (N_G(F) \setminus \{v\}) \cup C|) &&\hspace{-5em}\text{since~$N_G(\mathcal{T}'_2) \setminus \{v\} \subseteq C$}\\
 &\hspace{-5em}\geq f_r(|C' \cup (N_G(\mathcal{T}'_2) \setminus \{v\})|) &&\hspace{-5em}\text{since~$f_r$ is non-decreasing}\\
 &\hspace{-5em}= f_r(|C'| + |N_G(\mathcal{T}'_2) \setminus (C' \cup \{v\})|) &&\hspace{-5em}\text{since~$|A \cup B| = |A| + |B \setminus A|$}\\
 &\hspace{-5em}> f_r(|C'| + b). 
 \tag*{\qedhere}
\end{align*}
\end{claimproof}

\subparagraph*{Running time}
For each~$u \in \inv{\chi}(\cC)$ we perform a number of $\Oh(n+m)$-time operations and run the dynamic programming algorithm for a problem with~$\Oh(n)$ objects and a capacity of~$\Oh(n)$ yielding a run time of~$\Oh(n^2)$ for each~$u$ or~$\Oh(n^3)$ for the algorithm as a whole.
\end{proof}

Using the previous lemmas the problem of finding a reducible single-tree FVC reduces to finding a coloring that properly colors a simple reducible FVC. We generate a set of colorings that is guaranteed to contain at least one such coloring. To generate this set we use the concept of a universal set.
For some set~$D$ of size~$n$ and integer~$s$ with~$n \geq s$, an~\emph{$(n,s)$-universal set} for~$D$ is a family~$\mathcal{U}$ of subsets of~$D$ such that for all~$S \subseteq D$ of size at most~$s$ we have~$\{S \cap U \mid U \in \mathcal{U}\} = 2^S$.

\begin{theorem}[{\cite[Theorem~6]{NaorSS95}, cf.~\cite[Theorem~5.20]{CyganFKLMPPS15}}] \label{thm:universal-set}
 For any set~$D$ and integers~$n$ and~$s$ with~$|D| = n \geq s$, an $(n,s)$-universal set~$\mathcal{U}$ for~$D$ with~$|\mathcal{U}| = 2^{\Oh(s)}\log n$ can be created in~$2^{\Oh(s)} n \log n$ time.
\end{theorem}

Whether a simple FVC of width~$k$ is properly colored is determined by at most~$2f_r(k) + k + 1 = \Oh(k^3)$ relevant vertices. By creating an $(n,\Oh(k^3))$-universal set for~$V(G)$ using \cref{thm:universal-set}, we can obtain in~$2^{\Oh(k^3)} \cdot n \log n$ time a set of~$2^{\Oh(k^3)} \cdot \log n$ colorings that contains a coloring for each possible assignment of colors for these relevant vertices. By applying \cref{lem:colorcoded-fvc} for each coloring we obtain the following lemma.

\begin{lemma}\label{lem:find-fvc}
 There exists an algorithm that, given a multigraph~$G$ and an integer~$k$, outputs a (possibly empty) FVC~$(C,F)$ in~$G$. If~$G$ contains a reducible single-tree FVC of width at most~$k$, then~$(C,F)$ is reducible. The algorithm runs in time~$2^{\Oh(k^3)} \cdot n^3 \log n$.
 \end{lemma}
 \begin{proof}
 Take~$s = 2f_r(k) + k + 1$. By \cref{thm:universal-set} an $(n,s)$-universal set~$\mathcal{U}$ for~$V(G)$ of size~$2^{\Oh(s)} \log n$ can be created in~$2^{\Oh(s)} n \log n$ time. For each~$Q \in \mathcal{U}$, let~$\chi_Q$ be the coloring of~$G$ with~$\inv{\chi_Q}(\cC) = Q$. Run the algorithm from \cref{lem:colorcoded-fvc} on~$\chi_Q$ for every~$Q \in \mathcal{U}$ and return the first reducible FVC. If no reducible FVC was found return~$(\emptyset,\emptyset)$.
 We obtain an overall run time of~$2^{\Oh(s)} \cdot n^3 \log n = 2^{\Oh(k^3)} \cdot n^3 \log n$.
 
 To prove correctness assume~$G$ contains a reducible single-tree FVC~$(C_1,F_1)$ with~$|C_1| \leq k$. Then by \cref{lem:small_fvc} we know~$G$ contains a simple reducible FVC~$(C_1,F_2)$. Coloring~$\chi$ properly colors~$(C_1,F_2)$ if all vertices in~$F_2 \cup C_1 \cup N_G(F_2)$ are assigned the correct color. Hence at most~$|F_2| + |C_1 \cup N_G(F_2)| \leq 2f_r(k) + k + 1 = s$ vertices need to have the correct color; here we use that \cref{def:simple} ensures~$|C_1 \cup N_G(F_2)| \leq k+1$. By construction of~$\mathcal{U}$, there is a~$Q \in \mathcal{U}$ such that~$\chi_Q$ assigns the correct colors to these vertices. Hence~$\chi_Q$ properly colors~$(C_1,F_2)$ and by \cref{lem:colorcoded-fvc} a reducible FVC is returned.
\end{proof}

\subsection{Reducing feedback vertex cuts} \label{sec:fvc-kernel}
We introduce reduction rules inspired by existing kernelization algorithms~\cite{BodlaenderD10, Thomasse10} and apply them on the subgraph~$G[C \cup F]$ for a FVC~$(C,F)$ in~$G$. We give 5 reduction rules and show at least one is applicable if~$|F| > f_r(|C|)$. The rules reduce the number of vertices~$v \in F$ with~$\deg_G(v) < 3$ or reduce~$e(C,F)$. The following lemma shows that this is sufficient to reduce the size of~$F$.

\begin{lemma}\label{lem:degreesum}
 Let~$G$ be a multigraph with minimum degree at least~$3$ and let~$(C,F)$ be a FVC in~$G$. We have~$|F| \leq e(C,F)$.
\end{lemma}
\begin{proof}
 We first show that the claim holds if~$G[F]$ is a tree. For all~$i \geq 0$ let~$V_i := \{v \in F \mid \degree_{G[F]}(v) = i\}$. Note that since~$G[F]$ is connected,~$V_0 \neq \emptyset$ if and only if~$|F|=1$ (and then the claim is trivially true). We therefore assume~$V_0 = \emptyset$. We first show~$|V_{\geq 3}| < |V_1|$, using the fact that in any tree the number of vertices is one larger than the number of edges.
 \begin{align*}
  2|E(G[F])| &= \sum_{v \in F} \degree_{G[F]}(v) \geq |V_1| + 2|V_2| + 3|V_{\geq 3}| & \text{$v\in V_i$ has degree~$i$} \\
  2|E(G[F])| &= 2(|V(G[F])| - 1) = 2|V_1| + 2|V_2| + 2|V_{\geq 3}| - 2 & \text{since~$G[F]$ is a tree}
 \end{align*}
 We obtain~$|V_1| + 2|V_2| + 3|V_{\geq 3}| \leq 2|V_1| + 2|V_2| + 2|V_{\geq 3}| - 2$ hence~$|V_{\geq 3}| < |V_1|$.
 By assumption, all vertices in~$F$ have degree at least~$3$ in~$G$, so each vertex of~$V_1$ is incident on at least~$2$ edges outside~$G[F]$ and each vertex of~$V_2$ is incident to at least~$1$. We therefore find~$e(V(G) \setminus F,F) \geq 2|V_1| + |V_2| > |V_1| + |V_2| + |V_{\geq 3}| = |F|$. By definition of FVC there is at most one vertex in~$F$ that has an edge to~$V(G) \setminus (C \cup F)$, all other edges must be between~$C$ and~$F$. We obtain~$1+e(C,F) > |F|$ for the case that~$G[F]$ is a tree.
 
 If~$G[F]$ is a forest, then let~$F_1, \ldots, F_\ell$ be the vertex sets of the trees in~$G[F]$. Since~$(C,F_i)$ is a FVC in~$G$ for all~$1 \leq i \leq \ell$, we know~$e(C,F_i) \geq |F_i|$ for all~$1 \leq i \leq \ell$, and since~$F_1, \ldots, F_\ell$ is a partition of~$F$ we conclude~$e(C,F) = \sum_{1 \leq i \leq \ell} e(C,F_i) \geq \sum_{1 \leq i \leq \ell} |F_i| = |F|$.
\end{proof}

Next, we give the reduction rules. These rules apply to a multigraph~$G$ and yield a new multigraph~$G'$ and vertex set~$S \subseteq V(G)\setminus V(G')$.

\begin{definition} \label{def:safe}
A reduction rule with output~$G'$ and~$S$ is \emph{FVS-safe} if for any minimum feedback vertex set~$S'$ of~$G'$, the set~$S \cup S'$ is a minimum feedback vertex set of~$G$.

A reduction rule is \emph{antler-safe} if for all~$z\geq0$ and any $z$-antler~$(C,F)$ in~$G$, there exists a $z$-antler~$(C',F')$ in~$G'$ with~$C' \cup F' = (C \cup F) \cap V(G')$ and~$|C'| = |C| - |(C \cup F) \cap S|$.
\end{definition}

The first type of safety ensures that finding vertices that belong to an optimal FVS of~$G'$ leads to finding vertices that belong to an optimal FVS of~$G$. The second type of safety ensures that if, in the original graph~$G$, there is a simple certificate that some~$k$ vertices belong to an optimal solution (in the form of a $z$-antler of width~$k$), then in~$G'$ there is still a simple certificate for the membership of~$k-|S|$ vertices in an optimal solution. (The reduction step already identified~$|S|$ of them.)

\begin{reduction} \label{op:edge-multiplicity}
 If~$u,v \in V(G)$ are connected by more than two edges, remove all but two of these edges to obtain~$G'$ and take~$S := \emptyset$.
\end{reduction}
\begin{reduction} \label{op:contract-edge}
 If~$v \in V(G)$ has degree exactly~$2$ and no self-loop, obtain~$G'$ by removing~$v$ from~$G$ and adding an edge~$e$ with~$\iota(e) = N_G(v)$. Take~$S := \emptyset$.
\end{reduction}

\Cref{op:edge-multiplicity,op:contract-edge} are well established and FVS-safe. Additionally \cref{op:edge-multiplicity} can easily be seen to be antler-safe. To see that \cref{op:contract-edge} is antler-safe, consider a $z$-antler~$(C,F)$ in~$G$ for some~$z\geq0$. If~$v \not\in C$ it is easily verified that~$(C,F\setminus\{v\})$ is a $z$-antler in~$G'$. If~$v \in C$ pick a vertex~$u \in N_G(v) \cap F$ and observe that~$(\{u\} \cup C \setminus \{v\}, F \setminus \{u\})$ is a $z$-antler in~$G'$.

The following rule will be useful in several scenarios. We will not be able to apply it exhaustively.

\begin{reduction} \label{op:remove-antler}
 If~$(C,F)$ is an antler in~$G$, then set~$G' := G - (C \cup F)$ and~$S := C$.
\end{reduction}
\begin{lemma}
 \Cref{op:remove-antler} is FVS-safe and antler-safe.
\end{lemma}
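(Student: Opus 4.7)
The plan is to reduce both claims directly to machinery already established earlier in the section, so the proof will be short: FVS-safety falls out of \cref{obs:remove-antler}, while antler-safety is essentially a repackaging of \cref{lem:antler_diff} combined with \cref{prop:intersection_antlers}.

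For FVS-safety, let $S'$ be a minimum FVS of $G' = G - (C\cup F)$. First I will argue that $C \cup S'$ is a FVS of $G$: because $(C,F)$ is a FVC, \cref{obs:fvc-basics} tells us every cycle of $G$ that touches $F$ also touches $C$, so every cycle of $G - C$ lies entirely inside $V(G) \setminus (C \cup F) = V(G')$, hence is hit by $S'$. Since $S' \subseteq V(G')$ is disjoint from $C$, we have $|C \cup S'| = |C| + |S'| = |C| + \fvs(G')$, and by \cref{obs:remove-antler} this is exactly $\fvs(G)$. So $C \cup S' = S \cup S'$ is a minimum FVS of $G$.

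For antler-safety, let $(C_1,F_1)$ be a $z$-antler in $G$. Applying \cref{lem:antler_diff} with the antler $(C,F)$ in place of $(C_2,F_2)$ immediately gives that
\[
  (C_1', F_1') := \bigl(C_1 \setminus (C\cup F),\; F_1 \setminus (C\cup F)\bigr)
\]
is a $z$-antler in $G - (C\cup F) = G'$. The identity $C_1' \cup F_1' = (C_1 \cup F_1) \cap V(G')$ is immediate from the definition of $G'$. For the size condition, expand
\[
  |C_1'| = |C_1| - |C_1 \cap C| - |C_1 \cap F|,
\]
using $C \cap F = \emptyset$. By \cref{prop:intersection_antlers} applied to the two antlers $(C_1,F_1)$ and $(C,F)$ we have $|C_1 \cap F| = |F_1 \cap C|$, so
\[
  |C_1'| = |C_1| - |C_1 \cap C| - |F_1 \cap C| = |C_1| - |(C_1 \cup F_1) \cap C| = |C_1| - |(C_1 \cup F_1) \cap S|,
\]
which is exactly what antler-safety demands.

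There is really no single hard step here: the conceptual work was carried out in \cref{prop:intersection_antlers,lem:antler_diff,obs:remove-antler}. The only thing to be careful about is bookkeeping the set-theoretic identities so the cardinalities line up, in particular invoking \cref{prop:intersection_antlers} at the right moment to swap $|C_1 \cap F|$ for $|F_1 \cap C|$; without this swap the size formula required by the definition of antler-safety would not match.
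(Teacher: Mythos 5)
Your proof is correct and the antler-safety argument is essentially identical to the paper's: both invoke \cref{lem:antler_diff} and then derive the cardinality formula via \cref{prop:intersection_antlers}. For FVS-safety you take a marginally more elementary route (showing directly that $C \cup S'$ is a FVS via \cref{obs:fvc-basics} and then matching its size using \cref{obs:remove-antler}), whereas the paper packages the same reasoning through \cref{lem:antler_combine} applied to $(C,F)$ and the trivial antler $(Z, V(G'-Z))$; the two are equivalent in substance.
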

\begin{proof}
 To show \cref{op:remove-antler} is FVS-safe, let~$Z$ be a minimum FVS of~$G'$. Then~$Z \cup S$ is a FVS of~$G$, since~$Z$ breaks all cycles of~$G'$ and~$C$ breaks any cycle intersecting~$C \cup F$. This FVS is minimum since~$|C| = \fvs(G[C \cup F])$ by definition of antler. 
 
 To show \cref{op:remove-antler} is antler-safe, let~$z\geq0$ and let~$(\hat{C},\hat{F})$ be an arbitrary $z$-antler in~$G$. By \cref{lem:antler_diff}, the pair~$(\hat{C}\setminus(C \cup F), \hat{F}\setminus(C \cup F))$ is a $z$-antler in~$G' = G-(C \cup F)$. We deduce:
 \begin{align*}
  |\hat{C} \setminus (C \cup F)|
  &= |\hat{C}| - |\hat{C} \cap C| - |\hat{C} \cap F|
    &&\text{since $C \cap F = \emptyset$}\\
  &= |\hat{C}| - |\hat{C} \cap C| - |C \cap \hat{F}|
    &&\text{by \cref{prop:intersection_antlers}}\\
  &= |\hat{C}| - |(\hat{C} \cap C) \cup (C \cap \hat{F})|
    &&\text{since $\hat{C} \cap \hat{F} = \emptyset$}\\
  &= |\hat{C}| - |(\hat{C} \cup \hat{F}) \cap C|\\
  &= |\hat{C}| - |(\hat{C} \cup \hat{F}) \cap S|.
  \tag*{\qedhere}
 \end{align*}
\end{proof}

The following reduction rule uses a graph structure called flower. For a vertex~$v \in V(G)$ and an integer~$k$, a \emph{$v$-flower} of \emph{order}~$k$ in a multigraph~$G$ is a collection of~$k$ cycles in~$G$ whose vertex sets pairwise intersect exactly in~$v$.

\begin{reduction} \label{op:v-flower}
 If~$(C,F)$ is a FVC in~$G$ and for some~$v \in C$  the subgraph~$G[F \cup \{v\}]$ contains a $v$-flower of order~$|C|+1$, then set~$G' := G - v$ and~$S := \{v\}$.
\end{reduction}
\begin{lemma}
 \Cref{op:v-flower} is FVS-safe and antler-safe.
\end{lemma}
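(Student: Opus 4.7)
The plan is to handle FVS-safety first and then leverage it for antler-safety. For FVS-safety I would establish the stronger claim that every minimum feedback vertex set of~$G$ contains~$v$: granted this, for any minimum FVS~$S'$ of~$G' = G-v$ the set~$\{v\} \cup S'$ is a FVS of~$G$ of size~$\fvs(G')+1$, while removing~$v$ from any minimum FVS of~$G$ yields a FVS of~$G'$ of size~$\fvs(G)-1$, so the two bounds match and~$\{v\} \cup S'$ is optimal in~$G$.

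The crux of this global claim is an exchange argument. Suppose for contradiction that~$X$ is a minimum FVS of~$G$ with~$v \notin X$. The~$|C|+1$ cycles of the given~$v$-flower lie in~$G[F \cup \{v\}]$ and are pairwise vertex-disjoint outside~$v$, so since~$v \notin X$ they force~$|X \cap F| \geq |C|+1$. I would then consider~$X' := (X \setminus F) \cup C$, of size at most~$|X| - (|C|+1) + |C| < |X|$, and verify it is a FVS: any cycle of~$G$ avoiding~$X'$ must avoid~$C$, hence by~\cref{obs:fvc-basics} also avoid~$F$, so it lies entirely in~$V(G) \setminus (C \cup F)$ where it would be avoided by~$X$ as well, contradicting that~$X$ is a FVS.

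For antler-safety, consider an arbitrary~$z$-antler~$(\hat{C},\hat{F})$ in~$G$. The key structural observation I would prove is that~$v \notin \hat{F}$. Let~$Y$ be a minimum FVS of~$G - (\hat{C} \cup \hat{F})$. Using~\cref{obs:fvc-basics} applied to~$(\hat{C},\hat{F})$, any cycle in~$G$ avoiding~$\hat{C}$ also avoids~$\hat{F}$ and is therefore hit by~$Y$, so~$\hat{C} \cup Y$ is a FVS of~$G$; its size matches~$\fvs(G)$ by~\cref{obs:remove-antler}. The FVS-safe claim then forces~$v \in \hat{C} \cup Y$, and since~$Y \subseteq V(G) \setminus (\hat{C} \cup \hat{F})$ and~$\hat{C} \cap \hat{F} = \emptyset$, we conclude~$v \in \hat{C}$ or~$v \notin \hat{C} \cup \hat{F}$.

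The remaining case analysis is routine. If~$v \notin \hat{C} \cup \hat{F}$, the pair~$(\hat{C},\hat{F})$ itself certifies antler-safety: it is a FVC in~$G'$ by~\cref{obs:subfvc}, its~$\hat{C}$-certificate lives in~$G[\hat{C} \cup \hat{F}] = G'[\hat{C} \cup \hat{F}]$, and the size requirement holds with~$|(\hat{C} \cup \hat{F}) \cap S| = 0$. If~$v \in \hat{C}$, then~\cref{obs:subantler} with~$X = \{v\}$ shows that~$(\hat{C} \setminus \{v\}, \hat{F})$ is a~$z$-antler in~$G - v = G'$, satisfying both~$|\hat{C}'| = |\hat{C}|-1$ and~$\hat{C}' \cup \hat{F}' = (\hat{C} \cup \hat{F}) \setminus \{v\}$. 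The main obstacle I expect is the exclusion of~$v \in \hat{F}$: rather than attempting a direct swap (replacing some~$u \in \hat{C}$ on a flower cycle by~$v$ in the forest), which may destroy acyclicity of~$G[\hat{F}']$, the plan is to reduce this structural constraint to the already-proved global property of minimum feedback vertex sets via~\cref{obs:remove-antler}.
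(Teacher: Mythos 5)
Your proposal is correct and follows essentially the same route as the paper: first establish that every minimum FVS of $G$ contains $v$ via the same exchange argument swapping $X\cap F$ for $C$, then derive FVS-safety, and for antler-safety show $v\notin\hat F$ (using \cref{obs:remove-antler} to build a minimum FVS $\hat C\cup Y$ avoiding $\hat F$) and split into the cases $v\in\hat C$ (handled by \cref{obs:subantler}) and $v\notin\hat C\cup\hat F$ (where $(\hat C,\hat F)$ carries over unchanged). The only difference is that you spell out the construction of the $v$-avoiding minimum FVS explicitly where the paper cites \cref{obs:remove-antler} more tersely; the substance is identical.
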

\begin{proof}
 We first show that any minimum FVS in~$G$ contains~$v$. Let~$X$ be a minimum FVS in~$G$. If~$v \not\in X$ then~$|F \cap X| > |C|$ since~$G[F \cup \{v\}]$ contains a $v$-flower of order~$|C|+1$. Take~$X' := C \cup (X \setminus F)$, clearly~$|X'| < |X|$ so~$G-X'$ must contain a cycle since~$X$ was minimum. This cycle must contain a vertex from~$X \setminus X' \subseteq F$, so by \cref{obs:fvc-basics} this cycle must contain a vertex from~$C$, but~$C \subseteq X'$. Contradiction.
 
 To show \cref{op:v-flower} is FVS-safe, suppose~$Z$ is a minimum FVS of~$G' = G-v$. Clearly~$Z \cup \{v\}$ is a FVS in~$G$. To show that~$Z \cup \{v\}$ is minimum, suppose~$Z'$ is a smaller FVS in~$G$. We know~$v \in Z$ so~$Z'\setminus\{v\}$ is a FVS in~$G-v$, but~$|Z'\setminus\{v\}| < |Z|$ contradicting optimality of~$Z$.
 
 To show \cref{op:v-flower} is antler-safe, suppose~$(\hat{C},\hat{F})$ is a $z$-antler in~$G$ for some~$z\geq0$. We show~$(\hat{C}\setminus\{v\}, \hat{F})$ is a $z$-antler in~$G'$. If~$v \in \hat{C}$ then this follows directly from \cref{obs:subantler}, so suppose~$v \not\in \hat{C}$. Note that~$v \in \hat{F}$ would contradict that any minimum FVS in~$G$ contains~$v$ by \cref{obs:remove-antler}. So~$G[\hat{C} \cup \hat{F}] = G'[\hat{C} \cup \hat{F}]$ and~$(\hat{C}\setminus\{v\}, \hat{F}) = (\hat{C},\hat{F})$ is a FVC in~$G' = G-v$ by \cref{obs:subfvc}, hence~$(\hat{C}\setminus\{v\}, \hat{F})$ is a $z$-antler in~$G'$.
 %
\end{proof}

The final reduction rule is the most technical. The underlying ideas are related to those in \cref{lem:small-cert}, but we target a stronger outcome. Rather than just bounding the number of trees in~$G[F]$, the goal is to reduce the number of edges that a given vertex~$v \in C$ has into~$F$. In the setting that~$v$ has many neighbors in a single tree of~$G[F]$, this requires additional work compared to \cref{lem:small-cert}. 

In the following rule and its analysis, we refer to an acyclic connected component~$T$ of a graph~$H$ as a \emph{tree~$T$ in~$H$}; this should not be confused with the concept of an arbitrary subgraph of~$H$ that forms a tree.

\begin{reduction} \label{op:remove-edge}
 Let~$(C,F)$ be a FVC in~$G$,~$v \in C$, and~$X \subseteq F$ such that~$G[F \cup \{v\}] - X$ is acyclic. If~$T$ is a tree in~$G[F]-X$ containing a vertex~$w \in N_G(v)$ such that:
 \begin{itemize}
     \item for each~$u \in N_G(T)\setminus\{v\}$ there are more than~$|C|$ other trees~$T' \neq T$ in~$G[F]-X$ for which~$\{u,v\} \subseteq N_G(T')$,
 \end{itemize}
 then take~$S := \emptyset$ and obtain~$G'$ by removing the unique edge between~$v$ and~$w$, and adding double-edges between~$v$ and~$u$ for all~$u \in N_G(V(T))\setminus\{v\}$.
\end{reduction}
\begin{lemma}
 \Cref{op:remove-edge} is FVS-safe and antler-safe.
\end{lemma}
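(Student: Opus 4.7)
The plan is to prove FVS-safety and antler-safety separately. Both rely on one structural fact, which I would establish first: because $G[F \cup \{v\}] - X$ is acyclic and $V(T) \subseteq F \setminus X$, the vertex $w$ is the unique neighbor of $v$ inside $V(T)$, so every $v$-to-$w$ path in $G$ other than the removed edge $vw$ must leave $V(T)$ through some $u \in N_G(V(T))\setminus\{v\}$.

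For FVS-safety I would prove $\fvs(G) = \fvs(G')$ by a pair of exchange arguments. For $\fvs(G) \geq \fvs(G')$, take a minimum FVS $Z$ of $G$. The only cycles of $G'$ absent from $G$ are the $2$-cycles formed by the added double edges, each of which needs $v$ or its $u$-endpoint in $Z$. If $v \notin Z$ and some $u \in N_G(V(T))\setminus\{v\}$ avoids $Z$, the more than $|C|$ trees $T'$ from the hypothesis each provide a $v$-$u$ path, and two $Z$-free such paths close into a cycle; hence $Z$ intersects all but at most one of these trees, giving $|Z \cap F| > |C|$, but then $(Z \setminus F) \cup C$ is a strictly smaller FVS of $G$ (using that $(C,F)$ is a FVC, so cycles of $G - C$ avoid $F$), contradicting minimality of $Z$. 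The direction $\fvs(G) \leq \fvs(G')$ is symmetric: a minimum FVS $Z'$ of $G'$ contains $v$ or all of $N_G(V(T))\setminus\{v\}$ (forced by the double edges), and every cycle of $G$ absent from $G'$ uses $vw$, so by the key observation exits $V(T)$ through some $u$ already in $Z'$.

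For antler-safety, given a $z$-antler $(\hat C, \hat F)$ in $G$, I would construct a $z$-antler in $G'$ with the same support $\hat C \cup \hat F$ and width $|\hat C|$, by case analysis on the location of $v$. The easy case is $v \in \hat C$: the modified edges are incident to $\hat C$, so $G'[\hat F] = G[\hat F]$ remains a forest and the FVC condition — which only counts edges from $\hat F$-trees into $V(G)\setminus(\hat C \cup \hat F)$ — is untouched; the $\hat C$-certificate $H$ can be adapted by dropping $vw$ if present and, should this drop $\fvs(H)$ by one, adding back a new double edge from the set of added edges lying inside $G'[\hat C \cup \hat F]$ to restore the FVS value to $|\hat C|$. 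The substantive cases are $v \in \hat F$ and $v \notin \hat C \cup \hat F$: here the added double edges $vu$ might place cycles across $\hat F$ or connect $v$ to many $\hat F$-trees, potentially violating the FVC condition. I would reconcile the hypothesis, which supplies more than $|C|$ trees $T'$ each adjacent to both $v$ and $u$, with the antler's FVC condition that caps each $\hat F$-tree's external edges at one, to force each relevant $u$ into $\hat C$, and then repair the certificate as above.

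The main obstacle will be these last two antler-safety cases, where the graph modification affects the very structures the antler definition controls. The key tool is that the abundance of trees $T'$ given by the hypothesis, combined with the tight external-edge budget imposed by the FVC condition on the antler, eliminates almost all placements of the modified edges' endpoints inside $\hat F$; the structural lemmas \cref{lem:antler_diff} and \cref{prop:intersection_antlers} should then make it possible to assemble the desired $z$-antler in $G'$ with the prescribed support and width.
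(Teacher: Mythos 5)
Your FVS-safety plan takes a different route from the paper: you run an exchange argument in both directions, whereas the paper proves only one direction directly (any minimum FVS $Z'$ of $G'$ is already a FVS of $G$) and obtains the other from antler-safety, by applying it to the antler $(Z, V(G-Z))$ for a minimum FVS $Z$ of $G$ to conclude $\fvs(G') \leq \fvs(G)$. Your version can be made to work, but as written has a counting slip in the $\fvs(G)\geq\fvs(G')$ direction: you invoke the ``more than $|C|$ other trees $T'$'' and conclude $|Z\cap F| > |C|$ once all but one of those paths is hit, yet with $|C|+1$ other trees this only yields $|Z\cap F|\geq |C|$, which does not make $(Z\setminus F)\cup C$ strictly smaller than $Z$. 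You need to also count the $v$-to-$u$ path through $T$ itself (valid since $w\in V(T)\cap N_G(v)$ and $u\in N_G(T)$), which raises the number of internally vertex-disjoint paths to at least $|C|+2$ and yields the required strict inequality.

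The antler-safety sketch has a more serious gap: the certificate step is a plan, not an argument. ``Drop $vw$ from $H$ and add back one of the new double edges to restore the FVS value'' does not establish (i) that any such double edge $vu$ lies inside $G'[\hat C\cup\hat F]$ --- when $v\in\hat C$, \cref{claim:op4:disjointpaths} places no constraint on where $u$ lies; (ii) that a single added edge restores the FVS of that certificate component to exactly $|\hat C\cap V(H)|$ rather than over- or under-shooting; nor (iii) how \cref{lem:antler_diff} and \cref{prop:intersection_antlers} are supposed to help, since those lemmas concern removing a vertex set and say nothing about edge replacements within a fixed vertex set. The paper instead keeps the original components: for each component $H$ of the old certificate it proves that $Y:=V(H)\cap\hat C$ remains a minimum FVS of the modified graph on $V(H)$, with the added double edges used in the minimality argument to rule out any smaller FVS $Y'$ by exhibiting a surviving cycle. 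Some concrete argument of this kind --- pinning down exactly which graph witnesses the new certificate and why its feedback vertex number is unchanged --- is the missing core of your antler-safety proof.
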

\begin{proof}
 Assume the stated conditions hold. We first prove the following claim: 
 
  \begin{claim}\label{claim:op4:disjointpaths}
   Let~$(\hat{C},\hat{F})$ be a $z$-antler in~$G$ for some~$z\geq0$. For all~$u \in N_G(T)\setminus\{v\}$ we have:
   \begin{itemize}
       \item $v \in \hat{F}$ implies~$u \in \hat{C}$, and
       \item $u \in \hat{F}$ implies~$v \in \hat{C}$.
   \end{itemize}
  \end{claim}
  \begin{claimproof}
  Each tree~$T'$ of~$G[F]-X$ with~$\{u,v\} \subseteq N_G(T')$ supplies a path between~$u$ and~$v$, hence there are more than~$|C|+1$ internally vertex-disjoint paths between~$u$ and~$v$ in~$G$. We prove that~$v \in \hat{F}$ implies~$u \in \hat{C}$; the proof of the second implication is symmetric. 
  
  Assume~$v \in \hat{F}$ and suppose for contradiction that~$u \not\in \hat{C}$. All except possibly one of the disjoint paths between~$u$ and~$v$ must contain a vertex in~$\hat{C}$ by \cref{obs:fvc-basics}, since any two disjoint paths form a cycle containing a vertex from~$\hat{F}$. Let~$Y \subseteq \hat{C}$ be the set of vertices in~$\hat{C}$ that are in a tree~$T'$ of~$G[F]-X$ with~$\{u,v\} \subseteq N_G(T')$, so~$|Y| > |C|$. Then~$|(C \cup \hat{C}) \setminus Y| < |\hat{C}|$. The graph~$G[\hat{C} \cup \hat{F}] - ((C \cup \hat{C}) \setminus Y)$ is acyclic, since any cycle in~$G$ containing a vertex from~$Y \subseteq F$ also contains a vertex from~$C$ by \cref{obs:fvc-basics}. This contradicts that~$\hat{C}$ is a (minimum) FVS in~$G[\hat{C} \cup \hat{F}]$, which follows from the assumption that~$(\hat{C},\hat{F})$ is an antler in~$G$.
  \end{claimproof}

 \subparagraph*{Antler-safe} We now prove that \cref{op:remove-edge} is antler-safe. Suppose~$(\hat{C},\hat{F})$ is a $z$-antler in~$G$ for some~$z\geq0$; we show that~$(\hat{C},\hat{F})$ is also a $z$-antler in~$G'$. We distinguish two cases, depending on the status of~$v$. Suppose first that~$v \not\in \hat{C} \cup \hat{F}$, then~$G[\hat{C} \cup \hat{F}] = G'[\hat{C} \cup \hat{F}]$ as~$G$ and~$G'$ only differ on edges incident to~$v$. It remains to show that for each tree~$T'$ in~$G'[\hat{F}]$ we have~$e_{G'}(T',G' - (\hat{C} \cup \hat{F})) \leq 1$. Suppose~$T'$ is a tree in~$G'[\hat{F}]$ with~$e_{G'}(T',G' - (\hat{C} \cup \hat{F})) > 1$. Since~$e_G(T',G - (\hat{C} \cup \hat{F})) \leq 1$ we know that at least one of the edges added between~$v$ and some~$u \in N_G(T)$ has an endpoint in~$V(T') \subseteq \hat{F}$. Since~$v \not\in \hat{F}$ we have~$u \in \hat{F}$, so~$v \in \hat{C}$ by \cref{claim:op4:disjointpaths} contradicting our assumption~$v \not\in \hat{C} \cup \hat{F}$.
 
 In the remainder, we may assume~$v \in \hat{C} \cup \hat{F}$. We first show that~$(\hat{C}, \hat{F})$ is a FVC in~$G'$. If~$v \in \hat{C}$ this is clearly the case, so suppose~$v \in \hat{F}$. From \cref{claim:op4:disjointpaths} it follows that~$N_G(T)\setminus\{v\} \subseteq \hat{C}$, so all edges added in~$G'$ are incident to vertices in~$\hat{C}$ hence~$(\hat{C},\hat{F})$ is still a FVC in~$G'$.
 
 We now show that~$G'[\hat{C} \cup \hat{F}]$ contains a $\hat{C}$-certificate of order~$z$. We know~$G[\hat{C} \cup \hat{F}]$ contains a $\hat{C}$-certificate of order~$z$. Let~$H$ be an arbitrary component of this certificate. We define a corresponding component~$H'$ for a~$\hat{C}$-certificate in~$G'[\hat{C} \cup \hat{F}]$, as follows.
 \begin{itemize}
     \item If~$v \notin V(H)$ or~$w \notin V(H)$, then define~$H' = H$ and observe that~$H'$ is a subgraph of~$G'$ since the only edge that was removed from~$G$ when building~$G'$ connects~$v$ to~$w$.
     \item If~$\{v,w\} \subseteq V(H)$, then let~$H'$ be the graph obtained from~$H$ by removing the edge between~$v$ and~$w$, while inserting a double-edge between~$v$ and all vertices~$u \in V(H) \cap (N_G(V(T)) \setminus \{v\})$. Note that~$H'$ is a subgraph of~$G'$.
 \end{itemize}
 Observe that we have~$V(H) = V(H')$ in both cases. To argue that the union of the components~$H'$ is the desired~$\hat{C}$-certificate of order~$z$, it suffices to consider each component individually and to prove that~$Y := V(H) \cap \hat{C} = V(H') \cap \hat{C}$ is a minimum FVS of~$H'$. Note that~$Y$ is a minimum FVS in~$H$ since~$H$ is a component of a~$\hat{C}$-certificate. 
 
 If~$v \not\in V(H)$ or~$w \notin V(H)$, then~$Y$ is trivially a minimum FVS of~$H'$ since~$H' = H$. So suppose~$\{v,w\} \in V(H)$. First we argue that~$H'-Y$ is acyclic. This is easily seen to be true when~$v \in Y$ since~$H$ and~$H'$ only differ in edges incident to~$v$, so suppose~$v \not\in Y$. Then~$v \not\in \hat{C}$ hence~$v \in \hat{F}$ and by \cref{claim:op4:disjointpaths} we have~$N_G(T)\setminus\{v\} \subseteq \hat{C}$. It follows that~$V(H) \cap (N_G(T)\setminus\{v\}) \subseteq Y$ so clearly~$H'-Y$ is acyclic since all edges in~$H'$ that are not in~$H$ are incident to a vertex in~$Y$, while~$H-Y$ is acyclic.
 
 To show~$Y$ is a \emph{minimum} FVS of~$H'$, suppose~$H'$ has a FVS~$Y'$ with~$|Y'| < |Y|$. Since~$H-v$ is a subgraph of~$H'$ we know~$H-(Y' \cup \{v\})$ is acyclic, but since~$|Y'| < |Y| = \fvs(H)$ we also know~$H-Y'$ contains a cycle. This cycle must contain the edge~$\{v,w\}$ since otherwise this cycle is also present in~$H' - Y'$. Then there must be some~$u \in V(H) \cap (N_G(T)\setminus\{v\})$ on the cycle to leave the tree~$T$ in~$G$, so~$u,v \not\in Y'$. But~$H'$ contains a double-edge between~$u$ and~$v$, so~$H'-Y'$ contains a cycle, contradicting that~$Y'$ is a FVS in~$H'$.
 
 \subparagraph*{FVS-safe} 
 We finally show \cref{op:remove-edge} is FVS-safe. Let~$Z'$ be a minimum FVS in~$G'$, and suppose~$Z'$ is not a FVS in~$G$. Then~$G-Z'$ contains a cycle. This cycle contains the edge~$\{v,w\}$ since otherwise~$G'-Z'$ also contains this cycle. Since~$G'$ contains double-edges between~$v$ and all~$u \in N_G(T)\setminus\{v\}$ and~$v \not\in Z'$, it follows that~$N_G(T)\setminus\{v\} \subseteq Z'$, but then no cycle in~$G-Z'$ can intersect~$T$ and~$\{v,w\}$ is not part of a cycle in~$G-Z'$. We conclude by contradiction that~$Z'$ is a FVS in~$G$. 
 
 To prove optimality, consider a minimum FVS~$Z$ in~$G$ and observe that~$(Z,V(G-Z))$ is an antler in~$G$. Since \cref{op:remove-edge} is antler-safe we know~$G'$ contains an antler~$(C',F')$ with~$C' \cup F' = (Z \cup V(G-Z)) \cap V(G') = V(G')$ and~$|C'| = |Z| - |(Z \cup V(G-Z)) \cap S| = |Z|$. Since~$C' \cup F' = V(G')$ we know~$C'$ is a FVS in~$G'$, hence~$\fvs(G') \leq |C'| = |Z|$, therefore~$|Z'| = \fvs(G') \leq |Z| = \fvs(G)$.
\end{proof}

Before showing one of these reduction rules can be applied efficiently given a reducible FVC, we give the following lemma, which helps in particular with the application of \cref{op:v-flower}.

\begin{lemma}[{Cf.~\cite[Lemma 3.9]{RaymondT17}}]\label{lem:alt-fvs}
If~$v$ is a vertex in a multigraph~$G$ such that~$v$ does not have a self-loop and~$G-v$ is acyclic, then we can find in~$\Oh(n)$ time a set~$X \subseteq V(G)\setminus\{v\}$ such that~$G-X$ is acyclic and~$G$ contains a $v$-flower of order~$|X|$.
\end{lemma}
\begin{proof}
We prove the existence of such a set~$X$ and $v$-flower by induction on~$|V(G)|$. The inductive proof can easily be translated into a linear-time algorithm. If~$G$ is acyclic, output~$X = \emptyset$ and a $v$-flower of order~$0$. Otherwise, since~$v$ does not have a self-loop there is a tree~$T$ of the forest~$G - v$ such that~$G[V(T) \cup \{v\}]$ contains a cycle. Root~$T$ at an arbitrary vertex. For a node~$x \in V(T)$, let~$T_x$ denote the subtree of~$T$ rooted at~$x$. Consider a deepest node~$x$ in~$T$ for which the subgraph~$G[V(T_x) \cup \{v\}]$ contains a cycle~$C$. Then any feedback vertex set of~$G$ that does not contain~$v$, has to contain at least one vertex of~$T_x$. The choice of~$x$ as a deepest vertex implies that~$x$ lies on all cycles of~$G$ that intersect~$T_x$. By induction on~$G' := G - V(T_x)$ and~$v$, there is a feedback vertex set~$X' \subseteq V(G') \setminus \{v\}$ of~$G'$ and a $v$-flower in~$G'$ of order~$|X'|$. We obtain a $v$-flower of order~$|X'| + 1$ in~$G$ by adding~$C$, while~$X := X' \cup \{x\} \subseteq V(G) \setminus \{v\}$ is a feedback vertex set of size~$|X'| + 1$.
\end{proof}

Finally we show that when we are given a reducible FVC~$(C,F)$ in~$G$, then we can find and apply a rule in~$\Oh(n^3)$ time. With a more careful analysis better running time bounds can be shown, but this does not affect the final running time of the main algorithm.

\begin{lemma} \label{lem:fvc-kernel}
 Given a multigraph~$G$ and a reducible FVC~$(C,F)$ in~$G$, we can find and apply a rule in~$\Oh(n^3)$ time.
\end{lemma}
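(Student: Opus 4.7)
The plan is to test \cref{op:edge-multiplicity,op:contract-edge,op:remove-antler,op:v-flower,op:remove-edge} in turn, showing each test runs in $\Oh(n^2)$ time and that reducibility of $(C,F)$ forces at least one of them to apply. The main work lies in showing that \cref{op:remove-edge} applies once the simpler operations have been ruled out.

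First I would handle the local reductions. A linear scan of the incidence representation locates edge multiplicities greater than two, enabling \cref{op:edge-multiplicity}, and loopless vertices of degree $2$, enabling \cref{op:contract-edge}. Each self-loop yields a width-$1$ antler $(\{v\},\emptyset)$, and each vertex of degree at most $1$ yields a $0$-antler $(\emptyset,\{v\})$; both are removed via \cref{op:remove-antler}. All of these cases can be detected and applied within $\Oh(n^2)$ time.

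Once these fail, every vertex of $G$ is loop-free and has degree at least $3$. For each $v\in C$, the graph $G[F\cup\{v\}]-v=G[F]$ is a forest, so \cref{lem:alt-fvs} yields in $\Oh(n)$ time a set $X_v\subseteq F$ with $G[F\cup\{v\}]-X_v$ acyclic, together with a $v$-flower of order $|X_v|$ in $G[F\cup\{v\}]$. If $|X_v|\ge|C|+1$ for some $v\in C$ then \cref{op:v-flower} applies; iterating over $v$ costs $\Oh(n^2)$ overall.

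In the remaining case every $|X_v|\le|C|$, and I would show \cref{op:remove-edge} must apply. \Cref{lem:degreesum} gives $|F|\le e(C,F)$, and combined with the reducibility bound $|F|>f_r(|C|)=2|C|^3+3|C|^2-|C|$ this forces a very dense connection between $C$ and $F$. A two-level pigeonhole argument---first selecting a vertex $v\in C$ with many edges into $F$, then, after deleting $X_v$ (of size $\le|C|$ and absorbing only $\Oh(|C|)$ such edges thanks to the bound on multiplicities), selecting among the trees of $G[F]-X_v$ adjacent to $v$---produces a tree $T$ whose non-$v$ neighbors are each shared with more than $|C|$ other $v$-adjacent trees. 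This configuration is precisely the precondition of \cref{op:remove-edge}, which can then be located by direct enumeration over candidate triples $(v,X_v,T)$ in $\Oh(n^2)$ time. The main obstacle is tuning the two pigeonholes so that the cubic threshold $f_r(|C|)$ is exactly sufficient; the definition of $f_r$ is engineered to absorb both the $\Oh(|C|)$ loss from removing $X_v$ and the multiplicity factor of $|C|+1$ demanded by \cref{op:remove-edge}.
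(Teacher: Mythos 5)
Your roadmap is the same as the paper's: dispatch \cref{op:edge-multiplicity,op:contract-edge,op:remove-antler} for self-loops, low degree, and high multiplicity; pick a vertex $v\in C$ dense into $F$; compute $X$ via \cref{lem:alt-fvs}; handle a large $v$-flower with \cref{op:v-flower}; and finally argue by counting that \cref{op:remove-edge} must apply. What you explicitly leave as a plan, and what the lemma actually turns on, is the exact form of that final count. The paper makes it precise with a marking argument: since $e(C,F)\ge|F|>f_r(|C|)$, some $v\in C$ has more than $2|C|^2+3|C|-1$ edges into $F$; for each $u\in X\cup C\setminus\{v\}$ not joined to $v$ by a double-edge, mark up to $|C|+1$ trees $T'$ of $G[F]-X$ with $\{u,v\}\subseteq N_G(T')$, so at most $(|C|+1)(2|C|-1)=2|C|^2+|C|-1$ trees are marked; because $G[F\cup\{v\}]-X$ is acyclic, $v$ sends exactly one edge to each marked tree and at most two to each vertex of $X$, so at most $2|C|^2+3|C|-1$ of $v$'s edges into $F$ are accounted for, leaving a surplus edge $\{v,w\}$ into an unmarked tree $T$, which then satisfies the hypothesis of \cref{op:remove-edge} by the marking rule. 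Filling in this bookkeeping is exactly the ``tuning of two pigeonholes'' you flag as the remaining obstacle; your structure and the constants in $f_r$ are right, but this step is the one you still need to write out.
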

\begin{proof}
 We assume without loss of generality that~$C \subseteq N_G(F)$, since any vertex~$v \in C \setminus N_G(F)$ can be omitted from~$C$ to obtain another FVC satisfying the preconditions to the lemma.
 
 Note that if a vertex~$v \in V(G)$ has a self-loop then~$(\{v\}, \emptyset)$ is an antler in~$G$ and we can apply \cref{op:remove-antler}. If a vertex~$v$ has degree 0 or 1 then~$(\emptyset, \{v\})$ is an antler in~$G$. Hence \Cref{op:edge-multiplicity,,op:contract-edge,,op:remove-antler} can always be applied if the graph contains a self-loop, a vertex with degree less than 3, or more than 2 edges between two vertices. So assume~$G$ is a graph with no self-loops, minimum degree at least 3, and at most two edges between any pair of vertices.

 \cref{def:reducible:fvc} ensures~$|F| > f_r(|C|) = 2|C|^3 + 4|C|^2$ and \cref{lem:degreesum} yields~$e(C,F) \geq |F|$. Hence there must be a vertex~$v$ in~$C$ with more than~$\frac{1}{|C|} \cdot (2|C|^3+4|C|^2) = 2|C|^2+4|C|$ edges to~$F$. By applying \cref{lem:alt-fvs} to~$G[F \cup \{v\}]$ we obtain a set~$X \subseteq F$ such that~$G[F \cup \{v\}] - X$ is acyclic and~$G[F \cup \{v\}]$ contains a $v$-flower of order~$|X|$. Hence if~$|X| \geq |C|+1$ \cref{op:v-flower} can be applied, so assume~$|X| \leq |C|$. 

 The algorithm now carries out a marking scheme to find a tree~$T$ of~$G[F] - X$ to which \cref{op:remove-edge} can be applied. Initially, all trees are unmarked. Note that for each tree~$T$ of~$G[F] - X$ we have~$N_G(T) \subseteq N_G(F) \cup X$. We mark as follows.
 \begin{itemize}
     \item For each~$u \in (N_G(F) \cup X) \setminus \{v\}$, mark up to~$|C|+1$ trees~$T'$ in~$G[F]-X$ for which~$\{u,v\} \in N_G(T')$. 
     
     (That is, for each~$u$ we mark all such trees~$T'$ if there are fewer than~$|C|+1$, while we mark~$|C|+1$ arbitrarily chosen trees~$T'$ with~$\{u,v\} \in N_G(T')$ when there are more.)
 \end{itemize} 
 We prove that there is a tree~$T$ of~$G[F] - X$ that ends up unmarked, which contains a neighbor~$w$ of~$v$ to which \cref{op:remove-edge} can be applied. 
 
 We mark at most~$(|C|+1) \cdot |(N_G(F) \cup X) \setminus \{v\}|$ trees in total. By \cref{def:reducible:fvc}, we have~$|N_G(F)| \leq |C|+1$. Since~$|X| \leq |C|$ and we have~$v \in C \subseteq  N_G(F)$, we therefore mark at most~$(|C|+1) \cdot (|C| + 1 + |C| - 1) = 2|C|^2 + 2|C|$ trees in total over all choices of~$u$. Vertex~$v$ has exactly one edge to each marked tree: there is at least one since~$v \in N_G(T)$ and there cannot be more since~$G[F \cup \{v\}]-X$ is acyclic. In addition,~$v$ has at most~$2|X| \leq 2|C|$ edges to~$X \subseteq F$. Since~$v$ has more than~$2|C|^2 + 4|C|$ edges to~$F$, this means~$v$ has at least one edge to a vertex~$w \in F$ such that~$w \notin X$ and~$w$ does not belong to any marked tree. Let~$T$ be the tree of~$G[F] - X$ that contains~$w$ and observe that~$N_G(T) \subseteq N_G(F) \cup X$.
 
 For each~$u \in N_G(T) \setminus \{v\}$, the tree~$T$ was eligible for marking on account of the pair~$\{u,v\} \in N_G(T)$. Since we did not mark~$T$, there are more than~$|C|$ other trees~$T'$ in~$G[F]-X$ with~$\{u,v\} \subseteq N_G(T')$. This shows that \cref{op:remove-edge} is applicable to~$T$ and~$w$.
 
 It can easily be verified that all operations described can be performed in~$\Oh(n^3)$ time.
%
\end{proof}

\section{Finding and removing antlers} \label{sec:find-antler}
We will find antlers making use of color coding, using coloring functions of the form~$\chi \colon V(G) \cup E(G) \to \{\cF,\cC,\cR\}$ that assign colors to both vertices and edges. For all~$c \in \{\cF, \cC, \cR\}$ we define the inverse of the coloring, restricted to the vertices, as~$\inv{\chi}_V(c) = \inv{\chi}(c) \cap V(G)$.
For any integer~$z\geq0$, a $z$-antler~$(C,F)$ in a multigraph~$G$ is \emph{$z$-properly colored} by a coloring~$\chi$ if all of the following hold:
\begin{enumerate}[label=(\roman*)]
 \item $F \subseteq \inv{\chi}_V(\cF)$,
 \item $C \subseteq \inv{\chi}_V(\cC)$,\label[property]{prop:c:colored}
 \item $N_G(F)\setminus C \subseteq \inv{\chi}_V(\cR)$, and
 \item $G[C \cup F] - \inv{\chi}(\cR)$ is a $C$-certificate of order~$z$.
\end{enumerate}
By this definition, whether or not a $z$-antler~$(C,F)$ is $z$-properly colored depends only on the colors of the \emph{vertices}~$C \cup F \cup N_G(F)$ and on the colors of the \emph{edges} of~$G[C\cup F]$.

Recall that~$\inv{\chi}(\cR)$ can contain edges as well as vertices so for any subgraph~$H$ of~$G$ the multigraph~$H - \inv{\chi}(\cR)$ is obtained from~$H$ by removing both vertices and edges. It can be seen that if~$(C,F)$ is a $z$-antler, then there exists a coloring that $z$-properly colors it. Consider for example a coloring where a vertex~$v$ is colored~\cC (resp.~\cF) if~$v \in C$ (resp.~$v \in F$), all other vertices are colored~\cR, and for some $C$-certificate~$H$ of order~$z$ in~$G[C \cup F]$ all edges in~$H$ have color~\cF and all other edges have color~\cR. 
The property of a properly colored $z$-antler described in \cref{lem:propcolor} will be useful to prove correctness of the color coding algorithm.

\begin{lemma} \label{lem:propcolor}
 For any~$z\geq0$, if a $z$-antler~$(C,F)$ in multigraph~$G$ is $z$-properly colored by a coloring~$\chi$ and~$H$ is a component of~$G[C \cup F] - \inv{\chi}(\cR)$, then each component~$H'$ of~$H-C$ is a component of~$G[\inv{\chi}_V(\cF)] - \inv{\chi}(\cR)$ with~$N_{G-\inv{\chi}(\cR)}(H') \subseteq C \cap V(H)$.
\end{lemma}
\begin{proof}
 By \cref{prop:c:colored} we have~$C \cap \inv{\chi}_V(\cF) = \emptyset$, so~$N_{G-\inv{\chi}(\cR)}(H') \subseteq C \cap V(H)$ implies that~$N_{G[\inv{\chi}_V(\cF)]-\inv{\chi}(\cR)}(H') = \emptyset$ and hence that~$H'$ is a component of~$G[\inv{\chi}_V(\cF)] - \inv{\chi}(\cR)$. We show~$N_{G-\inv{\chi}(\cR)}(H') \subseteq C \cap V(H)$.
 
 Suppose~$v \in N_{G-\inv{\chi}(\cR)}(H')$ and let~$u \in V(H')$ be a neighbor of~$v$ in~$G-\inv{\chi}(\cR)$. Since~$V(H') \subseteq F$ we know~$u \in F$. Since~$(C,F)$ is $z$-properly colored we also have~$N_G(F)\setminus C = \inv{\chi}(\cR)$, hence~$N_G(u) \subseteq C \cup F \cup \inv{\chi}(\cR)$ so then~$N_{G-\inv{\chi}(\cR)}(u) \subseteq C \cup F$. By choice of~$u$ we have~$v \in N_{G-\inv{\chi}(\cR)}(u) \subseteq C \cup F$. So since~$u,v \in C \cup F$, and~$u$ and~$v$ are neighbors in~$G-\inv{\chi}(\cR)$ we know~$u$ and~$v$ are in the same component of~$G[C \cup F]-\inv{\chi}(\cR)$, hence~$v \in V(H)$.
 
 Suppose~$v \not\in C$, so~$v \in F$. Since also~$u \in F$ we know that~$u$ and~$v$ are in the same component of~$G[F]-\inv{\chi}(\cR)$, so~$v \in H'$, but then~$v \not\in N_{G-\inv{\chi}(\cR)}(H')$ contradicting our choice of~$v$. It follows that~$v \in C$ hence~$v \in C \cap V(H)$. Since~$v \in N_{G-\inv{\chi}(\cR)}(H')$ was chosen arbitrarily, $N_{G-\inv{\chi}(\cR)}(H') \subseteq C \cap V(H)$.
\end{proof}

We now show that a $z$-antler can be obtained from a suitable coloring~$\chi$ of the graph. The algorithm we give updates the coloring~$\chi$ and recolors any vertex or edge that is not part of a $z$-properly colored antler to color~\cR. We show that after repeatedly refining the coloring, the coloring that we arrive at identifies a suitable antler.

\begin{lemma}\label{lem:alg}
There exists an $n^{\Oh(z)}$-time algorithm taking as input an integer~$z\geq0$, a multigraph~$G$, and a coloring~$\chi$ and producing as output a (possibly empty) $z$-antler~$(C,F)$ in~$G$, such that for any $z$-antler~$(\hat{C},\hat{F})$ that is $z$-properly colored by~$\chi$ we have~$\hat{C} \subseteq C$ and~$\hat{F} \subseteq F$.
\end{lemma}
\begin{proof}
 We define a function~$W_\chi \colon 2^{\inv{\chi}_V(\cC)} \to 2^{\inv{\chi}_V(\cF)}$ as follows: for any~$C \subseteq \inv{\chi}_V(\cC)$ let~$W_\chi(C)$ denote the set of all vertices that are in a component~$H$ of~$G[\inv{\chi}_V(\cF)] - \inv{\chi}(\cR)$ for which~$N_{G - \inv{\chi}(\cR)}(H) \subseteq C$. Intuitively, the function~$W_\chi$ maps sets~$C$ of~$\cC$-colored vertices to those~$\cF$-colored vertices~$F$ for which~$F$ could be used to build a~$C$-certificate.

 The algorithm below updates the coloring~$\chi$ and recolors any vertex or edge that is not part of a $z$-properly colored antler to color~\cR.
 \begin{enumerate}
  \item \label[step]{antleralg:init}  Recolor all edges to color~\cR when one of its endpoints has color~\cR.
  \item \label[step]{antleralg:loop1} For each component~$H$ of~$G[\inv{\chi}_V(\cF)]$ we recolor all vertices of~$H$ and their incident edges to color~\cR if~$e(H,\inv{\chi}_V(\cR)) > 1$ or~$H$ is not a tree.
  \item \label[step]{antleralg:loop2} For each subset~$C \subseteq \inv{\chi}_V(\cC)$ of size at most~$z$, mark all vertices in~$C$ if~$\fvs(G[C \cup W_\chi(C)] - \inv{\chi}(\cR)) = |C|$.
  \item \label[step]{antleralg:repeat} If~$\inv{\chi}_V(\cC)$ contains unmarked vertices we recolor them to color~\cR, remove markings made in \cref{antleralg:loop2} and repeat from \cref{antleralg:init}.
  \item \label[step]{antleralg:return} If all vertices in~$\inv{\chi}_V(\cC)$ are marked in \cref{antleralg:loop2}, return the antler~$(\inv{\chi}_V(\cC),\inv{\chi}_V(\cF))$.
 \end{enumerate}
 
 \subparagraph*{Running time}
 The algorithm will terminate after at most~$n$ iterations since in every iteration the number of vertices in~$\inv{\chi}_V(\cR)$ increases. \Cref{antleralg:init,antleralg:loop1,antleralg:repeat,antleralg:return} can easily be seen to take no more than~$\Oh(n^2)$ time. \Cref{antleralg:loop2} can be performed in~$\Oh(4^z \cdot n^{z+1})$ time by checking for all~$\Oh(n^z)$ subsets~$C \in \inv{\chi}_V(\cC)$ of size at most~$z$ whether the subgraph~$G[C \cup W_\chi(C)] - \inv{\chi}(\cR)$ has feedback vertex number~$z$. This can be done in time~$\Oh(4^z \cdot n)$~\cite{IwataK21}.
 Hence the algorithm runs in time~$n^{\Oh(z)}$.
 
 \subparagraph*{Correctness}
 We show first that any $z$-properly colored antler prior to executing the algorithm remains $z$-properly colored after termination. Afterwards we argue that in \cref{antleralg:return}, the pair~$(\inv{\chi}_V(\cC),\inv{\chi}_V(\cF))$ is a $z$-antler in~$G$. Since~$(\inv{\chi}_V(\cC),\inv{\chi}_V(\cF))$ contains all properly colored antlers this proves correctness.
 
 \begin{claim}\label{claim:remain}
  All $z$-antlers~$(\hat{C},\hat{F})$ that are $z$-properly colored by~$\chi$ prior to executing the algorithm are also $z$-properly colored by~$\chi$ after termination of the algorithm.
 \end{claim}
 \begin{claimproof}
  To show the algorithm preserves properness of the coloring, we show that every individual recoloring preserves properness, that is, if an arbitrary $z$-antler is $z$-properly colored prior to the recoloring, it is also $z$-properly colored after the recoloring.
  
  Suppose an arbitrary $z$-antler~$(\hat{C},\hat{F})$ is $z$-properly colored by~$\chi$. An edge is only recolored when one of its endpoints has color~\cR. Since these edges are not in~$G[\hat{C} \cup \hat{F}]$, their color does not change whether~$(\hat{C},\hat{F})$ is colored $z$-properly. All other operations done by the algorithm are recolorings of vertices to color~\cR. We show that any time a vertex~$v$ is recolored we have that~$v \not\in \hat{C} \cup \hat{F}$, meaning~$(\hat{C},\hat{F})$ remains colored $z$-properly.
  
  Suppose~$v$ is recolored in \cref{antleralg:loop1}, then we know~$\chi(v) = \cF$, and~$v$ is part of a component~$H$ of~$G[\inv{\chi}_V(\cF)]$. Since~$\chi$ $z$-properly colors~$(\hat{C},\hat{F})$ we have~$\hat{F} \subseteq \inv{\chi}_V(\cF)$ but~$N_G(\hat{F}) \cap \inv{\chi}_V(\cF) = \emptyset$, so since~$H$ is a component of~$G[\inv{\chi}_V(\cF)]$ we know either~$V(H) \subseteq \hat{F}$ or~$V(H) \cap \hat{F} = \emptyset$. If~$V(H) \cap \hat{F} = \emptyset$ then clearly~$v \not\in \hat{C} \cup \hat{F}$. So suppose~$V(H) \subseteq \hat{F}$, then~$H$ is a tree in~$G[\hat{F}]$. Since~$v$ was recolored and~$H$ is a tree it must be that~$e(H,\inv{\chi}_C(\cR)) > 1$ but this contradicts that~$(\hat{C},\hat{F})$ is a FVC.
  
  Suppose~$v$ is recolored in \cref{antleralg:repeat}, then we know~$v$ was not marked during \cref{antleralg:loop2} and~$\chi(v) = \cC$, so~$v \not\in \hat{F}$. Suppose that~$v \in \hat{C}$. We derive a contradiction by showing that~$v$ was marked in \cref{antleralg:loop2}.
  Since~$(\hat{C},\hat{F})$ is $z$-properly colored, we know that~$G[\hat{C} \cup \hat{F}] - \inv{\chi}(\cR)$ is a $\hat{C}$-certificate of order~$z$, so if~$H$ is the component of~$G[\hat{C} \cup \hat{F}] - \inv{\chi}(\cR)$ containing~$v$ then~$\fvs(H) = |\hat{C} \cap V(H)| \leq z$. Since~$\hat{C} \cap V(H) \subseteq \hat{C} \subseteq \inv{\chi}_V(\cC)$ we know that in some iteration in \cref{antleralg:loop2} we have~$C = \hat{C} \cap V(H)$.
  To show that~$v$ was marked, we show that~$\fvs(G[C \cup W_\chi(C))] - \inv{\chi}(\cR)) = |C|$. We know~$G[W_\chi(C)] - \inv{\chi}(\cR)$ is a forest since it is a subgraph of~$G[\inv{\chi}_V(\cF)]$ which is a forest by \cref{antleralg:loop1}, so we have that~$\fvs(G[C \cup W_\chi(C)] - \inv{\chi}(\cR)) \leq |C|$.
  To show~$\fvs(G[C \cup W_\chi(C))] - \inv{\chi}(\cR)) \geq |C|$ we show that~$H$ is a subgraph of~$G[C \cup W_\chi(C))] - \inv{\chi}(\cR)$. By \cref{lem:propcolor} we have that each component~$H'$ of~$H-\hat{C}$ is also a component of~$G[\inv{\chi}_V(\cF)] - \inv{\chi}(\cR)$ with~$N_{G-\inv{\chi}(\cR)}(H') \subseteq \hat{C} \cap V(H) = C$. Hence~$V(H-\hat{C}) = V(H-C) \subseteq W_\chi(C)$ so~$H$ is a subgraph of~$G[C \cup W_\chi(C)]$. Since~$H$ is also a subgraph of~$G[\hat{C} \cup \hat{F}] - \inv{\chi}(\cR)$ we conclude that~$H$ is a subgraph of~$G[C \cup W_\chi(C))] - \inv{\chi}(\cR)$ and therefore~$\fvs(G[C \cup W_\chi(C))] - \inv{\chi}(\cR)) \geq \fvs(H) = |C|$.
 \end{claimproof}
 
 \begin{claim}\label{claim:coloredantler}
  In \cref{antleralg:return},~$(\inv{\chi}_V(\cC),\inv{\chi}_V(\cF))$ is a $z$-antler in~$G$.
 \end{claim}
 \begin{claimproof}
  We know~$(\inv{\chi}_V(\cC),\inv{\chi}_V(\cF))$ is a FVC in~$G$ because each connected component of~$G[\inv{\chi}_V(\cF)]$ is a tree and has at most one edge to a vertex not in~$\inv{\chi}_V(\cC)$ by \cref{antleralg:loop1}. It remains to show that~$G[\inv{\chi}_V(\cC) \cup \inv{\chi}_V(\cF)]$ contains a $\inv{\chi}_V(\cC)$-certificate of order~$z$.
  Note that in \cref{antleralg:return} the coloring~$\chi$ is the same as in the last execution of \cref{antleralg:loop2}. Let~$\mathcal{C} \subseteq 2^{\inv{\chi}_V(\cC)}$ be the family of all subsets~$C \subseteq \inv{\chi}_V(\cC)$ that have been considered in \cref{antleralg:loop2} and met the conditions for marking all vertices in~$C$, i.e.,~$\fvs(G[C \cup W_\chi(C)] - \inv{\chi}(\cR)) = |C| \leq z$. Since all vertices in~$\inv{\chi}_V(\cC)$ have been marked during the last execution of \cref{antleralg:loop2} we know~$\bigcup_{C \in \mathcal{C}} C = \inv{\chi}_V(\cC)$.
  
  Let~$C_1,\ldots,C_{|\mathcal{C}|}$ be the sets in~$\mathcal{C}$ in an arbitrary order and define~$D_i := C_i \setminus C_{<i}$ for all~$1 \leq i \leq |\mathcal{C}|$. Observe that~$D_1, \ldots, D_{|\mathcal{C}|}$ is a partition of~$\inv{\chi}_V(\cC)$ with~$|D_i| \leq z$ and~$C_i \subseteq D_{\leq i}$ for all~$1 \leq i \leq |\mathcal{C}|$. Note that~$D_i$ may be empty for some~$i$.
  
  Using these definitions we now show that~$G[\inv{\chi}_V(\cC) \cup \inv{\chi}_V(\cF)]$ contains a $\inv{\chi}_V(\cC)$-certificate of order~$z$. We do this by showing there are~$|\mathcal{C}|$ vertex disjoint subgraphs of~$G[\inv{\chi}_V(\cC) \cup \inv{\chi}_V(\cF)]$, call them~$G_1, \ldots, G_{|\mathcal{C}|}$, such that~$\fvs(G_i) = |D_i| \leq z$ for each~$1 \leq i \leq |\mathcal{C}|$.
  Take~$G_i := G[D_i \cup (W_\chi(D_{\leq i}) \setminus W_\chi(D_{<i}))] - \inv{\chi}(\cR)$ for all~$1 \leq i \leq |\mathcal{C}|$. First we show that for any~$i \neq j$ the multigraphs~$G_i$ and~$G_j$ are vertex disjoint. Clearly~$D_i \cap D_j = \emptyset$. We can assume~$i < j$, so~$D_{\leq i} \subseteq D_{<j}$ and then~$W_\chi(D_{\leq i}) \subseteq W_\chi(D_{<j})$. Observe that 
  \begin{align*}
   (W_\chi(D_{\leq i}) \setminus W_\chi(D_{<i})) \cap (W_\chi(D_{\leq j}) \setminus W_\chi(D_{<j})) = \emptyset,
  \end{align*}
  which follows from the fact that the expression essentially says~$(P \setminus Q) \cap (R \setminus S) = \emptyset$, which holds since we derived above that~$P \subseteq S$.
  
  We complete the proof by showing~$\fvs(G_i) = |D_i|$ for all~$1 \leq i \leq \ell$. Recall that~$D_i = C_i \setminus C_{<i}$. Since~$C_i \in \mathcal{C}$ we know~$C_i$ is an optimal FVS in~$G[C_i \cup W_\chi(C_i)] - \inv{\chi}(\cR)$, so then clearly~$D_i$ is an optimal FVS in~$G[C_i \cup W_\chi(C_i)] - \inv{\chi}(\cR) - C_{<i} = G[D_i \cup W_\chi(C_i)] - \inv{\chi}(\cR)$. We know that~$C_i \subseteq D_{\leq i}$ so then also~$W_\chi(C_i) \subseteq W_\chi(D_{\leq i})$. It follows that~$D_i$ is an optimal FVS in~$G[D_i \cup W_\chi(D_{\leq i})] - \inv{\chi}(\cR)$. In this multigraph, all vertices in~$W_\chi(D_{<i})$ must be in a component that does not contain any vertices from~$D_i$, so this component is a tree and we obtain~$|D_i| = \fvs(G[D_i \cup W_\chi(D_{\leq i})] - \inv{\chi}(\cR)) = \fvs(G[D_i \cup W_\chi(D_{\leq i})] - \inv{\chi}(\cR) - W_\chi(D_{<i})) = \fvs(G[D_i \cup (W_\chi(D_{\leq i}) \setminus W_\chi(D_{<i}))] - \inv{\chi}(\cR)) = \fvs(G_i)$. 
 \end{claimproof}
 
 It can be seen from \cref{claim:remain} that for any $z$-properly colored antler~$(\hat{C},\hat{F})$ we have~$\hat{C} \subseteq \inv{\chi}_V(\cC)$ and~$\hat{F} \subseteq \inv{\chi}_V(\cF)$. \Cref{claim:coloredantler} completes the correctness argument.
\end{proof}

If a multigraph~$G$ contains a reducible single-tree FVC of width at most~$k$ then we can find and apply a reduction rule by \cref{lem:find-fvc} and \cref{lem:fvc-kernel}. If~$G$ does not contain such a FVC, but~$G$ does contain a non-empty $z$-antler~$(C,F)$ of width at most~$k$, then using \cref{col:few-fvcs} we can prove that whether~$(C,F)$ is $z$-properly colored is determined by the color of~$\Oh(k^5 z^2)$ relevant vertices and edges.
Similar to the algorithm from \cref{lem:find-fvc}, we can use two $(n+m,\Oh(k^5 z^2))$-universal sets to create a set of colorings that is guaranteed to contain a coloring that $z$-properly colors~$(C,F)$.
Using \cref{lem:alg} we find a non-empty $z$-antler and apply \cref{op:remove-antler}. We obtain the following:
\begin{lemma}\label{lem:find-op}
 Consider a multigraph~$G$ and integers~$k \geq z \geq 0$. If~$G$ contains a non-empty $z$-antler of width at most~$k$, we can find and apply a reduction rule in~$2^{\Oh(k^5 z^2)} \cdot n^{\Oh(z)}$ time.
\end{lemma}
\begin{proof}
 Consider the following algorithm:
 First we use \cref{lem:find-fvc} to obtain a FVC~$(C_1,F_1)$ in~$2^{\Oh(k^3)}\cdot n^3 \log n$ time. If~$(C_1,F_1)$ is reducible we can find and apply a reduction rule in~$\Oh(n^3)$ time by \cref{lem:fvc-kernel}, so assume~$(C_1,F_1)$ is not reducible; this implies~$G$ does not contain a reducible single-tree FVC of width at most~$k$. Create two $(n+m, k+k^2 + 60k^5z^2)$-universal sets~$\mathcal{U}_1$ and~$\mathcal{U}_2$ for~$V(G) \cup E(G)$ using \cref{thm:universal-set}. Define for each pair~$(Q_1,Q_2) \in \mathcal{U}_1 \times \mathcal{U}_2$ the coloring~$\chi_{Q_1,Q_2}$ of~$G$ that assigns all vertices and edges in~$Q_1$ color~\cC, all vertices and edges in~$Q_2 \setminus Q_1$ color~\cF, and all vertices and edges not in~$Q_1 \cup Q_2$ color~\cR. For each~$(Q_1,Q_2) \in \mathcal{U}_1 \times \mathcal{U}_2$ obtain in~$n^{\Oh(z)}$ time a $z$-antler~$(C_2,F_2)$ by running the algorithm from \cref{lem:alg} on~$G$ and~$\chi_{Q_1,Q_2}$. If~$(C_2, F_2)$ is not empty, apply \cref{op:remove-antler} to remove~$(C_2,F_2)$, otherwise report~$G$ does not contain a $z$-antler of width at most~$k$.
 
 \subparagraph*{Running time}
 By \cref{thm:universal-set}, the sets~$\mathcal{U}_1$ and~$\mathcal{U}_2$ have size~$2^{\Oh(k^5 z^2)} \cdot \log n$ and can be created in~$2^{\Oh(k^5 z^2)} \cdot n \log n$ time. It follows that there are~$|\mathcal{U}_1 \times \mathcal{U}_2| = 2^{\Oh(k^5 z^2)} \cdot \log^2 n$ colorings for which we apply the~$n^{\Oh(z)}$ time algorithm from \cref{lem:alg}. We obtain an overall running time of~$2^{\Oh(k^5 z^2)} \cdot n^{\Oh(z)}$. 
 
 \subparagraph*{Correctness}
 Suppose~$G$ contains a $z$-antler~$(C,F)$ of width at most~$k$. We show that the algorithm finds a reduction rule to apply.
 By \cref{col:few-fvcs} we know that there exists an~$F' \subseteq F$ such that~$(C,F')$ is a $z$-antler where~$G[F']$ has at most~$\frac{|C|}{2}(z^2+4z-1)$ trees. For each tree~$T$ in~$G[F']$ note that~$(C,V(T))$ is a single-tree FVC of width~$|C| \leq k$. If for some tree~$T$ in~$G$ the FVC~$(C,V(T))$ is reducible, then the FVC~$(C_1,F_1)$ obtained by the algorithm via \cref{lem:find-fvc} is guaranteed to be reducible and we find a reduction rule using \cref{lem:fvc-kernel}. So suppose for all trees~$T$ in~$G[F']$ that~$|V(T)| \leq f_r(|C|)$; recall that the condition~$|N_G(F)| \leq |C|+1$ of \cref{def:reducible:fvc} trivially holds for a single-tree FVC. Then~$|F'| \leq \frac{|C|}{2}(z^2+4z-1) \cdot f_r(|C|)$. We show that in this case there exists a pair~$(Q_1,Q_2) \in \mathcal{U}_1 \times \mathcal{U}_2$ such that~$\chi_{Q_1,Q_2}$ $z$-properly colors~$(C,F')$.
 
 Whether a coloring $z$-properly colors~$(C,F')$ is only determined by the colors of~$C \cup F' \cup N_G(F') \cup E(G[C \cup F'])$.
 
 \begin{claim} \label{claim:antler-color-size}
 ~$|C \cup F' \cup N_G(F') \cup E(G[C \cup F'])| \leq k + k^2 + 60 k^5 z^2$.
 \end{claim}
 \begin{claimproof}
  Note that~$|N_G(F') \setminus C| \leq \frac{|C|}{2}(z^2+4z-1)$ since no tree in~$G[F']$ can have more than one neighbor outside~$C$. Additionally we have
  \allowdisplaybreaks
  \begin{align*}
    |E(G[C \cup F'])|
    &\leq |E(G[C])| + |E(G[F'])| + e(C,F') \\
    &\leq |E(G[C])| + |F'| + |C| \cdot |F'|          &\hspace{-5em}\text{since~$G[F']$ is a forest} \\
    &\leq |C|^2 + (|C|+1) \cdot |F'|\\
    &\leq |C|^2 + (|C|+1) \cdot \frac{|C|}{2}(z^2+4z-1) \cdot f_r(|C|)\\
    &\leq k^2 + (k+1) \cdot \frac{k}{2}(z^2+4z-1) \cdot (2k^3 + 4k^2)\\
    &\leq k^2 + \frac{z^2+4z-1}{2} \cdot (k^2+k) \cdot (2k^3 + 4k^2)\\
    &\leq k^2 + \frac{z^2+4z-1}{2} \cdot 2k^2 \cdot 6k^3 &\hspace{-5em}\text{since~$k \leq k^2$}\\
    &\leq k^2 + \frac{5z^2}{2} \cdot 12k^5            &\hspace{-5em}\text{since~$z=0$ or $z \geq 1$}\\
    &\leq k^2 + 30 k^5 z^2,
  \end{align*}
  hence
  \begin{align*}
    |C \cup F' \cup N_G(F') \cup E(G[C \cup F'])|\\
  &\hspace{-10em}= |C| + |F'| + |N_G(F')\setminus C| + |E(G[C \cup F'])|\\
  &\hspace{-10em}\leq |C|
    + \frac{|C|}{2}(z^2+4z-1) \cdot f_r(|C|) 
    + \frac{|C|}{2}(z^2+4z-1)
    + k^2 + 30 k^5 z^2 \\
  &\hspace{-10em}= |C|
    + \frac{|C|}{2}(z^2+4z-1) \cdot (f_r(|C|)+1) 
    + k^2 + 30 k^5 z^2 \\
  &\hspace{-10em}\leq k
    + \frac{k}{2}(5z^2) \cdot (f_r(k)+1) 
    + k^2 + 30 k^5 z^2 \\
  &\hspace{-10em}\leq k
    + \frac{5}{2}(k z^2) \cdot (2k^3 + 4k^2 + 1)
    + k^2 + 30 k^5 z^2 \\
  &\hspace{-10em} \leq k+k^2 + 60k^5z^2,
  \end{align*}
  where the last step follows from an elementary computation.
 \end{claimproof}
 
  By construction of~$\mathcal{U}_1$ and~$\mathcal{U}_2$ there exist sets~$Q_1 \in \mathcal{U}_1$ and~$Q_2 \in \mathcal{U}_2$ such that~$\chi_{Q_1,Q_2}$ $z$-properly colors~$(C,F')$. Hence the algorithm from \cref{lem:alg} returns a non-empty $z$-antler for~$\chi_{Q_1,Q_2}$ and \cref{op:remove-antler} can be executed.
 \end{proof}

Note that applying a reduction rule reduces the number of vertices, removes an edge between a pair of vertices that is not connected by a double-edge, or increases the number of double-edges. Hence by repeatedly using \cref{lem:find-op} to apply a reduction rule we obtain, after at most~$\Oh(n^2)$ iterations, a graph in which no rule applies. By \cref{lem:find-op} this graph does not contain a non-empty $z$-antler of width at most~$k$. In the remainder of this section, we show that this method reduces the solution size at least as much as iteratively removing $z$-antlers of width at most~$k$. We first describe the behavior of such a sequence of antlers.

For integers~$k\geq0$ and~$z\geq0$, we say a sequence of disjoint vertex sets~$C_1, F_1,\ldots,C_\ell, F_\ell$ is a \emph{$z$-antler-sequence} for a multigraph~$G$ if for all~$1 \leq i \leq \ell$ the pair~$(C_i, F_i)$ is a $z$-antler in~$G - (C_{<i} \cup F_{<i})$. The \emph{width} of a $z$-antler-sequence is defined as~$\max_{1 \leq i \leq \ell} |C_i|$.

\begin{proposition} \label{prop:antler-sequence}
 If~$C_1, F_1,\ldots,C_\ell, F_\ell$ is a \emph{$z$-antler-sequence} for some multigraph~$G$, then the pair~$(C_{\leq i}, F_{\leq i})$ is a $z$-antler in~$G$ for any~$1 \leq i \leq \ell$.
\end{proposition}
\begin{proof}
 We use induction on~$i$. Clearly the statement holds for~$i = 1$, so suppose~$i > 1$. By induction~$(C_{<i},F_{<i})$ is a $z$-antler in~$G$, and since~$(C_i,F_i)$ is a $z$-antler in~$G-(C_{<i} \cup F_{<i})$ we have by \cref{lem:antler_combine} that~$(C_{<i} \cup C_i,F_{<i} \cup F_i) = (C_{\leq i},F_{\leq i})$ is a $z$-antler in~$G$.
\end{proof}

The following theorem describes that repeatedly applying \cref{lem:find-op} reduces the solution size at least as much as repeatedly removing $z$-antlers of width at most~$k$. By taking~$t=1$ we obtain \cref{thm:z:antler}.
\begin{theorem}\label{thm:main}
 Given as input a multigraph~$G$ and integers~$k\geq z \geq 0$ we can find in~$2^{\Oh(k^5 z^2)} \cdot n^{\Oh(z)}$ time a vertex set~$S^* \subseteq V(G)$ such that:
 \begin{enumerate}
  \item \label[condition]{thm:item:fvs-safe} there is a minimum FVS in~$G$ containing all vertices of~$S^*$, and
  \item \label[condition]{thm:item:antlers} for any $z$-antler sequence~$C_1, F_1, \ldots, C_t, F_t$ in~$G$ of width at most~$k$,
  we have~$|S^*| \geq |C_{\leq t}|$.
 \end{enumerate}
\end{theorem}
\begin{proof}
 We first describe the algorithm.
 
 \subparagraph*{Algorithm}
 We invoke \cref{lem:find-op} on~$G$ with~$k$ and~$z$. If no applicable rule was found, then return an empty vertex set~$S^* := \emptyset$. Otherwise, consider the multigraph~$G'$ and vertex set~$S$ obtained from the reduction rule. We recursively call our algorithm on~$G'$ with integers~$z$ and~$k$ to obtain a vertex set~$S'$ and return the vertex set~$S^* := S \cup S'$.
 
 \subparagraph*{Running time}
 Since every reduction rule reduces the number of vertices, removes an edge between a pair of vertices that is not connected by a double-edge, or increases the number of double-edges, after applying~$\Oh(n^2)$ reduction rules we obtain a graph where no rules can be applied. Therefore after~$\Oh(n^2)$ recursive calls the algorithm terminates. We obtain a running time of~$2^{\Oh(k^5 z^2)} \cdot n^{\Oh(z)}$.
 
 \subparagraph*{Correctness}
 We prove correctness by induction on the recursion depth, which is shown to be finite by the running time analysis.
 
 First consider the case that no reduction rule was found. Clearly \cref{thm:item:fvs-safe} holds for~$S^* := \emptyset$. To show \cref{thm:item:antlers}, suppose~$C_1, F_1, \ldots, C_t, F_t$ is a $z$-antler-sequence of width at most~$k$ for~$G$. The first non-empty antler in this sequence is a $z$-antler of width at most~$k$ in~$G$. Since no reduction rule was found using \cref{lem:find-op} it follows that~$G$ does not contain a non-empty $z$-antler of width at most~$k$. Hence all antlers in the sequence must be empty and~$|C_{\leq t}| = 0$, so \cref{thm:item:antlers} holds for~$S^* = \emptyset$.
 
 For the other case, suppose~$G'$ and~$S$ are obtained by applying a reduction rule. Since all rules are FVS-safe, we know for any minimum FVS~$S''$ of~$G'$ that~$S^* := S \cup S''$ is a minimum FVS in~$G$. Since~$S'$ is obtained from a recursive call of strictly smaller recursion depth, by induction there is a minimum FVS in~$G'$ containing all vertices of~$S'$. Let~$S'' \supseteq S'$ be such a FVS in~$G'$; then~$S \cup S''$ is a minimum FVS in~$G$. It follows that there is a minimum FVS in~$G$ containing all vertices of~$S \cup S'$, proving \cref{thm:item:fvs-safe}.
 
 To prove \cref{thm:item:antlers}, suppose~$C_1, F_1, \ldots, C_t, F_t$ is a $z$-antler-sequence of width at most~$k$ for~$G$. We first prove the following:
 
 \begin{claim} \label{claim:antler-sequence}
  There exists a $z$-antler-sequence~$C_1', F_1', \ldots, C_t', F_t'$ of width at most~$k$ for~$G'$
  such that:
  \begin{enumerate}
   \item \label[condition]{item:vsets}~$C_{\leq t}' \cup F_{\leq t}' = V(G') \cap (C_{\leq t} \cup F_{\leq t})$, and
   \item \label[condition]{item:cutsize}~$|C_{\leq t}'| = \sum_{1 \leq i \leq t} (|C_i| - |(C_i \cup F_i) \cap S|)$.
  \end{enumerate}
 \end{claim}
 \begin{claimproof}
  We use induction on~$t$. Since~$G'$ and~$S$ are obtained through an antler-safe reduction rule and~$(C_1,F_1)$ is a $z$-antler in~$G$, by \cref{def:safe} we know that~$G'$ contains a $z$-antler~$(C_1',F_1')$ such that~$C_1' \cup F_1' = (C_1 \cup F_1) \cap V(G')$ and~$|C_1'| = |C_1| - |(C_1 \cup F_1) \cap S|$. The claim holds for~$t=1$.
  
  For the induction step, consider~$t > 1$. By applying induction to the prefix consisting of the first $t-1$ pairs of the sequence, there is a $z$-antler sequence~$C_1',F_1', \ldots, C_{t-1}',F_{t-1}'$ of width at most~$k$ for~$G'$ such that both conditions hold.
  \cref{prop:antler-sequence} ensures that~$(C_{\leq t}, F_{\leq t})$ is a $z$-antler in~$G$. Since~$G'$ and~$S$ are obtained through an antler-safe reduction rule from~$G$ there is a $z$-antler~$(C',F')$ in~$G'$ such that~$C' \cup F' = V(G') \cap (C_{\leq t} \cup F_{\leq t})$ and~$|C'| = |C_{\leq t}| - |S \cap (C_{\leq t} \cup F_{\leq t})|$. Take~$C_{t}' := C' \setminus (C_{<t}' \cup F_{<t}')$ and~$F_{t}' := F' \setminus (C_{<t}' \cup F_{<t}')$. By \cref{lem:antler_diff} the pair~$(C_{t}',F_{t}')$ is a $z$-antler in~$G'-(C_{<t}' \cup F_{<t}')$. It follows that~$C_1',F_1',\ldots,C_{t}',F_{t}'$ is a $z$-antler-sequence for~$G'$. We first show \cref{item:vsets}.
  \begin{align*}
   C_{\leq t}' \cup  F_{\leq t}'
   &= C_{t}' \cup F_{t}' \cup C_{<t}' \cup F_{<t}' \\
   &= C' \cup F'
    &&\text{by choice of~$C_{t}'$ and $F_{t}'$}\\
   &= V(G') \cap (C_{\leq t} \cup F_{\leq t})
    &&\text{by choice of~$C'$ and~$F'$.}
  \end{align*}
  To prove \cref{item:cutsize}, and that the $z$-antler-sequence~$C_1',F_1',\ldots,C_{t}',F_{t}'$ has width at most~$k$, we first show~$|C_{t}'| = |C_t| - |(C_t \cup F_t) \cap S|$. Observe that~$(C_{\leq t}',F_{\leq t}')$ is an antler in~$G'$ by \cref{prop:antler-sequence}.
  \allowdisplaybreaks
  \begin{align*}
   |C_{t}'|
   &= |C_{\leq t}'| - |C_{<t}'|
    &&\hspace{-9em}\text{since~$C_i' \cap C_j' = \emptyset$ for all $i \neq j$}\\
   &= \fvs(G'[C_{\leq t}' \cup F_{\leq t}']) - |C_{<t}'|
    &&\hspace{-9em}\text{by the above} \\
   &= \fvs(G'[V(G') \cap (C_{\leq t} \cup F_{\leq t})]) - |C_{<t}'|
    &&\hspace{-9em}\text{by \cref{item:vsets}} \\
   &= \fvs(G'[C' \cup F']) - |C_{<t}'|
    &&\hspace{-9em}\text{by choice of~$C'$ and $F'$}\\
   &= |C'| - |C_{<t}'|
    &&\hspace{-9em}\text{since~$(C',F')$ is an antler in $G'$} \\
   &= |C'| - \sum_{1 \leq i < t} (|C_i| - |(C_i \cup F_i) \cap S|)
    &&\hspace{-9em}\text{by induction}\\
   &= |C_{\leq t}| - | S \cap (C_{\leq t} \cup F_{\leq t})|
    - \sum_{1 \leq i < t} (|C_i| - |(C_i \cup F_i) \cap S|)\\
   &= \sum_{1 \leq i \leq t} (|C_i| - |S \cap (C_i \cup F_i)|)
    - \sum_{1 \leq i    < t} (|C_i| - |(C_i \cup F_i) \cap S|)\\
    &&&\hspace{-14em}\text{since~$C_1,F_1,\ldots,C_t,F_t$ are pairwise disjoint}\\
   &= |C_{t}| - |(C_{t} \cup F_{t}) \cap S|
  \end{align*}
  
  We know the $z$-antler-sequence~$C_1',F_1',\ldots,C'_{t-1},F'_{t-1}$ has width at most~$k$, so to show that this $z$-antler-sequence has width at most~$k$ it suffices to prove that~$|C_{t}'| \leq k$. Indeed~$|C_{t}'| = |C_{t}| - |(C_{t} \cup F_{t}) \cap S| \leq |C_{t}| \leq k$.
  
  To complete the proof of \cref{claim:antler-sequence} we now derive \cref{item:cutsize}:
  \begin{align*}
   |C_{\leq t}'|
   &= |C_{t}'| + |C_{<t}'|
    &\hspace{-3em}\text{since $C_{t}' \cap C_{<t}' = \emptyset$}\\
   &= |C_{t}| - |(C_{t} \cup F_{t}) \cap S| + |C_{<t}'| \\
   &= |C_{t}| - |(C_{t} \cup F_{t}) \cap S| + \sum_{1 \leq i \leq t-1} (|C_i| - |(C_i \cup F_i) \cap S|)
    &\text{by induction}\\
   &= \sum_{1 \leq i \leq t} (|C_i| - |(C_i \cup F_i) \cap S|).
   \tag*{\qedhere}
  \end{align*}

 \end{claimproof}
 
 To complete the proof of \cref{thm:item:antlers} from \cref{thm:main} we show~$|S^*| = |S \cup S'| \geq |C_{\leq t}|$. By \cref{claim:antler-sequence} we know a $z$-antler-sequence~$C_1',F_1',\ldots,C_t',F_t'$ of width at most~$k$ for~$G'$ exists. Since~$S'$ is obtained from a recursive call we have~$|S'| \geq |C_{\leq t}'|$, so then
  \begin{align*}
   |S \cup S'|
   &= |S| + |S'| \\
   &\geq |S| + |C_{\leq t}'| \\
   &= |S| + \sum_{1 \leq i \leq t} ( |C_i| - |(C_i \cup F_i) \cap S| )
    &\text{by \cref{claim:antler-sequence}}\\
   &= |S| + \left (\sum_{1 \leq i \leq t} |C_i| \right)  - \sum_{1 \leq i \leq t} |(C_i \cup F_i) \cap S| \\
   &= |S| + |C_{\leq t}| - |S \cap (C_{\leq t} \cup F_{\leq t})|
    &\hspace{-3em}\text{since~$C_1, F_1, \ldots, C_t, F_t$ are disjoint}\\
   &\geq |S| + |C_{\leq t}| - |S|\\
   &= |C_{\leq t}|.
   \tag*{\qedhere}
  \end{align*}
\end{proof}

As a corollary to this theorem, we obtain a new type of parameterized-tractability result for \scFVS. For an integer~$z$, let the \emph{$z$-antler complexity} of~$G$ be the minimum number~$k$ for which there exists a (potentially long) sequence~$C_1, F_1, \ldots, C_t, F_t$ of disjoint vertex sets such that for all~$1 \leq i \leq t$, the pair~$(C_{i}, F_{i})$ is a $z$-antler of width at most~$k$ in~$G - (C_{<i} \cup F_{<i})$, and such that~$G - (C_{\leq t} \cup F_{\leq t})$ is acyclic (implying that~$C_{\leq t}$ is a feedback vertex set in~$G$). If no such sequence exists, the $z$-antler complexity of~$G$ is~$+\infty$.

Intuitively, \cref{cor:main} states that optimal solutions can be found efficiently when they are composed out of small pieces, each of which has a low-complexity certificate for belonging to some optimal solution.

\begin{corollary}\label{cor:main}
There is an algorithm that, given a multigraph~$G$, returns an optimal feedback vertex set in time~$f(k^*) \cdot n^{\Oh(z^*)}$, where~$(k^*,z^*)$ is any pair of integers such that the $z^*$-antler complexity of~$G$ is at most~$k^*$.
\end{corollary}
\begin{proof}
Let~$(k^*,z^*)$ be such that the $z^*$-antler complexity of~$G$ is at most~$k^*$. Let~$p_1 \in \Oh(k^5 z^2), p_2 \in \Oh(z)$ be concrete functions such that the running time of \cref{thm:main} is bounded by~$2^{p_1(k,z)} \cdot n^{p_2(z)}$. Consider the pairs~$\{(k',z') \in \setN^2 \mid 1 \leq z' \leq k' \leq n\}$ in order of increasing value of the running-time guarantee~$2^{p_1(k,z)} \cdot n^{p_2(z)}$. For each such pair~$(k',z')$, start from the multigraph~$G$ and invoke \cref{thm:main} to obtain a vertex set~$S$ which is guaranteed to be contained in an optimal solution. If~$G - S$ is acyclic, then~$S$ itself is an optimal solution and we return~$S$. Otherwise we proceed to the next pair~$(k',z')$.

\subparagraph*{Correctness}
The correctness of \cref{thm:main} and the definition of $z$-antler complexity ensure that for~$(k',z') = (k^*,z^*)$, the set~$S$ is an optimal solution. In particular, if~$C_1, F_1, \ldots, C_t, F_t$ is a sequence of vertex sets witnessing that the $z^*$-antler complexity of~$G$ is at most~$k^*$, then \cref{thm:main} is guaranteed by \cref{thm:item:antlers} to output a set~$S$ of size at least~$\sum _{1 \leq i \leq t} |C_i|$, which is equal to the size of an optimal solution on~$G$ by definition. 

\subparagraph*{Running time}
For a fixed choice of~$(k',z')$ the algorithm from \cref{thm:main} runs in time~$2^{\Oh((k')^5 (z')^2)} \cdot n^{\Oh(z')} \leq 2^{\Oh((k^*)^5 (z^*)^2)} \cdot n^{\Oh(z^*)}$ because we try pairs~$(k',z')$ in order of increasing running time. As we try at most~$n^2$ pairs before finding the solution, the corollary follows.
\end{proof}

To conclude, we reflect on the running time of \cref{cor:main} compared to running times of the form~$2^{\Oh(\fvs(G))} \cdot n^{\Oh(1)}$ obtained by FPT algorithms for the parameterization by solution size. If we exhaustively apply \cref{lem:fvc-kernel} with the FVC~$(C,V(G)\setminus C)$, where~$C$ is obtained from a 2-approximation algorithm~\cite{BafnaBF99}, then this gives an \emph{antler-safe} kernelization: it reduces the graph as long as the graph is larger than~$f_r(|C|)$. This opening step reduces the instance size to~$\Oh(\fvs(G)^3)$ without increasing the antler complexity. As observed before, after applying~$\Oh(n^2)$ reduction rules we obtain a graph in which no rules can be applied. This leads to a running time of~$\Oh(n^5)$ of the kernelization. Running \cref{thm:main} to solve the reduced instance yields a total running time of~$2^{\Oh(k^5 z^2)} \fvs(G)^{\Oh(z)} + \Oh(n^5)$. 
This is asymptotically faster than~$2^{\Oh(\fvs(G))}$ when~$z \leq k \in o(\sqrt[7]{\fvs(G)})$ and~$\fvs(G) \in \omega(\log n)$, which captures the intuitive idea sketched above that our algorithmic approach has an advantage when there is an optimal solution that is large but composed of small pieces for which there are low-complexity certificates. The resulting algorithm also behaves fundamentally differently than the known bounded-depth branching algorithms for \textsc{Feedback Vertex Set}, which are forced to explore a deep branching tree for large solutions.

\section{Conclusion} \label{sec:conclusion}
We have taken the first steps into a new direction for preprocessing which aims to investigate how and when a preprocessing phase can guarantee to identify parts of an optimal solution to an \NPhard problem, thereby reducing the running time of the follow-up algorithm. Aside from the technical results concerning antler structures for \scFVS and their algorithmic properties, we consider the conceptual message of this research direction an important contribution of our theoretical work on understanding the power of preprocessing and the structure of solutions to \NPhard problems.

This line of investigation opens up a host of opportunities for future research. For combinatorial problems such as \scVC, \scOCT, and \scDFVS, which kinds of substructures in inputs allow parts of an optimal solution to be identified by an efficient preprocessing phase? Is it possible to give preprocessing guarantees not in terms of the size of an optimal solution, but in terms of measures of the stability~\cite{AngelidakisABCD19,AwasthiBS12,DanielyLS12} of optimal solutions under small perturbations? A recent work shows that preprocessing guarantees in terms of which vertices are \emph{essential} to made a constant-factor approximation is often possible~\cite{BumpusJK22}. 

Some questions also remain open concerning the concrete technical results in this paper. Can the running time of \cref{thm:z:antler} be improved to~$f(k) \cdot n^{\Oh(1)}$? We conjecture that it cannot, but have not been able to prove this.
A related question applies to \scVC: Is there an algorithm running in time~$f(k) \cdot n^{\Oh(1)}$ that, given a graph~$G$ which has disjoint vertex sets~$(C,H)$ such that~$N_G(C) \subseteq H$ and~$H$ of size~$k$ is an optimal vertex cover in~$G[C \cup H]$, outputs a set of size at least~$k$ that is part of an optimal vertex cover in~$G$? (Note that this is an easier target than computing such a decomposition of width~$k$ if one exists, which can be shown to be \Wonehard.)

\begin{figure}
 \centering
 \includegraphics{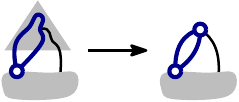}
 \caption{Standard reduction rules for \scFVS reduce any 1-antler of width~$1$ to a pair of vertices with two edges between them, one of which has degree~$3$. Hence we can reduce the graph until all 1-antlers of width~$1$ are removed with the addition of the following reduction rule: If vertices~$u$ and~$v$ are connected by a double edge and~$\degree(v) = 3$ then remove~$u$ and~$v$ from the graph and decrease the solution size by one. These reduction rules can be exhaustively applied in linear time.}
 \label{antler:fig:1-antler-rule}
\end{figure}

To apply the theoretical ideas on antlers in the practical settings that motivated their investigation, it would be interesting to determine which types of antler can be found in \emph{linear} time. A slight extension of the standard reduction rules~\cite[FVS.1--FVS.5]{CyganFKLMPPS15} for \scFVS can be used to detect 1-antlers of width~$1$ in linear time (see \cref{antler:fig:1-antler-rule}). Can the running time of \cref{thm:1:antler} be improved to~$f(k) \cdot (n+m)$? It would also be interesting to investigate which types of antlers are present in practical inputs.

\bibliography{papers}

\end{document}